\DeclareMathOperator{\diag}{diag}
\theoremstyle{plain}
\newtheorem{theorem}{Theorem}
\newtheorem{lemma}[theorem]{Lemma}
\newtheorem{proposition}[theorem]{Proposition}
\newtheorem{corollary}[theorem]{Corollary}
\newtheorem{conjecture}[theorem]{Conjecture}
\newtheorem{axiom}[theorem]{Axiom}
\theoremstyle{definition}
\newtheorem{definition}{Definition}
\newtheorem{example}{Example}
\newtheorem{exercise}{Exercise}
\theoremstyle{remark}
\newtheorem*{remark}{Remark}
\let\pdfoutput=\undefined\fi
\chardef\@x10\chardef\@xv60
\def\tcitime{
\def\@time{%
  \@minute\time\@hour\@minute\divide\@hour\@xv
  \ifnum\@hour<\@x 0\fi\the\@hour:%
  \multiply\@hour\@xv\advance\@minute-\@hour
  \ifnum\@minute<\@x 0\fi\the\@minute
  }}%
\def\x@hyperref#1#2#3{%
   % Turn off various catcodes before reading parameter 4
   \catcode`\~ = 12
   \catcode`\$ = 12
   \catcode`\_ = 12
   \catcode`\# = 12
   \catcode`\& = 12
   \catcode`\% = 12
   \y@hyperref{#1}{#2}{#3}%
}
\def\y@hyperref#1#2#3#4{%
   #2\ref{#4}#3
   \catcode`\~ = 13
   \catcode`\$ = 3
   \catcode`\_ = 8
   \catcode`\# = 6
   \catcode`\& = 4
   \catcode`\% = 14
}
\def\QCTOpt[#1]#2{%
  \def\QCTOptB{#1}
  \def\QCTOptA{#2}
}
\def\QCTNOpt#1{%
  \def\QCTOptA{#1}
  \let\QCTOptB\empty
}
\def\Qct{%
  \@ifnextchar[{%
    \QCTOpt}{\QCTNOpt}
}
\def\QCBOpt[#1]#2{%
  \def\QCBOptB{#1}%
  \def\QCBOptA{#2}%
}
\def\QCBNOpt#1{%
  \def\QCBOptA{#1}%
  \let\QCBOptB\empty
}
\def\Qcb{%
  \@ifnextchar[{%
    \QCBOpt}{\QCBNOpt}%
}
\def\PrepCapArgs{%
  \ifx\QCBOptA\empty
    \ifx\QCTOptA\empty
      {}%
    \else
      \ifx\QCTOptB\empty
        {\QCTOptA}%
      \else
        [\QCTOptB]{\QCTOptA}%
      \fi
    \fi
  \else
    \ifx\QCBOptA\empty
      {}%
    \else
      \ifx\QCBOptB\empty
        {\QCBOptA}%
      \else
        [\QCBOptB]{\QCBOptA}%
      \fi
    \fi
  \fi
}
\def\GRAPHICSPS#1{%
 \ifcase\GRAPHICSTYPE%\GRAPHICSTYPE=0
   \special{ps: #1}%
 \or%\GRAPHICSTYPE=1
   \special{language "PS", include "#1"}%
%%%\or%\GRAPHICSTYPE=2
%%%  #1%
 \fi
}%
\def\graffile#1#2#3#4{%
%%% \ifnum\GRAPHICSTYPE=\tw@
%%%  %Following if using psfig
%%%  \@ifundefined{psfig}{\input psfig.tex}{}%
%%%  \psfig{file=#1, height=#3, width=#2}%
%%% \else
  %Following for all others
  % JCS - added BOXTHEFRAME, see below
    \bgroup
	   \@inlabelfalse
       \leavevmode
       \@ifundefined{bbl@deactivate}{\def~{\string~}}{\activesoff}%
        \raise -#4 \BOXTHEFRAME{%
           \hbox to #2{\raise #3\hbox to #2{\null #1\hfil}}}%
    \egroup
}%
\def\draftbox#1#2#3#4{%
 \leavevmode\raise -#4 \hbox{%
  \frame{\rlap{\protect\tiny #1}\hbox to #2%
   {\vrule height#3 width\z@ depth\z@\hfil}%
  }%
 }%
}%
\let\nographics=\@msidraft
\newif\ifwasdraft
\def\GRAPHIC#1#2#3#4#5{%
   \ifnum\@msidraft=\@ne\draftbox{#2}{#3}{#4}{#5}%
   \else\graffile{#1}{#3}{#4}{#5}%
   \fi
}
\def\addtoLaTeXparams#1{%
    \edef\LaTeXparams{\LaTeXparams #1}}%
\newif\ifBoxFrame \BoxFramefalse
\newif\ifOverFrame \OverFramefalse
\newif\ifUnderFrame \UnderFramefalse
\def\BOXTHEFRAME#1{%
   \hbox{%
      \ifBoxFrame
         \frame{#1}%
      \else
         {#1}%
      \fi
   }%
}
\def\doFRAMEparams#1{\BoxFramefalse\OverFramefalse\UnderFramefalse\readFRAMEparams#1\end}%
\def\readFRAMEparams#1{%
 \ifx#1\end%
  \let\next=\relax
  \else
  \ifx#1i\dispkind=\z@\fi
  \ifx#1d\dispkind=\@ne\fi
  \ifx#1f\dispkind=\tw@\fi
  \ifx#1t\addtoLaTeXparams{t}\fi
  \ifx#1b\addtoLaTeXparams{b}\fi
  \ifx#1p\addtoLaTeXparams{p}\fi
  \ifx#1h\addtoLaTeXparams{h}\fi
  \ifx#1X\BoxFrametrue\fi
  \ifx#1O\OverFrametrue\fi
  \ifx#1U\UnderFrametrue\fi
  \ifx#1w
    \ifnum\@msidraft=1\wasdrafttrue\else\wasdraftfalse\fi
    \@msidraft=\@ne
  \fi
  \let\next=\readFRAMEparams
  \fi
 \next
 }%
\def\IFRAME#1#2#3#4#5#6{%
      \bgroup
      \let\QCTOptA\empty
      \let\QCTOptB\empty
      \let\QCBOptA\empty
      \let\QCBOptB\empty
      #6%
      \parindent=0pt
      \leftskip=0pt
      \rightskip=0pt
      \setbox0=\hbox{\QCBOptA}%
      \@tempdima=#1\relax
      \ifOverFrame
          % Do this later
          \typeout{This is not implemented yet}%
          \show\HELP
      \else
         \ifdim\wd0>\@tempdima
            \advance\@tempdima by \@tempdima
            \ifdim\wd0 >\@tempdima
               \setbox1 =\vbox{%
                  \unskip\hbox to \@tempdima{\hfill\GRAPHIC{#5}{#4}{#1}{#2}{#3}\hfill}%
                  \unskip\hbox to \@tempdima{\parbox[b]{\@tempdima}{\QCBOptA}}%
               }%
               \wd1=\@tempdima
            \else
               \textwidth=\wd0
               \setbox1 =\vbox{%
                 \noindent\hbox to \wd0{\hfill\GRAPHIC{#5}{#4}{#1}{#2}{#3}\hfill}\\%
                 \noindent\hbox{\QCBOptA}%
               }%
               \wd1=\wd0
            \fi
         \else
            \ifdim\wd0>0pt
              \hsize=\@tempdima
              \setbox1=\vbox{%
                \unskip\GRAPHIC{#5}{#4}{#1}{#2}{0pt}%
                \break
                \unskip\hbox to \@tempdima{\hfill \QCBOptA\hfill}%
              }%
              \wd1=\@tempdima
           \else
              \hsize=\@tempdima
              \setbox1=\vbox{%
                \unskip\GRAPHIC{#5}{#4}{#1}{#2}{0pt}%
              }%
              \wd1=\@tempdima
           \fi
         \fi
         \@tempdimb=\ht1
         %\advance\@tempdimb by \dp1
         \advance\@tempdimb by -#2
         \advance\@tempdimb by #3
         \leavevmode
         \raise -\@tempdimb \hbox{\box1}%
      \fi
      \egroup%
}%
\def\DFRAME#1#2#3#4#5{%
  \vspace\topsep
  \hfil\break
  \bgroup
     \leftskip\@flushglue
	 \rightskip\@flushglue
	 \parindent\z@
	 \parfillskip\z@skip
     \let\QCTOptA\empty
     \let\QCTOptB\empty
     \let\QCBOptA\empty
     \let\QCBOptB\empty
	 \vbox\bgroup
        \ifOverFrame 
           #5\QCTOptA\par
        \fi
        \GRAPHIC{#4}{#3}{#1}{#2}{\z@}%
        \ifUnderFrame 
           \break#5\QCBOptA
        \fi
	 \egroup
  \egroup
  \vspace\topsep
  \break
}%
\def\FFRAME#1#2#3#4#5#6#7{%
 %If float.sty loaded and float option is 'h', change to 'H'  (gp) 1998/09/05
  \@ifundefined{floatstyle}
    {%floatstyle undefined (and float.sty not present), no change
     \begin{figure}[#1]%
    }
    {%floatstyle DEFINED
	 \ifx#1h%Only the h parameter, change to H
      \begin{figure}[H]%
	 \else
      \begin{figure}[#1]%
	 \fi
	}
  \let\QCTOptA\empty
  \let\QCTOptB\empty
  \let\QCBOptA\empty
  \let\QCBOptB\empty
  \ifOverFrame
    #4
    \ifx\QCTOptA\empty
    \else
      \ifx\QCTOptB\empty
        \caption{\QCTOptA}%
      \else
        \caption[\QCTOptB]{\QCTOptA}%
      \fi
    \fi
    \ifUnderFrame\else
      \label{#5}%
    \fi
  \else
    \UnderFrametrue%
  \fi
  \begin{center}\GRAPHIC{#7}{#6}{#2}{#3}{\z@}\end{center}%
  \ifUnderFrame
    #4
    \ifx\QCBOptA\empty
      \caption{}%
    \else
      \ifx\QCBOptB\empty
        \caption{\QCBOptA}%
      \else
        \caption[\QCBOptB]{\QCBOptA}%
      \fi
    \fi
    \label{#5}%
  \fi
  \end{figure}%
 }%
\def\makeactives{
  \catcode`\"=\active
  \catcode`\;=\active
  \catcode`\:=\active
  \catcode`\'=\active
  \catcode`\~=\active
}
   \gdef\activesoff{%
      \def"{\string"}%
      \def;{\string;}%
      \def:{\string:}%
      \def'{\string'}%
      \def~{\string~}%
      %\bbl@deactivate{"}%
      %\bbl@deactivate{;}%
      %\bbl@deactivate{:}%
      %\bbl@deactivate{'}%
    }
\def\FRAME#1#2#3#4#5#6#7#8{%
 \bgroup
 \ifnum\@msidraft=\@ne
   \wasdrafttrue
 \else
   \wasdraftfalse%
 \fi
 \def\LaTeXparams{}%
 \dispkind=\z@
 \def\LaTeXparams{}%
 \doFRAMEparams{#1}%
 \ifnum\dispkind=\z@\IFRAME{#2}{#3}{#4}{#7}{#8}{#5}\else
  \ifnum\dispkind=\@ne\DFRAME{#2}{#3}{#7}{#8}{#5}\else
   \ifnum\dispkind=\tw@
    \edef\@tempa{\noexpand\FFRAME{\LaTeXparams}}%
    \@tempa{#2}{#3}{#5}{#6}{#7}{#8}%
    \fi
   \fi
  \fi
  \ifwasdraft\@msidraft=1\else\@msidraft=0\fi{}%
  \egroup
 }%
\def\TEXUX#1{"texux"}
\long\def\QQQ#1#2{%
     \long\expandafter\def\csname#1\endcsname{#2}}%
\long\def\QQA#1#2{}%
\def\QTR#1#2{{\csname#1\endcsname {#2}}}%
\def\EXPAND#1[#2]#3{}%
\def\NOEXPAND#1[#2]#3{}%
\def\LaTeXparent#1{}%
\def\ChildStyles#1{}%
\def\ChildDefaults#1{}%
\def\QTagDef#1#2#3{}%
  \providecommand{\UNICODE}[2][]{\protect\rule{.1in}{.1in}}
  \providecommand{\U}[1]{\protect\rule{.1in}{.1in}}
\def\QQfnmark#1{\footnotemark}
 \def\abstract{%
  \if@twocolumn
   \section*{Abstract (Not appropriate in this style!)}%
   \else \small 
   \begin{center}{\bf Abstract\vspace{-.5em}\vspace{\z@}}\end{center}%
   \quotation 
   \fi
  }%
   \def\registered{\relax\ifmmode{}\r@gistered
                    \else$\m@th\r@gistered$\fi}%
 \def\r@gistered{^{\ooalign
  {\hfil\raise.07ex\hbox{$\scriptstyle\rm\text{R}$}\hfil\crcr
  \mathhexbox20D}}}}{}%
\newdimen\theight
\def\newfmtname{LaTeX2e}
  \DeclareOldFontCommand{\rm}{\normalfont\rmfamily}{\mathrm}
  \DeclareOldFontCommand{\sf}{\normalfont\sffamily}{\mathsf}
  \DeclareOldFontCommand{\tt}{\normalfont\ttfamily}{\mathtt}
  \DeclareOldFontCommand{\bf}{\normalfont\bfseries}{\mathbf}
  \DeclareOldFontCommand{\it}{\normalfont\itshape}{\mathit}
  \DeclareOldFontCommand{\sl}{\normalfont\slshape}{\@nomath\sl}
  \DeclareOldFontCommand{\sc}{\normalfont\scshape}{\@nomath\sc}
\def\alpha{{\Greekmath 010B}}%
\def\beta{{\Greekmath 010C}}%
\def\gamma{{\Greekmath 010D}}%
\def\delta{{\Greekmath 010E}}%
\def\epsilon{{\Greekmath 010F}}%
\def\zeta{{\Greekmath 0110}}%
\def\eta{{\Greekmath 0111}}%
\def\theta{{\Greekmath 0112}}%
\def\iota{{\Greekmath 0113}}%
\def\kappa{{\Greekmath 0114}}%
\def\lambda{{\Greekmath 0115}}%
\def\mu{{\Greekmath 0116}}%
\def\nu{{\Greekmath 0117}}%
\def\xi{{\Greekmath 0118}}%
\def\pi{{\Greekmath 0119}}%
\def\rho{{\Greekmath 011A}}%
\def\sigma{{\Greekmath 011B}}%
\def\tau{{\Greekmath 011C}}%
\def\upsilon{{\Greekmath 011D}}%
\def\phi{{\Greekmath 011E}}%
\def\chi{{\Greekmath 011F}}%
\def\psi{{\Greekmath 0120}}%
\def\omega{{\Greekmath 0121}}%
\def\varepsilon{{\Greekmath 0122}}%
\def\vartheta{{\Greekmath 0123}}%
\def\varpi{{\Greekmath 0124}}%
\def\varrho{{\Greekmath 0125}}%
\def\varsigma{{\Greekmath 0126}}%
\def\varphi{{\Greekmath 0127}}%
\def\nabla{{\Greekmath 0272}}
\def\FindBoldGroup{%
   {\setbox0=\hbox{$\mathbf{x\global\edef\theboldgroup{\the\mathgroup}}$}}%
}
\def\Greekmath#1#2#3#4{%
    \if@compatibility
        \ifnum\mathgroup=\symbold
           \mathchoice{\mbox{\boldmath$\displaystyle\mathchar"#1#2#3#4$}}%
                      {\mbox{\boldmath$\textstyle\mathchar"#1#2#3#4$}}%
                      {\mbox{\boldmath$\scriptstyle\mathchar"#1#2#3#4$}}%
                      {\mbox{\boldmath$\scriptscriptstyle\mathchar"#1#2#3#4$}}%
        \else
           \mathchar"#1#2#3#4% 
        \fi 
    \else 
        \FindBoldGroup
        \ifnum\mathgroup=\theboldgroup % For 2e
           \mathchoice{\mbox{\boldmath$\displaystyle\mathchar"#1#2#3#4$}}%
                      {\mbox{\boldmath$\textstyle\mathchar"#1#2#3#4$}}%
                      {\mbox{\boldmath$\scriptstyle\mathchar"#1#2#3#4$}}%
                      {\mbox{\boldmath$\scriptscriptstyle\mathchar"#1#2#3#4$}}%
        \else
           \mathchar"#1#2#3#4% 
        \fi     	    
	  \fi}
\newif\ifGreekBold  \GreekBoldfalse
\let\SAVEPBF=\pbf
\def\pbf{\GreekBoldtrue\SAVEPBF}%
  \newcounter{equationnumber}  
  \def\mathletters{%
     \addtocounter{equation}{1}
     \edef\@currentlabel{\theequation}%
     \setcounter{equationnumber}{\c@equation}
     \setcounter{equation}{0}%
     \edef\theequation{\@currentlabel\noexpand\alph{equation}}%
  }
    \def\BibTeX{{\rm B\kern-.05em{\sc i\kern-.025em b}\kern-.08em
                 T\kern-.1667em\lower.7ex\hbox{E}\kern-.125emX}}}{}%
\def\AmS{{\protect\usefont{OMS}{cmsy}{m}{n}%
                A\kern-.1667em\lower.5ex\hbox{M}\kern-.125emS}}}{}%
\def\@@eqncr{\let\@tempa\relax
    \ifcase\@eqcnt \def\@tempa{& & &}\or \def\@tempa{& &}%
      \else \def\@tempa{&}\fi
     \@tempa
     \if@eqnsw
        \iftag@
           \@taggnum
        \else
           \@eqnnum\stepcounter{equation}%
        \fi
     \fi
     \global\tag@false
     \global\@eqnswtrue
     \global\@eqcnt\z@\cr}
\def\TCItag{\@ifnextchar*{\@TCItagstar}{\@TCItag}}
\def\@TCItag#1{%
    \global\tag@true
    \global\def\@taggnum{(#1)}%
    \global\def\@currentlabel{#1}}
\def\@TCItagstar*#1{%
    \global\tag@true
    \global\def\@taggnum{#1}%
    \global\def\@currentlabel{#1}}
\def\tint{\msi@int\textstyle\int}%
\def\tiint{\msi@int\textstyle\iint}%
\def\tiiint{\msi@int\textstyle\iiint}%
\def\tiiiint{\msi@int\textstyle\iiiint}%
\def\tidotsint{\msi@int\textstyle\idotsint}%
\def\toint{\msi@int\textstyle\oint}%
\def\tsum{\mathop{\textstyle \sum }}%
\def\tprod{\mathop{\textstyle \prod }}%
\def\tbiguplus{\mathop{\textstyle \biguplus }}%
\newtoks\temptoksa
\newtoks\temptoksb
\newtoks\temptoksc
\def\msi@int#1#2{%
 \def\@temp{{#1#2\the\temptoksc_{\the\temptoksa}^{\the\temptoksb}}}%   
 \futurelet\@nextcs
 \@int
}
\def\@int{%
   \ifx\@nextcs\limits
      \typeout{Found limits}%
      \temptoksc={\limits}%
	  \let\@next\@intgobble%
   \else\ifx\@nextcs\nolimits
      \typeout{Found nolimits}%
      \temptoksc={\nolimits}%
	  \let\@next\@intgobble%
   \else
      \typeout{Did not find limits or no limits}%
      \temptoksc={}%
      \let\@next\msi@limits%
   \fi\fi
   \@next   
}%
\def\@intgobble#1{%
   \typeout{arg is #1}%
   \msi@limits
}
\def\msi@limits{%
   \temptoksa={}%
   \temptoksb={}%
   \@ifnextchar_{\@limitsa}{\@limitsb}%
}
\def\@limitsa_#1{%
   \temptoksa={#1}%
   \@ifnextchar^{\@limitsc}{\@temp}%
}
\def\@limitsb{%
   \@ifnextchar^{\@limitsc}{\@temp}%
}
\def\@limitsc^#1{%
   \temptoksb={#1}%
   \@ifnextchar_{\@limitsd}{\@temp}%   
}
\def\@limitsd_#1{%
   \temptoksa={#1}%
   \@temp
}
\def\dint{\msi@int\displaystyle\int}%
\def\diint{\msi@int\displaystyle\iint}%
\def\diiint{\msi@int\displaystyle\iiint}%
\def\diiiint{\msi@int\displaystyle\iiiint}%
\def\didotsint{\msi@int\displaystyle\idotsint}%
\def\doint{\msi@int\displaystyle\oint}%
\def\ExitTCILatex{\makeatother }
\if@compatibility\message{amsmath already loaded}\fi\aftergroup\ExitTCILatex}
\if@compatibility\message{amstex already loaded}\fi\aftergroup\ExitTCILatex}
\if@compatibility\message{amsgen already loaded}\fi\aftergroup\ExitTCILatex}
\let\DOTSI\relax
\def\RIfM@{\relax\ifmmode}%
\def\FN@{\futurelet\next}%
\def\iint{\DOTSI\intno@\tw@\FN@\ints@}%
\def\iiint{\DOTSI\intno@\thr@@\FN@\ints@}%
\def\iiiint{\DOTSI\intno@4 \FN@\ints@}%
\def\idotsint{\DOTSI\intno@\z@\FN@\ints@}%
\def\ints@{\findlimits@\ints@@}%
\newif\iflimtoken@
\newif\iflimits@
\def\findlimits@{\limtoken@true\ifx\next\limits\limits@true
 \else\ifx\next\nolimits\limits@false\else
 \limtoken@false\ifx\ilimits@\nolimits\limits@false\else
 \ifinner\limits@false\else\limits@true\fi\fi\fi\fi}%
\def\multint@{\int\ifnum\intno@=\z@\intdots@                          %1
 \else\intkern@\fi                                                    %2
 \ifnum\intno@>\tw@\int\intkern@\fi                                   %3
 \ifnum\intno@>\thr@@\int\intkern@\fi                                 %4
 \int}%                                                               %5
\def\multintlimits@{\intop\ifnum\intno@=\z@\intdots@\else\intkern@\fi
 \ifnum\intno@>\tw@\intop\intkern@\fi
 \ifnum\intno@>\thr@@\intop\intkern@\fi\intop}%
\def\intic@{%
    \mathchoice{\hskip.5em}{\hskip.4em}{\hskip.4em}{\hskip.4em}}%
\def\negintic@{\mathchoice
 {\hskip-.5em}{\hskip-.4em}{\hskip-.4em}{\hskip-.4em}}%
\def\ints@@{\iflimtoken@                                              %1
 \def\ints@@@{\iflimits@\negintic@
   \mathop{\intic@\multintlimits@}\limits                             %2
  \else\multint@\nolimits\fi                                          %3
  \eat@}%                                                             %4
 \else                                                                %5
 \def\ints@@@{\iflimits@\negintic@
  \mathop{\intic@\multintlimits@}\limits\else
  \multint@\nolimits\fi}\fi\ints@@@}%
\def\intkern@{\mathchoice{\!\!\!}{\!\!}{\!\!}{\!\!}}%
\def\plaincdots@{\mathinner{\cdotp\cdotp\cdotp}}%
\def\intdots@{\mathchoice{\plaincdots@}%
 {{\cdotp}\mkern1.5mu{\cdotp}\mkern1.5mu{\cdotp}}%
 {{\cdotp}\mkern1mu{\cdotp}\mkern1mu{\cdotp}}%
 {{\cdotp}\mkern1mu{\cdotp}\mkern1mu{\cdotp}}}%
\def\RIfM@{\relax\protect\ifmmode}
\def\text{\RIfM@\expandafter\text@\else\expandafter\mbox\fi}
\let\nfss@text\text
\def\text@#1{\mathchoice
   {\textdef@\displaystyle\f@size{#1}}%
   {\textdef@\textstyle\tf@size{\firstchoice@false #1}}%
   {\textdef@\textstyle\sf@size{\firstchoice@false #1}}%
   {\textdef@\textstyle \ssf@size{\firstchoice@false #1}}%
   \glb@settings}
\def\textdef@#1#2#3{\hbox{{%
                    \everymath{#1}%
                    \let\f@size#2\selectfont
                    #3}}}
\newif\iffirstchoice@
\def\Let@{\relax\iffalse{\fi\let\\=\cr\iffalse}\fi}%
\def\vspace@{\def\vspace##1{\crcr\noalign{\vskip##1\relax}}}%
\def\multilimits@{\bgroup\vspace@\Let@
 \baselineskip\fontdimen10 \scriptfont\tw@
 \advance\baselineskip\fontdimen12 \scriptfont\tw@
 \lineskip\thr@@\fontdimen8 \scriptfont\thr@@
 \lineskiplimit\lineskip
 \vbox\bgroup\ialign\bgroup\hfil$\m@th\scriptstyle{##}$\hfil\crcr}%
\def\Sb{_\multilimits@}%
\def\endSb{\crcr\egroup\egroup\egroup}%
\def\Sp{^\multilimits@}%
\newdimen\ex@
\def\rightarrowfill@#1{$#1\m@th\mathord-\mkern-6mu\cleaders
 \hbox{$#1\mkern-2mu\mathord-\mkern-2mu$}\hfill
 \mkern-6mu\mathord\rightarrow$}%
\def\leftarrowfill@#1{$#1\m@th\mathord\leftarrow\mkern-6mu\cleaders
 \hbox{$#1\mkern-2mu\mathord-\mkern-2mu$}\hfill\mkern-6mu\mathord-$}%
\def\leftrightarrowfill@#1{$#1\m@th\mathord\leftarrow
\mkern-6mu\cleaders
 \hbox{$#1\mkern-2mu\mathord-\mkern-2mu$}\hfill
 \mkern-6mu\mathord\rightarrow$}%
\def\overrightarrow{\mathpalette\overrightarrow@}%
\def\overrightarrow@#1#2{\vbox{\ialign{##\crcr\rightarrowfill@#1\crcr
 \noalign{\kern-\ex@\nointerlineskip}$\m@th\hfil#1#2\hfil$\crcr}}}%
\def\overleftarrow{\mathpalette\overleftarrow@}%
\def\overleftarrow@#1#2{\vbox{\ialign{##\crcr\leftarrowfill@#1\crcr
 \noalign{\kern-\ex@\nointerlineskip}$\m@th\hfil#1#2\hfil$\crcr}}}%
\def\overleftrightarrow{\mathpalette\overleftrightarrow@}%
\def\overleftrightarrow@#1#2{\vbox{\ialign{##\crcr
   \leftrightarrowfill@#1\crcr
 \noalign{\kern-\ex@\nointerlineskip}$\m@th\hfil#1#2\hfil$\crcr}}}%
\def\underrightarrow{\mathpalette\underrightarrow@}%
\def\underrightarrow@#1#2{\vtop{\ialign{##\crcr$\m@th\hfil#1#2\hfil
  $\crcr\noalign{\nointerlineskip}\rightarrowfill@#1\crcr}}}%
\def\underleftarrow{\mathpalette\underleftarrow@}%
\def\underleftarrow@#1#2{\vtop{\ialign{##\crcr$\m@th\hfil#1#2\hfil
  $\crcr\noalign{\nointerlineskip}\leftarrowfill@#1\crcr}}}%
\def\underleftrightarrow{\mathpalette\underleftrightarrow@}%
\def\underleftrightarrow@#1#2{\vtop{\ialign{##\crcr$\m@th
  \hfil#1#2\hfil$\crcr
 \noalign{\nointerlineskip}\leftrightarrowfill@#1\crcr}}}%
\def\qopnamewl@#1{\mathop{\operator@font#1}\nlimits@}
\let\nlimits@\displaylimits
\def\setboxz@h{\setbox\z@\hbox}
\def\varlim@#1#2{\mathop{\vtop{\ialign{##\crcr
 \hfil$#1\m@th\operator@font lim$\hfil\crcr
 \noalign{\nointerlineskip}#2#1\crcr
 \noalign{\nointerlineskip\kern-\ex@}\crcr}}}}
 \def\rightarrowfill@#1{\m@th\setboxz@h{$#1-$}\ht\z@\z@
  $#1\copy\z@\mkern-6mu\cleaders
  \hbox{$#1\mkern-2mu\box\z@\mkern-2mu$}\hfill
  \mkern-6mu\mathord\rightarrow$}
\def\leftarrowfill@#1{\m@th\setboxz@h{$#1-$}\ht\z@\z@
  $#1\mathord\leftarrow\mkern-6mu\cleaders
  \hbox{$#1\mkern-2mu\copy\z@\mkern-2mu$}\hfill
  \mkern-6mu\box\z@$}
\def\projlim{\qopnamewl@{proj\,lim}}
\def\injlim{\qopnamewl@{inj\,lim}}
\def\varinjlim{\mathpalette\varlim@\rightarrowfill@}
\def\varprojlim{\mathpalette\varlim@\leftarrowfill@}
\def\varliminf{\mathpalette\varliminf@{}}
\def\varliminf@#1{\mathop{\underline{\vrule\@depth.2\ex@\@width\z@
   \hbox{$#1\m@th\operator@font lim$}}}}
\def\varlimsup{\mathpalette\varlimsup@{}}
\def\varlimsup@#1{\mathop{\overline
  {\hbox{$#1\m@th\operator@font lim$}}}}
\def\align{\@verbatim \frenchspacing\@vobeyspaces \@alignverbatim
You are using the "align" environment in a style in which it is not defined.}
\let\csname endalign*\endcsname =\endtrivlist
\def\alignat{\@verbatim \frenchspacing\@vobeyspaces \@alignatverbatim
You are using the "alignat" environment in a style in which it is not defined.}
\let\csname endalignat*\endcsname =\endtrivlist
\def\xalignat{\@verbatim \frenchspacing\@vobeyspaces \@xalignatverbatim
You are using the "xalignat" environment in a style in which it is not defined.}
\let\csname endxalignat*\endcsname =\endtrivlist
\def\gather{\@verbatim \frenchspacing\@vobeyspaces \@gatherverbatim
You are using the "gather" environment in a style in which it is not defined.}
\let\csname endgather*\endcsname =\endtrivlist
\def\multiline{\@verbatim \frenchspacing\@vobeyspaces \@multilineverbatim
You are using the "multiline" environment in a style in which it is not defined.}
\let\csname endmultiline*\endcsname =\endtrivlist
\def\arrax{\@verbatim \frenchspacing\@vobeyspaces \@arraxverbatim
You are using a type of "array" construct that is only allowed in AmS-LaTeX.}
\def\tabulax{\@verbatim \frenchspacing\@vobeyspaces \@tabulaxverbatim
You are using a type of "tabular" construct that is only allowed in AmS-LaTeX.}
\let\csname endarrax*\endcsname =\endtrivlist
\let\csname endtabulax*\endcsname =\endtrivlist
 \def\endequation{%
     \ifmmode\ifinner % FLEQN hack
      \iftag@
        \addtocounter{equation}{-1} % undo the increment made in the begin part
        $\hfil
           \displaywidth\linewidth\@taggnum\egroup \endtrivlist
        \global\tag@false
        \global\@ignoretrue   
      \else
        $\hfil
           \displaywidth\linewidth\@eqnnum\egroup \endtrivlist
        \global\tag@false
        \global\@ignoretrue 
      \fi
     \else   
      \iftag@
        \addtocounter{equation}{-1} % undo the increment made in the begin part
        \eqno \hbox{\@taggnum}
        \global\tag@false%
        $$\global\@ignoretrue
      \else
        \eqno \hbox{\@eqnnum}% $$ BRACE MATCHING HACK
        $$\global\@ignoretrue
      \fi
     \fi\fi
 } 
 \newif\iftag@ \tag@false
 \def\TCItag{\@ifnextchar*{\@TCItagstar}{\@TCItag}}
 \def\@TCItag#1{%
     \global\tag@true
     \global\def\@taggnum{(#1)}%
     \global\def\@currentlabel{#1}}
 \def\@TCItagstar*#1{%
     \global\tag@true
     \global\def\@taggnum{#1}%
     \global\def\@currentlabel{#1}}
     \def\tag{\@ifnextchar*{\@tagstar}{\@tag}}
     \def\@tag#1{%
         \global\tag@true
         \global\def\@taggnum{(#1)}}
     \def\@tagstar*#1{%
         \global\tag@true
         \global\def\@taggnum{#1}}
\begin{document}

\title{Multiple Object Tracking in Unknown Backgrounds \\
with Labeled Random Finite Sets}
\author{Yuthika Punchihewa, Ba-Tuong Vo, Ba-Ngu Vo and Du Yong Kim\thanks{}}
\maketitle

\begin{abstract}
This paper proposes an on-line multiple object tracking algorithm that can
operate in unknown background. In a majority of multiple object tracking
applications, model parameters for background processes such as clutter and
detection are unknown and vary with time, hence the ability of the algorithm
to adaptively learn the these parameters is essential in practice. In this
work, we detail how the Generalized Labeled Multi-Bernouli (GLMB) filter, a
tractable and provably Bayes optimal multi-object tracker, can be tailored
to learn clutter and detection parameters on-the-fly while tracking.
Provided that these background model parameters do not fluctuate rapidly
compared to the data rate, the proposed algorithm can adapt to the unknown
background yielding better tracking performance.
\end{abstract}

%\markboth{Preprint: IEEE Trans. Signal Processing,~Vol.~65,
%No.~8, pp. 1975--1987, ~April~2017}{Vo et. al. : An Efficient Implementation of the Generalized Labeled
%Multi-Bernoulli Filter}

% The paper headers
%\markboth{Preprint: IEEE Transactions on Signal Processing,~Vol.~XX, No.~X,
%pp. X--X, ~XXX~2014}{Hoang \MakeLowercase{\textit{et al.}}: The Cauchy-Schwarz divergence for Poisson point processes}
% If you want to put a publisher's ID mark on the page you can do it like
% this:
%\IEEEpubid{0000--0000/00\$00.00~\copyright~2012 IEEE}
% Remember, if you use this you must call \IEEEpubidadjcol in the second
% column for its text to clear the IEEEpubid mark.
% use for special paper notices
%\IEEEspecialpapernotice{(Invited Paper)}

% make the title area

% As a general rule, do not put math, special symbols or citations
% in the abstract or keywords.

% Note that keywords are not normally used for peerreview papers.

\begin{IEEEkeywords}
Random finite sets, generalized labeled multi-Bernoulli, multi-object
tracking, data association, optimal assignment, ranked assigment, Gibbs sampling
\end{IEEEkeywords}

\section{Introduction}

In a multi-object scenario the number of objects and their individual states
evolve in time, compounded by false detections, misdetections and
measurement origin uncertainty \cite{BSF88, BP99, Mah07, mahler2014advances}%
. For example, in the video dataset KITTI-17 from KITTI datasets \cite{Geiger2012CVPR}, see
Fig. \ref{fig:videosam2}, the number of objects varies with time due to
objects coming in and out of the scene, and the detector (e.g. background
subtraction, foreground modelling \cite{Elgammaletal02}) used to convert
each image into point measurements, invariably misses objects in the scene
as well as generating false measurements or clutter. 

Knowledge of parameters for uncertainty sources such as clutter and
detection profile are of critical importance in Bayesian multi-object
filtering, arguably, more so than the measurement noise model. Most
multi-object tracking techniques are built on the assumption that
multi-object system model parameters are known a priori, which is generally
not the case in practice \cite{BSF88, BP99, Mah07, mahler2014advances}.
Significant mismatches in clutter and detection model parameters inevitably
result in erroneous estimates. For the video tracking example in Fig. \ref%
{fig:videosam2} the clutter rate and detection profile are not known and
have to be guessed before a multi-object tracker can be applied. The
tracking performance of the Bayes optimal multi-object tracking filter \cite%
{VoGLMB13,VVP_GLMB13}, for the guessed clutter rate and 'true' clutter rate (that
varies with time as shown in Fig. \ref{fig:frame_clutter}), demonstrates
significant performance degradation.

\begin{figure*}[h]
\centering
\subfigure{
        \label{fig:subfig1}
    \includegraphics[scale=0.28, trim= 20 0 20 0]{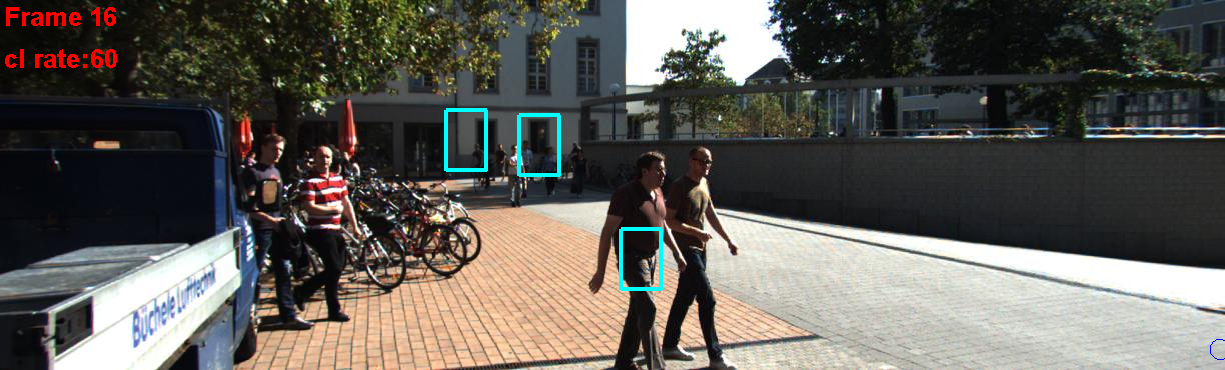}}%
\subfigure{
        \label{fig:subfig1}
    \includegraphics[scale=0.28, trim= 20 0 20 0]{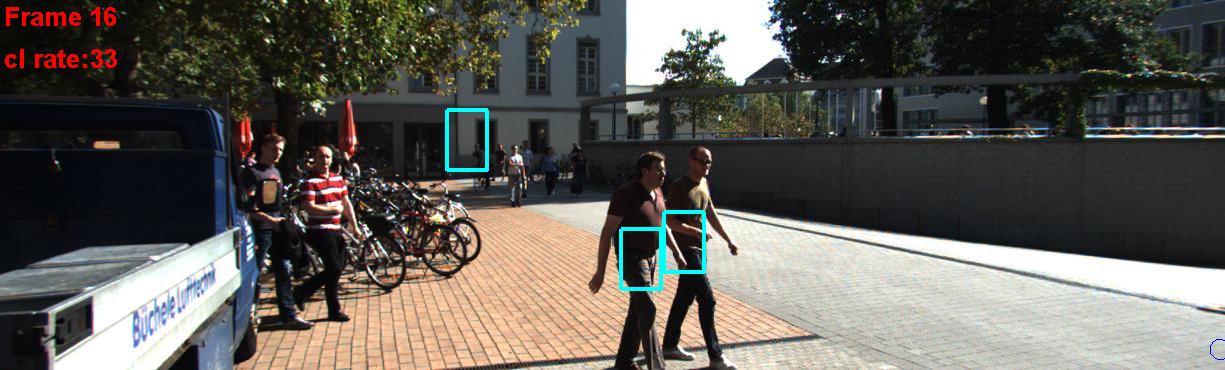}

} 
\subfigure{
        \label{fig:subfig1}
    \includegraphics[scale=0.28, trim= 20 0 20 0]{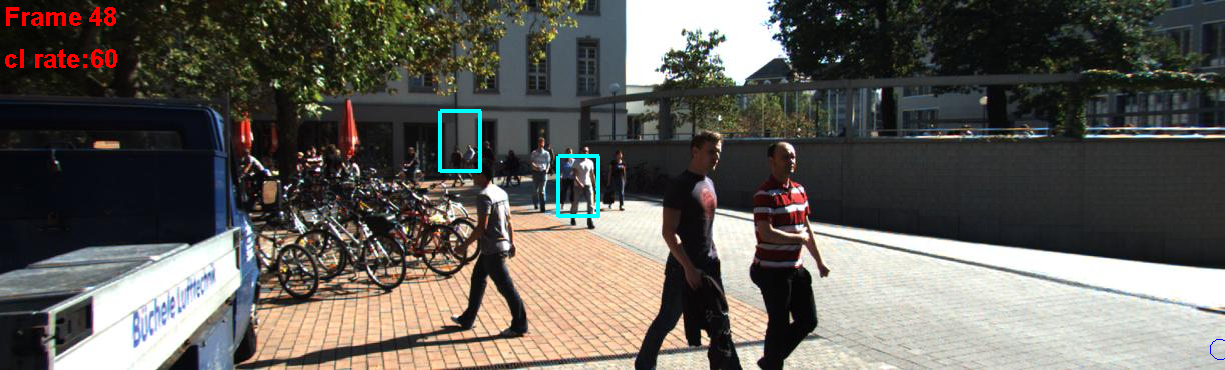}}%
\subfigure{
        \label{fig:subfig1}
    \includegraphics[scale=0.28, trim= 20 0 20 0]{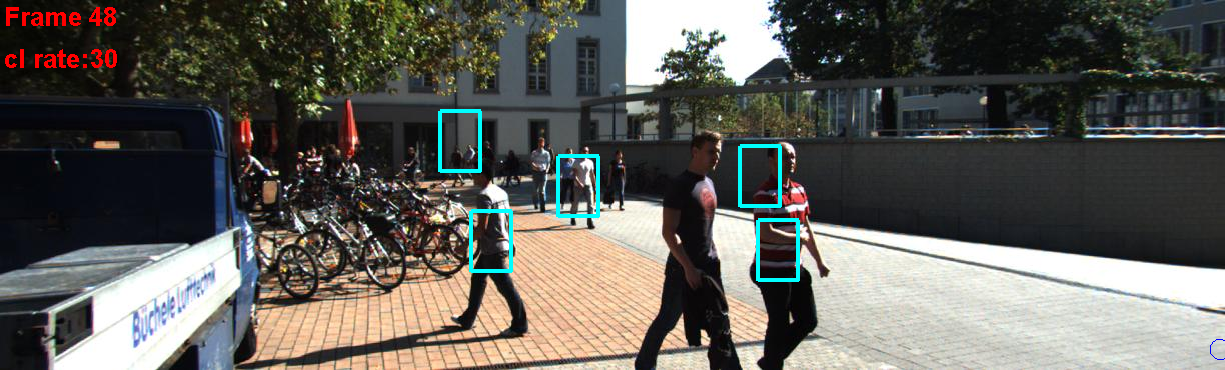}
}
\caption[Optional caption for list of figures]{Frames 16, 48 of the image
sequence from \protect\cite{Geiger2012CVPR} and object detections obtained
using the detector in \protect\cite{DOLLAR2014}. The number of objects
varies with time due to objects coming in and out of the scene. Object
estimates (marked by blue boxes) using the standard GLMB filter for guessed
clutter rate of 60 (left column) and 'true' clutter rate (right column).
Tracking using 'true' clutter rate accurately estimated several objects
that were missed in the frames on the left.}
\label{fig:videosam2}
\end{figure*}

Except for a few applications, the clutter rate and detection profile of the
sensor are not available. Usually these parameters are either estimated from
training data or manually tuned. However, a major problem in many
applications is the time-varying nature of the misdetection and clutter
processes, see Fig. \ref{fig:frame_clutter} for example. Consequently,
there is no guarantee that the model parameters chosen from training data
will be sufficient for the multi-object filter at subsequent frames. Thus,
current multi-object tracking algorithms are far from being a
'plug-and-play' technology, since their application still requires
cumbersome and error-prone user configuration.

\begin{figure}[h]
\centering
\includegraphics[trim=20 0 20 0,clip,width=0.5 \textwidth]{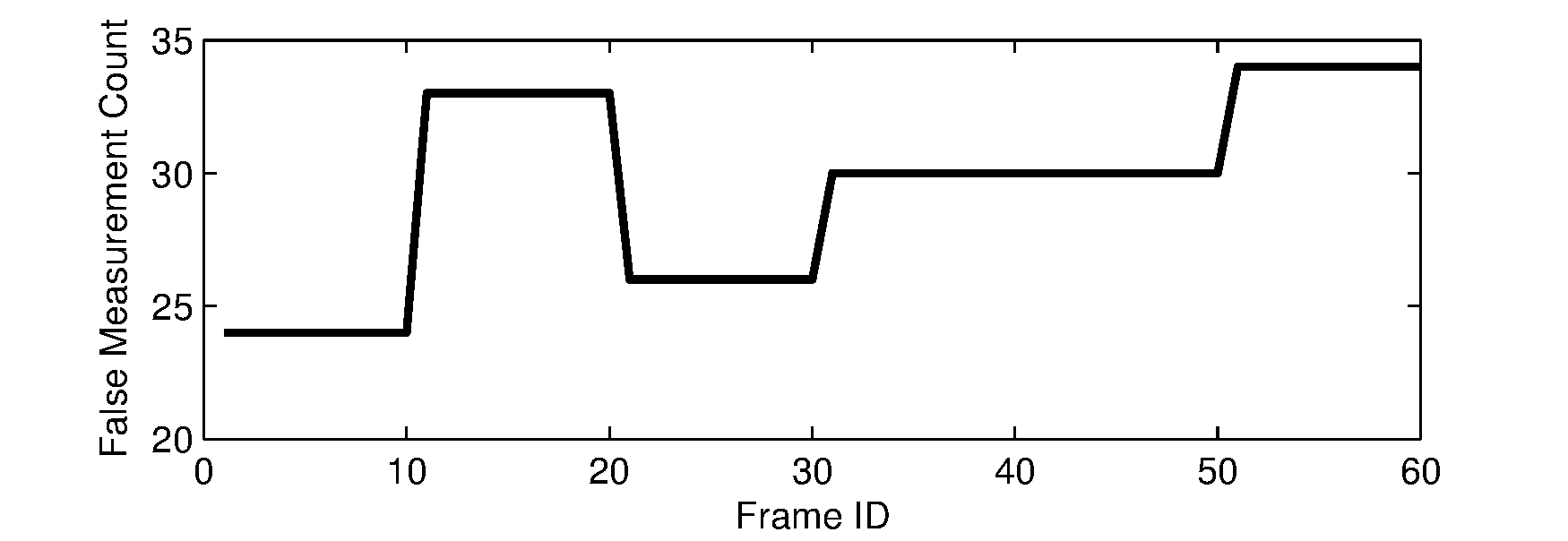}
\caption{'True' clutter rate for the first 60 frames of the dataset\cite{Geiger2012CVPR}. Note
that it is not possible to know the true clutter rate for real video data.
For illustration we assume that the clutter rate varies slowly and
use the average clutter count over a moving 10-frame window as the 'true'
clutter rate.}
\label{fig:frame_clutter}
\end{figure}

This paper proposes an online multi-object tracker that learns the clutter
and detection model parameters while tracking. Such capability is essential
for applications where the clutter rate and detection profile vary with
time. Specifically, we detail a GLMB filter for Jump Markov system (JMS),
which is applicable to tracking multiple manuevering objects as well as
joint tracking and classification of multiple objects. Using the JMS-GLMB
filter, we develop a multi-object tracker that can adaptively learn clutter
rate and detection profile while tracking, provided that the detection
profile and clutter background do not change too rapidly compared to the
measurement-update rate. An efficient implementation of the proposed filter and 
experiments confirm markedly improved performance over existing multi-object
filters for unknown background such as the $\lambda $-CPHD filter \cite%
{MVV10}. Preliminary results have been reported in \cite{PVV16}, which
outlines a GLMB filter for jump-Markov system model.

We remark that robust Bayesian approaches to problems with model mismatch in
the literature such as \cite{Cozman, Noacketal08, Wally, Basu, Berger,
Berliner} are too computationally intensive for an on-line multi-object
tracker. A Sequential Monte Carlo technique for calibration of
time-invariant multi-object model parameters was proposed in \cite{Singhetal}%
. While this approach is quite general it is not directly applicable to
time-varying clutter rate and detection profile, and is also too
computationally intensive for an on-line tracker. Previous work on CPHD/PHD,
multi-Bernoulli and multi-target Bayes filters for unknown clutter rate and
detection profile \cite{MVV10}, \cite{Mahler10a, Mahler10b, VVHM13,
MahlerVo14, Corea16, Rez_TMI15} do not output object tracks. Further, the
CPHD/PHD, multi-Bernoulli filters require more drastic approximations than
the GLMB filter.

The remainder of paper is organized as follows. Section II provides
background material on multi-object tracking and the GLMB filter. Section
III details two versions of the GLMB filter for a general multi-object JMS
model and a non-interacting multi-object JMS model. Section IV presents an
efficient implementation of the non-interacting JMS-GLMB filter for tracking
in unknown clutter rate and detection profile. Numerical studies are
presented in Section V and concluding remarks are given in Section VI.

\section{Background}

\indent This section reviews relevant background on the random finite set
(RFS) formulation of multi-object tracking and the GLMB filter. Throughout
the article, we adopt the following notations. For a given set $S$, $|S|$
denotes its cardinality (number of elements), $1_{S}(\cdot )$ denotes the
indicator function of $S$, and $\mathcal{F}(S)$ denotes the class of finite
subsets of $S$. We denote the inner product $\int f(x)g(x)dx$ by $%
\left\langle f,g\right\rangle $, the list of variables $%
X_{m},X_{m+1},...,X_{n}$ by $X_{m:n}$, the product $\tprod_{x\in X}f(x)$
(with $f^{\emptyset }=1$) by $f^{X}$, and a generalization of the Kroneker
delta that takes arbitrary arguments such as sets, vectors, integers etc.,
by 
\begin{equation*}
\delta _{Y}[X]\triangleq \left\{ 
\begin{array}{l}
1,\text{ if }X=Y \\ 
0,\text{ otherwise}%
\end{array}%
\right. .
\end{equation*}

\subsection{Multi-object State}

\indent At time $k$, an existing object is described by a vector $x_{k}\in 
\mathbb{X}$. To distinguish different object trajectories, each object is
identified by a unique label $\ell _{k}$ that consists of an ordered pair $%
(t,i)$, where $t$ is the time of birth and $i$ is the index of individual
objects born at time $t$ \cite{VoGLMB13}. The trajectory of an object is
given by the sequence of states with the same label.

Formally, the state of an object at time $k$ is a vector $\mathbf{x}%
_{k}=(x_{k},\ell _{k})\in \mathbb{X\times L}_{k}$, where $\mathbb{L}_{k}$
denotes the label space for objects at time $k$ (including those born prior
to $k$). Note that $\mathbb{L}_{k}$ is given by $\mathbb{B}_{k}\cup \mathbb{L%
}_{k-1}$, where $\mathbb{B}_{k}$ denotes the label space for objects born at
time $k$ (and is disjoint from $\mathbb{L}_{k-1}$).

In the RFS approach to multi-object tracking \cite{Mah07, mahler2014advances}%
. the collection of object states, referred to as the \emph{multi-object
state}, is naturally represented as a finite set \cite{VVPS10}. Suppose that
there are $N_{k}$ objects at time $k$, with states $\mathbf{x}_{k,1},...,%
\mathbf{x}_{k,N_{k}}$, then the \emph{multi-object state} is defined by the
finite set 
\begin{equation*}
\mathbf{X}_{k}=\{\mathbf{x}_{k,1},...,\mathbf{x}_{k,N_{k}}\}\in \mathcal{F}(%
\mathbb{X\times L}_{k}),
\end{equation*}%
We denote the set $\{\ell :(x,\ell )\in \mathbf{X}\}$ of labels of $\mathbf{X%
}$ by $\mathcal{L}(\mathbf{X})$. Note that since the label is unique, no two
objects have the same label, i.e. $\delta _{|\mathbf{X}|}[|\mathcal{L}(%
\mathbf{X})|]=1$. Hence $\Delta (\mathbf{X})\triangleq $ $\delta _{|\mathbf{X%
}|}[|\mathcal{L}_{\mathbf{X}}|]$ is called the \emph{distinct label indicator%
}.

A \emph{labeled RFS} is a random variable on $\mathcal{F}(\mathbb{X}\mathcal{%
\times }\mathbb{L})$ such that each realization has distinct labels. The
distinct label property ensures that at any time no two tracks can share any
common points. For the rest of the paper, we follow the convention that
single-object states are represented by lower-case letters (e.g. $x$, $%
\mathbf{x}$), while multi-object states are represented by upper-case
letters (e.g. $X$, $\mathbf{X}$), symbols for labeled states and their
distributions are bold-faced (e.g. $\mathbf{x}$, $\mathbf{X}$, $\mathbf{\pi }
$, etc.), and spaces are represented by blackboard bold (e.g. $\mathbb{X}$, $%
\mathbb{Z}$, $\mathbb{L}$, etc.). For notational compactness, we drop the
time subscript $k$, and use the subscript `$+$' for time $k+1$.

\subsection{Standard multi-object system model}

Given the multi-object state $\mathbf{X}$ at time $k$, each state $(x,\ell
)\in \mathbf{X}$ either survives with probability $P_{S}(x,\ell )$ and
evolves to a new state $(x_{+},\ell _{+})$ at time $k+1$ with probability
density $f_{+}(x_{+}|x,\ell )\delta _{\ell }[\ell _{+}]$ or dies with
probability $1-P_{S}(x,\ell )$. The set $\mathbf{B}_{+}$ of new objects born
at time $k+1$ is distributed according to the labeled multi-Bernoulli (LMB)
density 
\begin{equation}
\Delta (\mathbf{B}_{+})\left[ 1_{\mathbb{B}_{\,+}}\,r_{B,+}\right] ^{%
\mathcal{L(}\mathbf{B}_{+})}\left[ 1-r_{B,+}\right] ^{\mathbb{B}_{+}-%
\mathcal{L(}\mathbf{B}_{+})}p_{B,+}^{\mathbf{B}_{+}},  \label{eq:LMB_birth}
\end{equation}%
where $r_{B,+}(\ell )$ is the probability that a new object with label $\ell 
$ is born, $p_{B,+}(\cdot ,\ell )$\ is the distribution of its kinematic
state, and $\mathbb{B}_{\,+}$ is the label space of new born objects \cite%
{VoGLMB13}. The multi-object state $\mathbf{X}_{+}$ (at time $k+1$) is the
superposition of surviving objects and new born objects. Note that the label
space of all objects at time $k+1$ is the disjoint union $\mathbb{L}_{+}=%
\mathbb{L}\uplus \mathbb{B}_{+}$. It is assumed that, conditional on $%
\mathbf{X}$, objects move, appear and die independently of each other.

For a given multi-object state $\mathbf{X}$, each $(x,\ell )\in \mathbf{X}$
is either detected with probability $P_{D}(x,\ell )$ and generates a
detection $z\in Z$ with likelihood $g(z|x,\ell )$ or missed with probability 
$1-P_{D}(x,\ell )$. The \emph{multi-object observation} is the superposition
of the observations from detected objects and Poisson clutter with
(positive) intensity $\kappa $. Assuming that, conditional on $\mathbf{X}$,
detections are independent of each other and clutter, the multi-object
likelihood function is given by \cite{VoGLMB13}, \cite{VVP_GLMB13} 
\begin{equation}
g(Z|\mathbf{X})\propto \sum_{\theta \in \Theta }1_{\Theta (\mathcal{L(}%
\mathbf{X}))}(\theta )\prod\limits_{(x,\ell )\in \mathbf{X}}\psi
_{Z_{_{\!}}}^{(\theta (\ell ))}(x,\ell )  \label{eq:RFSmeaslikelihood0}
\end{equation}%
where: $\Theta $ is the set of \emph{positive 1-1} maps $\theta :\mathbb{L}%
\rightarrow \{0$:$|Z|\}$, i.e. maps such that \emph{no two distinct
arguments are mapped to the same positive value}, $\Theta (I)$ is the set of 
\emph{positive 1-1} maps with domain $I$; and 
\begin{equation}
\psi _{\!\{z_{1:M}\}\!}^{(j)}(x,\ell )=\left\{ \!\!%
\begin{array}{ll}
\frac{P_{\!D}(x,\ell )g(z_{j}|x,\ell )}{\kappa (z)}, & \!\!\text{if }j=1%
\text{:}M \\ 
1-P_{\!D}(x,\ell ), & \!\!\text{if }j=0%
\end{array}%
\right. \!\!.  \label{eq:PropConj5}
\end{equation}%
The map $\theta $ specifies which objects generated which detections, i.e.
object $\ell $ generates detection $z_{\theta (\ell )}\in Z$, with
undetected objects assigned to $0$. The positive 1-1 property means that $%
\theta $ is 1-1 on $\{\ell :\theta (\ell )>0\}$, the set of labels that are
assigned positive values, and ensures that any detection in $Z$ is assigned
to at most one object.

For the special case with zero-clutter, i.e. $\kappa $ is identically zero,
the multi-object likelihood function still takes the same form, but with $%
P_{\!D}(x,\ell )g(z_{j}|x,\ell )/\kappa (z)$ replaced by $P_{\!D}(x,\ell
)g(z_{j}|x,\ell )$, see \cite{Mah07, mahler2014advances}. To cover both
positive and identically-zero clutter intensities we write%
\begin{equation}
\psi _{\!\{z_{1:M}\}\!}^{(j)}(x,\ell )=\left\{ \!\!%
\begin{array}{ll}
\frac{P_{\!D}(x,\ell )g(z_{j}|x,\ell )}{\kappa (z)+\delta _{0}[\kappa (z)]},
& \!\!\text{if }j=1\text{:}M \\ 
1-P_{\!D}(x,\ell ), & \!\!\text{if }j=0%
\end{array}%
\right. \!\!.
\end{equation}

\subsection{Generalized Labeled Multi-Bernoulli}

A Generalized Labeled Multi-Bernoulli (GLMB) filtering density, at time $k$,
is a multi-object density that can be written in the form 
\begin{equation}
\mathbf{\pi }(\mathbf{X})=\Delta (\mathbf{X})\sum_{\xi \in \Xi ,I\subseteq 
\mathbb{L}}\omega ^{(I,\xi )}\delta _{I}[\mathcal{L(}\mathbf{X})]\left[
p^{(\xi )}\right] ^{\mathbf{X}}.  \label{eq:GLMB}
\end{equation}%
where each $\xi \in \Xi \triangleq \Theta _{0}\times ...\times \Theta _{k}$
represents a history of association maps $\xi =(\theta _{1:k})$, each $%
p^{(\xi )}(\cdot ,\ell )$ is a probability density on $\mathbb{X}$, and each 
$\omega ^{(I,\xi )}$ is non-negative with $\sum_{\xi \in \Xi
}\sum_{I\subseteq \mathbb{L}}\omega ^{(I,\xi )}=1$. The cardinality
distribution of a GLMB is given by%
\begin{equation}
\Pr (\left\vert \mathbf{X}\right\vert \text{=}n)=\sum_{\xi \in \Xi
,I\subseteq \mathbb{L}}\delta _{n}\left[ \left\vert I\right\vert \right]
\omega ^{(I,\xi )},  \label{eq:GLMBCard}
\end{equation}%
while, the existence probability and probability density of track $\ell \in 
\mathbb{L}$ are respectively\allowdisplaybreaks%
\begin{align}
r(\ell )& =\sum_{\xi \in \Xi ,I\subseteq \mathbb{L}}1_{I}(\ell )\omega
^{(I,\xi )}, \\
p(x,\ell )& =\frac{1}{r(\ell )}\sum_{\xi \in \Xi ,I\subseteq \mathbb{L}%
}1_{I}(\ell )\omega ^{(I,\xi )}p^{(\xi )}(x,\ell ).
\end{align}

Given the GLMB density (\ref{eq:GLMB}), an intuitive multi-object estimator
is the \emph{multi-Bernoulli estimator}, which first determines the set of
labels $L\subseteq $ $\mathbb{L}$ with existence probabilities above a
prescribed threshold, and second the mode/mean estimates from the densities $%
p(\cdot ,\ell ),\ell \in L$, for the states of the objects. A popular
estimator is a suboptimal version of the Marginal Multi-object Estimator 
\cite{Mah07}, which first determines the pair $(L,\xi )$ with the highest
weight $\omega ^{(L,\xi )}$ such that $\left\vert L\right\vert $ coincides
with the mode cardinality estimate, and second the mode/mean estimates from $%
p^{(\xi )}(\cdot ,\ell ),\ell \in L$, for the states of the objects.

For the standard multi-object system model the GLMB density is a conjugate
prior, and is also closed under the Chapman-Kolmogorov equation \cite%
{VoGLMB13}. Moreover, the GLMB posterior can be tractably computed to any
desired accuracy in the sense that, given any $\epsilon >0$, an approximate
GLMB within $\epsilon $ from the actual GLMB in $L_{1}$ distance, can be
computed (in polynomial time) \cite{VVP_GLMB13}. The GLMB filtering density
can be propagated forward in time via a prediction step and an update step
as in\cite{VVP_GLMB13} or in one single step as in \cite{VVH_GLMB17}. Since
the number of components grow exponentially in the predicted/filtered
densities during prediction/update stages, truncation of hypotheses with low
weights is essential during implementation. Polynomial complexity schemes
for truncation of insignificant weights were given in \cite{VVP_GLMB13} and 
\cite{VVH_GLMB17}, via Murty's algorithm with a quartic (or at best cubic)
complexity, or via Gibbs sampling with a linear complexity, where the
complexity is given in the number of measurements.

\section{Jump Markov System\ GLMB\ Filtering}

We first derive from the GLMB recursion a multi-object filter for Jump
Markov system (JMS) in subsection \ref{subsec_JMSGLMB}, which is applicable
to tracking multiple manuevering objects as well as joint tracking and
classification of multiple objects. When the modes of the multi-object JMS
do not interact, the JMS-GLMB recursion reduces to a more tractable form,
which is presented in subsection \ref{subsec_MultiClassGLMB}. This special
case is then used to develop a multi-object tracker that can operate in
unknown background in section \ref{sec_UBGLMB}.

\subsection{GLMB filter for Jump Markov Systems}

\label{subsec_JMSGLMB}

A \emph{Jump Markov System} (JMS) consists of a set of parameterised state
space models, whose parameters evolve with time according to a finite state
Markov chain. A JMS can be specified in terms of the standard system
parameters for each mode or class as follows.

Let $\mathbb{M}$ be the (discrete) index set of modes in the system. Suppose
that mode $m$ is in effect at time $k$, then the state transition density
from $\zeta $, at time $k$, to $\zeta _{+}$, at time $k+1$, is denoted by $%
f_{+}^{(m)}(\zeta _{+}|\zeta )$, and the likelihood of $\zeta $ generating
the measurement $z$ is denoted by $g^{(m)}(z|\zeta )$ \cite{LiVSMM00}, \cite%
{LiJilkovMM05}, \cite{Ristic04}. Moreover, the joint transition of the state
and mode assumes the form:%
\begin{equation}
f_{+}(\zeta _{+},m_{+}|\zeta ,m)=f_{+}^{(m_{+})}(\zeta _{+}|\zeta )\vartheta
_{+}(m_{+}|m),  \label{eq:JMS_trans}
\end{equation}%
where $\vartheta _{+}(m_{+}|m)$ denotes the probability of switching from
mode $m$ to $m_{+}$ (and satisfies $\sum_{m_{+}\in \mathbb{M}}\vartheta
_{+}(m_{+}|m)=1$). Note that by defining the \emph{augmented state} as $%
x=(\zeta ,m\mathbf{)}\in \mathbb{X}\times \mathbb{M}$, a JMS model can be
expressed as a standard state space model with transition density (\ref%
{eq:JMS_trans}) and measurement likelihood function $g(z|\zeta
,m)=g^{(m)}(z|\zeta )$.

In a multi-object system, each object is identified by a label $\ell $ that
remains unchanged throughout its life, hence the JMS state equation for such
an object is written as 
\begin{eqnarray}
\!\!\!\!f_{+}(\zeta _{+},m_{+}|\zeta ,m,\ell )\!\!\!
&=&\!\!\!f_{+}^{(m_{+})}(\zeta _{+}|\zeta ,\ell )\vartheta _{+}(m_{+}|m)
\label{eq:JMS-GLMB} \\
\!\!\!\!g(z|\zeta ,m,\ell )\!\!\! &=&\!\!\!g^{(m)}(z|\zeta ,\ell )
\label{eq:JMS-GLMB1}
\end{eqnarray}%
Additionally, to emphasize the dependence on the mode, the survival, birth
and detection parameters are, respectively, denoted as 
\begin{eqnarray*}
p_{B,+}^{(m_{+})}(\zeta _{+},\ell _{+}) &\triangleq &p_{B,+}(\zeta
_{+},m_{+},\ell _{+}), \\
P_{S}^{(m)}(\zeta ,\ell ) &\triangleq &P_{S}(\zeta ,m,\ell ), \\
P_{D}^{(m)}(\zeta ,\ell ) &\triangleq &P_{D}(\zeta ,m,\ell ).
\end{eqnarray*}%
Substituting these parameters and the JMS state equations (\ref{eq:JMS-GLMB}%
)-(\ref{eq:JMS-GLMB1}) into the GLMB recursion in \cite{VVH_GLMB17} yields
the so-called JMS-GLMB recursion.

\begin{proposition}
If the filtering density at time $k$ is the GLMB (\ref{eq:GLMB}), then the
filtering density at time $k+1$ is the GLMB%
\begin{equation}
\mathbf{\pi }_{\!}(\mathbf{X_{+\!}}|Z_{+\!})\!\propto \!\Delta _{\!}(_{\!}%
\mathbf{X_{\!+}}_{\!})\!\!\!\!\sum\limits_{I\!,\xi ,I_{\!+},\theta
_{\!+\!}}\!\!\!\!\omega ^{(I,\xi )}\omega _{Z_{_{\!}+}}^{(_{\!}I_{\!},\xi
,I_{\!+\!},\theta _{\!+\!})}\delta _{_{\!}I_{+\!}}[\mathcal{L}(_{\!}\mathbf{%
X_{\!+}}_{\!})]\!\left[ p_{Z_{_{\!}+_{\!}}}^{(_{\!}\xi ,\theta
_{\!+\!})}{}_{\!}\right] ^{\!\mathbf{X_{\!+}}}\!  \label{eq:JMSGLMB}
\end{equation}%
where $I\in \mathcal{F}(\mathbb{L})$,$\;\xi \in \Xi $,$\;I_{+}\in \mathcal{F}%
(\mathbb{L}_{+})$,$\;\theta _{+}\in \Theta _{+}$, \allowdisplaybreaks%
\begin{align}
\omega _{Z_{_{\!}+}}^{(_{\!}I_{\!},\xi ,I_{\!+\!},\theta _{\!+\!})}=& 1_{{%
\Theta }_{\!+\!}(I_{+})}(\theta _{\!+\!})\left[ 1-\bar{P}_{S}^{(\xi )}\right]
^{\!I\!-I_{\!+}}\!\left[ \bar{P}_{S\!}^{(\xi )}\right] ^{\!I\cap
I_{+\!}}\times  \notag \\
& \left[ 1-r_{B\!,+}\right] ^{\mathbb{B}_{\!+\!}-I_{\!+}\!}\;r_{B\!,+}^{%
\mathbb{B}_{\!{+}}\cap I_{+\!}}\!\left[ \bar{\psi}_{_{\!}Z_{_{\!}+}}^{(_{\!}%
\xi ,\theta _{_{\!}+\!})}\right] ^{I_{+}}  \label{eq:JMSGLMB1} \\
\bar{P}_{S\!}^{(\xi )}(\ell )=& \sum_{m\in \mathbb{M}}\bar{P}_{S\!}^{(\xi
)}(m,\ell ),  \label{eq:JMSGLMB2} \\
\bar{P}_{S\!}^{(\xi )}(m,\ell )=& \left\langle p^{(\xi )\!}(\cdot ,m,\ell
),P_{S}^{(m)}(\cdot ,\ell )\right\rangle ,  \label{eq:JMSGLMB3} \\
\bar{\psi}_{_{\!}Z_{_{\!}+}}^{(\xi ,\theta _{+\!})}(\ell )=& \sum_{m_{+}\in 
\mathbb{M}}\bar{\psi}_{_{\!}Z_{_{\!}+}}^{(\xi ,\theta _{+\!})}(m_{+},\ell ),
\label{eq:JMSGLMB4} \\
\bar{\psi}_{_{\!}Z_{_{\!}+}}^{(\xi ,\theta _{+\!})}(m_{+},\ell )=&
\left\langle \bar{p}_{+}^{(\xi )}(\cdot ,m_{+},\ell _{_{\!}}),\psi
_{_{\!}Z_{_{\!}+}\!}^{(\theta _{_{\!}+}(\ell ))}(\cdot ,m_{+},\ell
_{_{\!}})\right\rangle  \label{eq:JMSGLMB5}
\end{align}%
\begin{gather}
\!\bar{p}_{+}^{(\xi )\!}(\zeta _{+},m_{+},\ell )=1_{\mathbb{B}_{+}}\!(\ell
)p_{B}^{(m_{+})}(\zeta _{+},\ell )\;+\text{ \ \ \ \ \ \ \ \ \ \ \ \ \ \ \ \
\ \ \ \ \ \ }  \notag \\
1_{\mathbb{L}}(\ell )\frac{\sum\limits_{m\in \mathbb{M}}{\tiny \!}%
\!\left\langle \!{\small P}_{S}^{(m)}{\small (\cdot ,\ell )f}_{+}^{(m_{+})}%
{\small (\zeta }_{+}{\small |\cdot ,\ell ),p}^{(\xi )}{\small (\cdot ,m,\ell
)}\!\right\rangle \!{\small \vartheta (m}_{+}{\small |m)}}{{\small \bar{P}}%
_{S}^{(\xi )}{\small (\ell )}}  \label{eq:JMSGLMB6}
\end{gather}%
\begin{align}
p_{Z_{_{\!}+}}^{(\xi _{\!},\theta _{\!+\!})\!}(\zeta _{+},m_{+},\ell )=& 
\frac{\bar{p}_{+}^{(\xi )\!}(\zeta _{+},m_{+},\ell )\psi _{Z_{+}}^{(\theta
_{_{\!}+\!}(\ell ))\!}(\zeta _{+},m_{+},\ell )}{\bar{\psi}_{Z_{+}}^{(\xi
,\theta _{_{\!}+})}(m_{+},\ell _{_{\!}})}  \label{eq:JMSGLMB7} \\
\psi _{\!\{z_{1:|Z|}\}\!}^{(j)}(\zeta ,m,\ell )=& \left\{ \!\!%
\begin{array}{ll}
\!\frac{P_{D}^{(m)}(\zeta ,\ell )g^{(m)}(z_{j\!}|_{_{\!}}\zeta ,\ell )}{%
\kappa (z_{j})+\delta _{0}[\kappa (z_{j})]}, & \!\!\!\text{if }j\in \left\{
1,...,|Z|\right\} \\ 
\!1-P_{D}^{(m)}(\zeta ,\ell ), & \!\!\!\text{if }j=0%
\end{array}%
\right. \!\!  \label{eq:JMSGLMB8}
\end{align}
\end{proposition}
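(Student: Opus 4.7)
The strategy is to exploit the observation, already made in the paragraph preceding the proposition, that a JMS can be recast as a standard state space model on the augmented single-object space $\mathbb{X}\times\mathbb{M}$ by declaring the augmented state to be $x=(\zeta,m)$, with transition density \eqref{eq:JMS-GLMB} and likelihood \eqref{eq:JMS-GLMB1}. Under this identification, the labeled single-object state is $(x,\ell)=(\zeta,m,\ell)$, a GLMB density on $\mathcal{F}(\mathbb{X}\times\mathbb{M}\times\mathbb{L})$ has exactly the form \eqref{eq:GLMB} with $p^{(\xi)}(\cdot,\ell)$ a density on $\mathbb{X}\times\mathbb{M}$, and the birth/survival/detection data become $p_{B,+}(\zeta_+,m_+,\ell_+)$, $P_S(\zeta,m,\ell)$, $P_D(\zeta,m,\ell)$. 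Therefore the claim is a direct application of the single-step GLMB recursion of \cite{VVH_GLMB17} followed by evaluating every inner product on $\mathbb{X}\times\mathbb{M}$ as a sum-over-$m$ of inner products on $\mathbb{X}$.

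Concretely, I would proceed in four short steps. First, I would quote the generic GLMB recursion from \cite{VVH_GLMB17}, which already yields \eqref{eq:JMSGLMB}--\eqref{eq:JMSGLMB1} and the generic forms of $\bar{P}_S^{(\xi)}(\ell)$, $\bar{p}_+^{(\xi)}(x_+,\ell)$, $\bar{\psi}_{Z_+}^{(\xi,\theta_+)}(\ell)$ and $p_{Z_+}^{(\xi,\theta_+)}(x_+,\ell)$ on the augmented space. Second, for the survival term I would write
\begin{equation*}
\bar{P}_S^{(\xi)}(\ell)=\int_{\mathbb{X}\times\mathbb{M}} P_S(x,\ell)\,p^{(\xi)}(x,\ell)\,dx
=\sum_{m\in\mathbb{M}}\bigl\langle p^{(\xi)}(\cdot,m,\ell),P_S^{(m)}(\cdot,\ell)\bigr\rangle,
\end{equation*}
which is exactly \eqref{eq:JMSGLMB2}--\eqref{eq:JMSGLMB3}. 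Third, for the predicted density I would split $\bar{p}_+^{(\xi)}$ into its birth and surviving parts, substitute the JMS transition \eqref{eq:JMS-GLMB} into the surviving part, and interchange the sum over $m$ with the $\zeta$-integral; the Markov-chain factor $\vartheta(m_+|m)$ separates, producing \eqref{eq:JMSGLMB6}. Fourth, for the likelihood factor I would again decompose the inner product on $\mathbb{X}\times\mathbb{M}$ as $\sum_{m_+\in\mathbb{M}}\langle \bar{p}_+^{(\xi)}(\cdot,m_+,\ell),\psi_{Z_+}^{(\theta_+(\ell))}(\cdot,m_+,\ell)\rangle$, delivering \eqref{eq:JMSGLMB4}--\eqref{eq:JMSGLMB5}, while the updated single-object density \eqref{eq:JMSGLMB7} is just the Bayes normalisation of the predicted density against $\psi_{Z_+}^{(\theta_+(\ell))}$, and \eqref{eq:JMSGLMB8} is \eqref{eq:PropConj5} rewritten with the mode-dependent $P_D^{(m)},g^{(m)}$.

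The only delicate step is the third one, the decomposition of the prediction: one has to keep track of the fact that $P_S$, $f_+$, and $p^{(\xi)}$ all carry the current mode $m$ while $p_{B,+}^{(m_+)}$ and $f_+^{(m_+)}$ carry the next mode $m_+$, so that after interchanging integration over $\zeta$ with summation over $m$, the factor $\vartheta(m_+|m)$ stays outside the inner product in the numerator of \eqref{eq:JMSGLMB6}; the denominator $\bar{P}_S^{(\xi)}(\ell)$ is simply the normaliser that makes $\bar{p}_+^{(\xi)}(\cdot,\cdot,\ell)$ a probability density on $\mathbb{X}\times\mathbb{M}$ for each $\ell\in\mathbb{L}$. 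Once this bookkeeping is carried out, all remaining identities are verbatim specialisations of the standard GLMB recursion and no new combinatorial argument is needed.
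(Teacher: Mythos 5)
Your proposal is correct and follows exactly the route the paper itself takes: the paper's entire justification is the single sentence preceding the proposition, namely that substituting the augmented state $x=(\zeta,m)$ and the JMS parameters into the GLMB recursion of \cite{VVH_GLMB17} yields the result, with all inner products over $\mathbb{X}\times\mathbb{M}$ decomposing into sums over $\mathbb{M}$ of inner products over $\mathbb{X}$. Your sketch simply fills in the bookkeeping (the splitting of $\bar{P}_S^{(\xi)}$, $\bar{p}_+^{(\xi)}$ and $\bar{\psi}_{Z_+}^{(\xi,\theta_+)}$ by mode, and the factoring out of $\vartheta(m_+|m)$) that the paper leaves implicit, so no substantive difference exists.
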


Notice that the above expression is in $\delta $-GLMB form since it can be
written as a sum over $I_{\!+},\xi ,\theta _{+}$ with weights%
\begin{equation*}
\omega _{Z_{+}}^{(I_{+},\xi ,\theta _{+})}\propto \sum\limits_{I}\omega
^{(I,\xi )}\omega _{Z_{+}}^{(I,\xi ,I_{+},\theta _{+})}.
\end{equation*}

This special case of the GLMB recursion is particularly useful for tracking
multiple manuevering objects and joint multi-object tracking and
classification. Indeed the application of the JMS-GLMB recursion to multiple
manuevering object tracking has been reported our preliminary work \cite%
{PVV16}, where separate prediction and update steps was introduced. The same
result was independently reported in \cite{RezaFusion16}.

\subsection{Multi-Class GLMB}

\label{subsec_MultiClassGLMB}

The JMS-GLMB recursion can be applied to the joint multi-object tracking and
classification problem by using the mode as the class label (not to be
confused to object label). What distinguishes this problem from generic
JMS-GLMB filtering is that the modes do not interact with each other in the
following sense:

\begin{enumerate}
\item All possible states of a new object with the same object label share\
a common mode (class label);

\item An object cannot switch between different modes from one time step to
the next.
\end{enumerate}

Let $\mathbb{B}^{(m)}$ denote the set of labels of all elements in $\mathbb{%
X\times M}\times \mathbb{B}$ with mode $m$. Then condition 1 implies that
the label sets $\mathbb{B}^{(m)}$ and $\mathbb{B}^{(m^{\prime })}$ for
different modes $m$ and $m^{\prime }$ are disjoint (otherwise there exist a
label $\ell $ in both $\mathbb{B}^{(m)}$ and $\mathbb{B}^{(m^{\prime })}$,
which means there are states in $\mathbb{X\times M}\times \mathbb{B}$ with
different modes $m$ and $m^{\prime }$ but share a common label $\ell $).
Furthermore, the sets $\mathbb{B}^{(m)}$, $m\in \mathbb{M}$ cover $\mathbb{B}
$, i.e. $\mathbb{B}=\bigcup_{m\in \mathbb{M}}\mathbb{B}^{(m)}$, and thus
form a partition of the space $\mathbb{B}$. A new object is classified as
class $m$ (and has mode $m$) if and only if its label falls into $\mathbb{B}%
^{(m)}$. Thus for an LMB birth\ model, condition 1 means%
\begin{eqnarray}
r_{B,+}(\ell _{+}) &=&\sum_{m_{+}\in \mathbb{M}}r_{B,+}^{(m_{+})}1_{\mathbf{%
\mathbb{B}}_{+}^{(m_{+})}}(\ell _{+}),  \label{eq:class_birthr} \\
p_{B,+}^{(m_{+})}(\zeta _{+},\ell _{+}) &=&p_{B,+}^{(m_{+})}(\zeta _{+})1_{%
\mathbf{\mathbb{B}}_{+}^{(m_{+})}}(\ell _{+}).  \label{eq:class_birthp}
\end{eqnarray}%
Note that $r_{B,+}^{(m_{+})}$ and $p_{B,+}^{(m_{+})}(\zeta _{+})$ are
respectively the existence probability and probability density of the
kinematics $\zeta _{+}$ of a new object given mode $m_{+}$, while $1_{%
\mathbf{\mathbb{B}}_{+}^{(m_{+})}}(\ell _{+})$ is the probability of mode $%
m_{+}$ given label $\ell _{+}$.

Condition 2 means that the mode transition probability 
\begin{equation}
\vartheta (m_{+}|m)=\delta _{m}[m_{+}],  \label{eq:class_trans}
\end{equation}%
which implies that each object belongs to exactly one of the classes in $%
\mathbb{M}$ for its entire life. Consequently, the non-interacting mode
condition means that at time $k$, the label space for all class $m$ objects
is $\mathbb{L}^{(m)}=\biguplus\nolimits_{t=0}^{k}\mathbb{B}_{t}^{(m)}$, and
the set of all possible labels is given by the disjoint union $\mathbb{L}%
=\biguplus\nolimits_{m\in \mathbb{M}}\mathbb{L}^{(m)}$.

For a multi-object JMS system with non-interacting modes, the JMS-GLMB
recursion reduces to a form where the weights and multi-object exponentials
can be separated according to classes. We call this form the multi-class
GLMB.

\begin{proposition}
Let $\mathbf{X}_{\!\!}^{(m)}$ denote the subset of $\mathbf{X}$ with mode $m$%
, and hence $\mathbf{X}=\biguplus\nolimits_{m\in \mathbb{M}}\mathbf{X}^{(m)}$%
. Suppose that the hybrid multi-object density at time $k$ is a GLMB of the
form%
\begin{equation}
\mathbf{\pi }(\mathbf{X})=\sum_{\xi ,I}1_{{\Theta }(I)}(\xi \perp {\Theta }%
)\prod\limits_{m\in \mathbb{M}}\mathbf{\pi }^{(I^{(m)},\xi ^{(m)})}(\mathbf{X%
}_{\!\!}^{(m)})  \label{eq:deltaglmb_class12}
\end{equation}%
where $\xi \in \Xi $, $I\subseteq \mathbb{L}$, $\xi \perp {\Theta }$ denotes
the projection $\xi $ into the space ${\Theta }$, $I^{(m)}\triangleq I\cap 
\mathbb{L}^{(m)}$, $\xi ^{(m)}=\xi |_{\mathbb{L}_{0}^{(m)}\times ...\times 
\mathbb{L}_{k}^{(m)}}$ (i.e. the map $\xi $ restricted to $\mathbb{L}%
_{0}^{(m)}\times ...\times \mathbb{L}_{k}^{(m)}$), and 
\begin{equation}
\mathbf{\pi }^{(I,\xi )}(\mathbf{X})\triangleq \Delta (\mathbf{X})w^{(I,\xi
)}\delta _{_{\!}I}[\mathcal{L}(\mathbf{X}_{\!})]\left[ p^{(\xi )}\right] ^{%
\mathbf{X}}
\end{equation}%
Then the hybrid multi-object filtering density at time $k+1$ is the GLMB 
\begin{equation}
\mathbf{\pi }_{\!Z_{+}\!}(\mathbf{X_{+}})\!\propto \!\!\!\!\sum_{\xi
,I,_{\!}\theta _{\!+\!},I_{+\!}}\!\!\!\!1_{{\Theta }_{\!+\!}(I_{+\!})}(%
\theta _{\!+\!})\!\!\!\prod\limits_{m\in \mathbb{M}}\!\!\mathbf{\pi }%
_{Z_{+}\!}^{\!(m,I_{\!}^{(m)},\xi ^{(m)},I_{\!+\!}^{(m)},\theta
_{\!+\!}^{\!(m)})\!}(\mathbf{X}_{+\!}^{(m)\!})  \label{eq:GLMBactualvirtual}
\end{equation}%
where $I_{\!+\!}\in \!\mathcal{F}(\mathbb{L}_{\!+})$,$\;\theta _{\!+\!}\in
\!\Theta _{+}$, $I_{\!+}^{(m)}=I_{\!+}\!\cap \!\mathbb{L}_{+}^{(m)}$, $%
\theta _{+}^{(m)}=\theta _{+}|_{\mathbb{L}_{+}^{(m)}}$\ \allowdisplaybreaks%
\begin{eqnarray}
&&\!\!\!\!\!\!\!\!\!\!\!\!\!\!\!\!\!\!\!\!\!\!\!\mathbf{\pi }%
_{Z_{+}\!}^{(m,I_{\!},\xi ,I_{\!+\!},\theta _{\!+\!})}(\mathbf{X}_{+\!})= 
\notag \\
&&\!\!\!\!\Delta _{\!}(\mathbf{X}_{+})w_{Z_{_{\!}+}}^{(m,I_{\!},\xi
,I_{\!+\!},\theta _{\!+\!})}w^{(I,\xi )}\delta _{_{\!}I_{+}}[\mathcal{L}(%
\mathbf{X}_{+\!})]\left[ p_{Z_{_{\!}+_{\!}}}^{(_{\!}\xi ,\theta
_{\!+\!})}{}_{\!}\right] ^{\mathbf{X}_{+}}
\end{eqnarray}%
\begin{align}
w_{Z_{_{\!}+}}^{(_{\!}m,I_{\!},\xi ,I_{\!+\!},\theta _{\!+\!})}& =\left[ 
\bar{\psi}_{_{\!}Z_{_{\!}+}}^{(_{\!}\xi ,\theta _{_{\!}+\!})}(m,\cdot )%
\right] ^{I_{+}}\left[ 1-r_{B\!,+}\right] ^{\mathbb{B}_{\!+\!}^{(m)}-I_{\!+}%
\!}r_{B\!,+}^{\mathbb{B}_{\!{+}}^{(m)}\cap I_{+\!}}  \notag \\
& \times \left[ 1-\bar{P}_{S}^{(\xi )}(m,\cdot )\right] ^{\!I\!-I_{\!+}}\!%
\left[ \bar{P}_{S\!}^{(\xi )}(m,\cdot )\right] ^{\!I\cap I_{+\!}}
\label{eq:mclassGLMB2} \\
\bar{P}_{S}^{(\xi )}(m,\ell )& =\left\langle p^{(\xi )\!}(\cdot ,m,\ell
),P_{S}^{(m)}(\cdot ,\ell )\right\rangle ,  \label{eq:mclassGLMB3} \\
\bar{\psi}_{_{\!}Z_{_{\!}+}}^{(\xi ,\theta _{+\!})}(m,\ell )& =\left\langle 
\bar{p}_{+}^{(\xi )}(\cdot ,m,\ell _{_{\!}}),\psi
_{_{\!}Z_{_{\!}+}\!}^{(\theta _{_{\!}+}(\ell ))}(\cdot ,m,\ell
_{_{\!}})\right\rangle ,  \label{eq:mclassGLMB4}
\end{align}%
\begin{align}
\bar{p}_{+}^{(\xi )_{\!}}(\zeta ,m,\ell )=& 1_{\mathbb{L}_{\!}^{(m)}}(\ell )%
\frac{\!\left\langle P_{S}^{(m)}(\cdot ,\ell )f_{+}^{(m)}(\zeta |\cdot ,\ell
)),p^{(\xi )}(\cdot ,m,\ell )\right\rangle }{\bar{P}_{S}^{(\xi )}(m,\ell )} 
\notag \\
+& 1_{\mathbb{B}_{+}^{(m)}}\!(\ell )p_{B}^{(m)}(\zeta ,\ell )
\label{eq:mclassGLMB5} \\
p_{Z_{_{\!}+}}^{(\xi _{\!},\theta _{\!+\!})\!}(\zeta ,m,\ell )& =\frac{\bar{p%
}_{+}^{(\xi )}(\zeta ,m,\ell )\psi _{Z_{+}}^{(\theta _{_{\!}+}(\ell
))}(\zeta ,m,\ell )}{\bar{\psi}_{Z_{+}}^{(\xi ,\theta _{_{\!}+})}(m,\ell )}
\label{eq:mclassGLMB6}
\end{align}

\begin{equation}
\psi _{\!\{z_{1:|Z|}\}\!}^{(j)}(\zeta ,m,\ell )= \Big{\{} 
\begin{array}{ll}
\!\frac{P_{D}^{(m)}(\zeta ,\ell )g^{(m)}(z_{j\!}|_{_{\!}}\zeta ,\ell )}{%
\kappa (z_{j})+\delta _{0}[\kappa (z_{j})]}, \!\text{if }j\in \left\{
1,...,|Z|\right\} &  \\ 
\!1-P_{D}^{(m)}(\zeta ,\ell ), \;\;\;\;\;\; \text{if }j=0 & 
\end{array}%
\end{equation}

%\psi _{\!\{z_{1:|Z|}\}\!}^{(j)}(\zeta ,m,\ell )=& \left\{ \!\!%
%\begin{array}{ll}
%\!\frac{P_{D}^{(m)}(\zeta ,\ell )g^{(m)}(z_{j\!}|_{_{\!}}\zeta ,\ell )}{%
%\kappa (z_{j})+\delta _{0}[\kappa (z_{j})]}, & \!\!\!\text{if }j\in \left\{
%1,...,|Z|\right\} \\ 
%\!1-P_{D}^{(m)}(\zeta ,\ell ), & \!\!\!\text{if }j=0%
%\end{array}
\end{proposition}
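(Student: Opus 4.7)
The plan is to derive the multi-class GLMB recursion directly from the JMS-GLMB recursion of Proposition 1 by substituting the non-interacting mode conditions \eqref{eq:class_birthr}--\eqref{eq:class_trans} and showing that every mode-dependent quantity factors according to the label partitions $\mathbb{L}_+=\biguplus_{m\in\mathbb{M}}\mathbb{L}_+^{(m)}$ and $\mathbb{B}_+=\biguplus_{m\in\mathbb{M}}\mathbb{B}_+^{(m)}$. I will proceed by inductive substitution rather than by first principles: the JMS-GLMB recursion is already available, so the argument reduces to bookkeeping which factors split across modes and which do not.

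First, I would verify that under the hypothesized factorized form \eqref{eq:deltaglmb_class12} each single-object factor $p^{(\xi)}(\cdot,m,\ell)$ is supported on labels $\ell\in\mathbb{L}^{(m)}$, and therefore $\bar{P}_S^{(\xi)}(\ell)$ in \eqref{eq:JMSGLMB2} reduces to the single non-vanishing term $\bar{P}_S^{(\xi)}(m,\ell)$ when $\ell\in\mathbb{L}^{(m)}$. Plugging the mode-invariance $\vartheta(m_+|m)=\delta_m[m_+]$ into \eqref{eq:JMSGLMB6} collapses the inner sum over $m$ to the single term $m=m_+$, so the predicted kinematic density $\bar{p}_+^{(\xi)}(\zeta_+,m_+,\ell)$ is mode-pure: it vanishes unless $\ell\in\mathbb{L}^{(m_+)}$ on the survival branch or $\ell\in\mathbb{B}_+^{(m_+)}$ on the birth branch, giving \eqref{eq:mclassGLMB5}. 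The same argument collapses $\bar{\psi}_{Z_+}^{(\xi,\theta_+)}(\ell)$ in \eqref{eq:JMSGLMB4} to $\bar{\psi}_{Z_+}^{(\xi,\theta_+)}(m,\ell)$ for the unique $m$ with $\ell\in\mathbb{L}_+^{(m)}$, yielding \eqref{eq:mclassGLMB4}.

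Next, I would split the label arguments as $I^{(m)}=I\cap\mathbb{L}^{(m)}$ and $I_+^{(m)}=I_+\cap\mathbb{L}_+^{(m)}$ and exploit the identity $f^{A\uplus B}=f^{A}f^{B}$ for set-exponentials. Because the survival, birth, and updated measurement functions each evaluate on a single mode via the arguments above, the three set-exponentials in \eqref{eq:JMSGLMB1} factor cleanly: the birth product splits as $\prod_m [1-r_{B,+}]^{\mathbb{B}_+^{(m)}-I_+^{(m)}} r_{B,+}^{\mathbb{B}_+^{(m)}\cap I_+^{(m)}}$ using \eqref{eq:class_birthr}, the survival product splits analogously, and $[\bar{\psi}_{Z_+}^{(\xi,\theta_+)}]^{I_+}=\prod_m [\bar{\psi}_{Z_+}^{(\xi,\theta_+)}(m,\cdot)]^{I_+^{(m)}}$. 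Assembling these gives the class-wise weight $w_{Z_+}^{(m,I,\xi,I_+,\theta_+)}$ in \eqref{eq:mclassGLMB2}. Simultaneously, the labeled multi-object exponential $[p_{Z_+}^{(\xi,\theta_+)}]^{\mathbf{X}_+}$ factors as $\prod_m [p_{Z_+}^{(\xi,\theta_+)}]^{\mathbf{X}_+^{(m)}}$ using the disjoint union $\mathbf{X}_+=\biguplus_m \mathbf{X}_+^{(m)}$, and the distinct-label indicator satisfies $\Delta(\mathbf{X}_+)=\prod_m\Delta(\mathbf{X}_+^{(m)})$ because labels of different modes live in disjoint sets. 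Combining weight and density factorizations and rewriting $\delta_{I_+}[\mathcal{L}(\mathbf{X}_+)]=\prod_m\delta_{I_+^{(m)}}[\mathcal{L}(\mathbf{X}_+^{(m)})]$ produces the product over $m$ in \eqref{eq:GLMBactualvirtual}.

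The main obstacle is the measurement-assignment indicator $1_{\Theta_+(I_+)}(\theta_+)$: while condition 2 makes every single-object quantity mode-pure, the positive-1-1 constraint on $\theta_+$ is a global constraint linking the class-wise restrictions $\theta_+^{(m)}=\theta_+|_{\mathbb{L}_+^{(m)}}$ through the shared measurement indices $\{1,\ldots,|Z_+|\}$, so this indicator cannot be pushed inside the product over $m$. I therefore have to keep it as an outer factor in \eqref{eq:GLMBactualvirtual}, which is exactly the structure stated in the proposition. Care is also needed when defining $\xi^{(m)}=\xi|_{\mathbb{L}_0^{(m)}\times\cdots\times\mathbb{L}_k^{(m)}}$ and verifying that $\xi$ is fully recovered from the collection $\{\xi^{(m)}\}_{m\in\mathbb{M}}$, so that the sum over $\xi$ in the JMS-GLMB recursion can be re-indexed consistently with the product form; this follows from the disjointness of $\mathbb{L}_t^{(m)}$ across $m$ at each time $t$. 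Once these bookkeeping points are handled, the desired expressions \eqref{eq:mclassGLMB3}--\eqref{eq:mclassGLMB6} are read off from the corresponding JMS-GLMB expressions \eqref{eq:JMSGLMB3}, \eqref{eq:JMSGLMB5}, \eqref{eq:JMSGLMB6}, \eqref{eq:JMSGLMB7} with the mode-sum collapse applied.
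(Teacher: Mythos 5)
Your proposal is correct and follows essentially the same route as the paper: both identify the multi-class form as a standard GLMB (so that Proposition 1 applies), substitute the non-interacting conditions \eqref{eq:class_birthr}--\eqref{eq:class_trans} to collapse the mode sums, factor the set-exponentials across the disjoint label spaces, and keep the global positive-1-1 indicator $1_{\Theta_+(I_+)}(\theta_+)$ outside the product over modes. Your explicit remarks on why that indicator cannot be distributed and on recovering $\xi$ from $\{\xi^{(m)}\}$ are exactly the bookkeeping points the paper's proof handles via its equations \eqref{eq:proof_1}--\eqref{eq:proof3}.
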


\begin{proof}
Note that the $\mathbb{L}_{0}^{(m)}\times ...\times \mathbb{L}%
_{k}^{(m)},m\in \mathbb{M}$ form a partition of $\mathbb{L}_{0}\times
...\times \mathbb{L}_{k}$ , and since each $\xi ^{(m)}$ was defined as a
restrictions of\ $\xi $ over $\mathbb{L}_{0}^{(m)}\times ...\times \mathbb{L}%
_{k}^{(m)}$, $\xi $ is completely characterized by the $\xi ^{(m)},m\in 
\mathbb{M}$. By defining%
\begin{eqnarray}
\omega ^{(I,\xi )} &=&1_{{\Theta }(I)}(\xi \perp {\Theta }%
)\prod\limits_{m\in \mathbb{M}}w^{(I^{(m)},\xi ^{(m)})}  \label{eq:proof_1}
\\
p^{(\xi )}(\zeta ,m,\ell ) &=&\left[ p^{(\xi ^{(m)})}(\zeta ,m,\ell )\right]
^{1_{\mathbb{L}^{(m)}}(\ell )}  \label{eq:proof_3}
\end{eqnarray}%
it can be seen that (\ref{eq:deltaglmb_class12}) is a GLMB of the form (\ref%
{eq:GLMB}) since 
\begin{eqnarray*}
\delta _{_{\!}I}[\mathcal{L}(\mathbf{X}_{\!})] &=&\prod\limits_{m\in \mathbb{%
M}}\delta _{_{\!}I^{(m)}}[\mathcal{L}(\mathbf{X}_{\!}^{(m)})] \\
\left[ p^{(\xi )}\right] ^{\mathbf{X}} &=&\left[ p^{(\xi )}\right]
^{\biguplus\nolimits_{m\in \mathbb{M}}\mathbf{X}^{(m)}}=\prod\limits_{m\in 
\mathbb{M}}\left[ p^{(\xi ^{(m)})}{}_{\!}\right] ^{\mathbf{X}^{(m)}}.
\end{eqnarray*}%
Thus by applying Proposition 1, the hybrid multi-object filtering density at
time $k+1$ is given by (\ref{eq:JMSGLMB}-\ref{eq:JMSGLMB8}). Substituting (%
\ref{eq:proof_1}), (\ref{eq:proof_3}), (\ref{eq:class_birthr}-\ref%
{eq:class_trans}) into (\ref{eq:JMSGLMB}-\ref{eq:JMSGLMB8}), decomposing%
\begin{eqnarray}
\mathbf{X}_{+} &=&\biguplus\nolimits_{m\in \mathbb{M}}\mathbf{X}_{+}^{(m)}
\label{eq:proof1} \\
\omega _{Z_{_{\!}+}}^{(_{\!}I_{\!},\xi ,I_{\!+\!},\theta _{\!+\!})} &=&1_{{%
\Theta }_{\!+\!}(I_{+})}(\theta _{\!+\!})\prod\limits_{m\in \mathbb{M}%
}w_{Z_{_{\!}+}}^{(m,I_{\!}^{(m)},\xi ^{(m)},I_{\!+\!}^{(m)},\theta
_{\!+\!}^{(m)})}  \label{eq:proof2} \\
p_{Z_{_{\!}+_{\!}}}^{(_{\!}\xi ,\theta _{\!+\!})}{} &=&\left(
p_{Z_{_{\!}+_{\!}}}^{(_{\!}\xi ^{(m)},\theta _{\!+\!}^{(m)})}{}_{\!}\right)
^{1_{\mathbb{L}_{+}^{(m)}}(\ell )}  \label{eq:proof3}
\end{eqnarray}%
and rearranging yields (\ref{eq:GLMBactualvirtual}). Note that (\ref%
{eq:class_trans}) ensures that $m_{+}=m$.
\end{proof}

Given a GLMB filtering density of the multi-class form (\ref%
{eq:deltaglmb_class12}), the GLMB filtering density for class $c\in \mathbb{M%
}$, can be obtained by marginalizing the other classes according to the
following proposition.

\begin{proposition}
For the multi-class GLMB (\ref{eq:deltaglmb_class12}), the marginal GLMB for
class $c$ is given by 
\begin{equation*}
\mathbf{\pi }\left( \mathbf{X}^{(c)}\right) =\Delta (\mathbf{X}%
^{(c)})\sum_{\xi ,I}\omega ^{(I,\xi )}\delta _{_{\!}I^{(c)}}[\mathcal{L}(%
\mathbf{X}_{\!}^{(c)})]\left[ p^{(\xi ^{(c)})}\right] ^{\mathbf{X}^{(c)}}
\end{equation*}
\end{proposition}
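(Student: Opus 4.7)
The plan is to obtain $\mathbf{\pi}(\mathbf{X}^{(c)})$ directly from $\mathbf{\pi}(\mathbf{X})$ by performing a set (FISST) integral that marginalizes out all classes $m \neq c$, and then exploiting the fact that each term in (\ref{eq:deltaglmb_class12}) already factorizes across classes on disjoint label spaces.

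First, I would write the marginal as
\begin{equation*}
\mathbf{\pi}\!\left(\mathbf{X}^{(c)}\right) = \int \mathbf{\pi}\!\left(\mathbf{X}^{(c)} \uplus \biguplus_{m \neq c} \mathbf{X}^{(m)}\right) \prod_{m \neq c} \delta \mathbf{X}^{(m)}.
\end{equation*}
Because the label spaces $\mathbb{L}^{(m)}$ are disjoint and $\mathbf{X}^{(m)}$ lives in $\mathbb{X} \times \mathbb{M} \times \mathbb{L}^{(m)}$, the combined set integral decomposes into separate set integrals over each class, and each integrand in (\ref{eq:deltaglmb_class12}) splits as $\prod_{m}\mathbf{\pi}^{(I^{(m)},\xi^{(m)})}(\mathbf{X}^{(m)})$. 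Hence the class-$c$ factor passes outside the integrals, leaving
\begin{equation*}
\mathbf{\pi}\!\left(\mathbf{X}^{(c)}\right) = \sum_{\xi, I} 1_{\Theta(I)}(\xi \perp \Theta)\,\mathbf{\pi}^{(I^{(c)},\xi^{(c)})}(\mathbf{X}^{(c)}) \prod_{m \neq c} \int \mathbf{\pi}^{(I^{(m)},\xi^{(m)})}(\mathbf{X}^{(m)})\, \delta \mathbf{X}^{(m)}.
\end{equation*}

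Next, I would evaluate each per-class set integral. For any $(I,\xi)$,
\begin{equation*}
\int \mathbf{\pi}^{(I,\xi)}(\mathbf{X})\,\delta \mathbf{X} = w^{(I,\xi)} \sum_{n \geq 0}\frac{1}{n!} \sum_{(\ell_1,\dots,\ell_n)} \delta_{I}[\{\ell_1,\dots,\ell_n\}] \prod_{i=1}^{n} \int p^{(\xi)}(x,\ell_i)\,dx,
\end{equation*}
where $\Delta(\mathbf{X})$ is absorbed into the constraint that the labels $\ell_1,\dots,\ell_n$ are distinct. The Kronecker delta forces $n = |I|$ and the labels to be exactly a permutation of $I$, which (combined with the $1/n!$) yields $\int \mathbf{\pi}^{(I,\xi)}(\mathbf{X})\,\delta \mathbf{X} = w^{(I,\xi)}$ since each $p^{(\xi)}(\cdot,\ell)$ is a normalized density.

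Substituting this into the previous display gives
\begin{equation*}
\mathbf{\pi}\!\left(\mathbf{X}^{(c)}\right) = \Delta(\mathbf{X}^{(c)}) \sum_{\xi, I} \left[1_{\Theta(I)}(\xi \perp \Theta)\prod_{m \in \mathbb{M}} w^{(I^{(m)},\xi^{(m)})}\right] \delta_{I^{(c)}}[\mathcal{L}(\mathbf{X}^{(c)})] \left[p^{(\xi^{(c)})}\right]^{\mathbf{X}^{(c)}},
\end{equation*}
and the bracketed quantity is precisely $\omega^{(I,\xi)}$ by the definition (\ref{eq:proof_1}) from the proof of Proposition~2, so we recover the claimed form.

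The only real subtlety, and what I would spend most care on, is the bookkeeping in the set-integral decomposition: justifying that a set integral over the union of disjoint label-space sectors equals the iterated product of set integrals on each sector, and that no extra combinatorial factors appear (the $1/n!$ factors in each sector cancel exactly against the label-ordering redundancy because $\Delta(\mathbf{X})$ enforces distinct labels within each $\mathbf{X}^{(m)}$). Once that factorization is in place, the rest reduces to recognizing $\omega^{(I,\xi)}$ and reading off the result.
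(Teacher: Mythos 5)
Your proposal is correct and follows essentially the same route as the paper's own proof: marginalize the non-$c$ classes by a set integral, use disjointness of the label spaces to factor the integral across classes, evaluate each per-class set integral of $\mathbf{\pi}^{(I^{(m)},\xi^{(m)})}$ to $w^{(I^{(m)},\xi^{(m)})}$, and reassemble the weights into $\omega^{(I,\xi)}$ via (\ref{eq:proof_1}). Your explicit expansion of the set integral (the $1/n!$ bookkeeping and the Kronecker delta forcing the labels to enumerate $I$) is a detail the paper simply asserts, but it changes nothing in substance.
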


\begin{proof}
Note that%
\begin{eqnarray*}
&&\!\!\!\!\!\!\!\!\!\!\!\!\!\!\int \mathbf{\pi }^{(I^{(m)},\xi ^{(m)})}(%
\mathbf{X}_{\!\!}^{(m)})\delta \mathbf{X}^{(m)} \\
&=&\!\int \!\Delta (\mathbf{X}^{(m)\!})w^{(I^{(m)\!},\xi ^{(m)})}\delta
_{_{\!}I^{(m)\!}}[\mathcal{L}(\mathbf{X}_{\!}^{(m)})]\!\left[ p^{(\xi )}%
\right] ^{\!\mathbf{X}^{\!(m)}}\!\delta \mathbf{X}^{(m)} \\
&=&\!w^{(I^{(m)},\xi ^{(m)})}.
\end{eqnarray*}%
Since, the $\mathbf{X}^{(m)}$, $m\in \mathbb{M}$ are disjoint, 
\begin{eqnarray*}
\mathbf{\pi }(\mathbf{X}^{(c)})\!\! &=&\!\!\int \mathbf{\pi }\left(
\tbiguplus\limits_{m\in \mathbb{M}}\mathbf{X}^{(m)}\right) \delta \left(
\tbiguplus\limits_{m\in \mathbb{M}-\{c\}}\mathbf{X}^{(m)}\right) \\
\!\! &=&\!\!\int \sum_{\xi ,I}\!1_{{\Theta }(I)}(\xi \!\perp \!{\Theta }%
)\times \\
&&\!\!\prod\limits_{m\in \mathbb{M}}\mathbf{\pi }^{(I^{(m)},\xi ^{(m)})}(%
\mathbf{X}_{\!\!}^{(m)})\delta \left( \tbiguplus\limits_{m\in \mathbb{M}%
-\{c\}}\mathbf{X}^{(m)}\right) \\
\!\! &=&\!\!\sum_{\xi ,I}\!1_{{\Theta }(I)}(\xi \!\perp \!{\Theta })\mathbf{%
\pi }^{(I^{(c)},\xi ^{(c)})}(\mathbf{X}_{\!\!}^{(c)})\times \\
&&\!\!\prod\limits_{m\in \mathbb{M}-\{c\}}\int \mathbf{\pi }^{(I^{(m)},\xi
^{(m)})}(\mathbf{X}_{\!\!}^{(m)})\delta \mathbf{X}^{(m)} \\
\!\! &=&\!\!\sum_{\xi ,I}\!1_{{\Theta }(I)}(\xi \!\perp \!{\Theta })\mathbf{%
\pi }^{(I^{(c)},\xi ^{(c)})}(\mathbf{X}_{\!\!}^{(c)})\times \\
&&\!\!\prod\limits_{m\in \mathbb{M}-\{c\}}\!\!\!\!w^{(I^{(m)},\xi ^{(m)})}.
\\
\!\! &=&\!\!\Delta (\mathbf{X}^{(c)})\sum_{\xi ,I}\omega ^{(I,\xi )}\delta
_{_{\!}I^{(c)}}[\mathcal{L}(\mathbf{X}_{\!}^{(c)})]\left[ p^{(\xi ^{(c)})}%
\right] ^{\mathbf{X}^{(c)}}.
\end{eqnarray*}
\end{proof}

\section{GLMB\ Filtering with Unknown Background} \label{sec_UBGLMB}

Clutter or false detections are generally understood as detections that do
not correspond to any object \cite{BSF88, BP99, Mah07, mahler2014advances}.
Since the number false detections and their values are random, clutter is
usually modelled by RFSs in the literature \cite{MahlerPHD,Mah07,
mahler2014advances}. The simplest and the most commonly used clutter model
is the Poisson RFS \cite{MahlerPHD}, in most cases, with a uniform intensity
over the surveillance region. Alternatively clutter can be treated as
detections originating from \emph{clutter generators}--objects that are not
of interest to the tracker \cite{MVV10}, \cite{Mahler10a, Mahler10b, VVHM13}.

In \cite{MVV10} a CPHD recursion was derived to propagate separate intensity
functions for clutter generators and objects of interest, and their
collective cardinality distribution of the hybrid multi-object state.
Similarly, in\ \cite{VVHM13} analogous multi-Bernoulli recursions were
derived to propagate the disjoint union of objects of interest and clutter
generators. In this work we show that the multi-class GLMB filter is an
effective multi-object object tracker that can operate under unknown
background by learning the clutter and detection model on-the-fly.

This section details an on-line multi-object tracker that operates in
unknown clutter rate and detection profile. In particular we propose a GLMB
clutter model in subsection \ref{subsec_hybridmodel} by treating clutter as
a special class of objects with completely uncertain dynamics, and describe
a dedicated GLMB recursion for propagating the joint filtering density of
clutter generators and objects of interest. Implementation details are given
in subsection \ref{subsec_implementation}. Extension of the proposed
algorithm to accommodate unknown detection profile is described in
subsection \ref{subsec_UPD}.
\vspace{-3mm}
\subsection{GLMB Joint Object-Clutter Model} \label{subsec_hybridmodel}

We propose to model the finite set of \emph{clutter generators} and \emph{%
objects of interest} as two non-interacting classes of objects, and
propagate this so-called \emph{hybrid multi-object} filtering density
forward in time via the multi-class GLMB recursion. The GLMB filtering
density of the hybrid multi-object state captures all relevant statistical
information on the objects of interest as well as the clutter generators.
What distinguishes the objects of interest from clutter generators is that
the former have relatively predictable dynamics whereas the latter have
completely random dynamics.

In the hybrid multi-object model, the Poisson clutter intensity $\kappa $ is
identically 0 and each detection is generated from either a clutter
generator or an object of interest, which constitute, respectively, the two
modes (or classes) 0 and 1 of the mode space $\mathbb{M}=\{0,1\}$. Since the
classes are non-interacting, there are no switchings between objects of
interest and clutter generators. Moreover, the label space for new born
clutter generators $\mathbf{\mathbb{B}}^{(0)}$ and the label space for new
born objects of interest $\mathbf{\mathbb{B}}^{(1)}$ are disjoint and the
LMB birth parameters are given by 
\begin{eqnarray*}
r_{B,+}(\ell _{+}) &=&r_{B,+}^{(0)}1_{\mathbf{\mathbb{B}}_{+}^{(0)}}(\ell
_{+})+r_{B,+}^{(1)}1_{\mathbf{\mathbb{B}}_{+}^{(1)}}(\ell _{+}), \\
p_{B,+}^{(m_{+})}(\zeta _{+},\ell _{+}) &=&p_{B,+}^{(m_{+})}(\zeta _{+})1_{%
\mathbf{\mathbb{B}}_{+}^{(m_{+})}}(\ell _{+})
\end{eqnarray*}%
Since clutter are distinguishable from targets by their completely random
dynamics, each clutter generator has a transition density independent of the
previous state and a uniform measurememt likelihood in the observation
region with volume $V$%
\begin{eqnarray*}
f_{+}^{(0)}(\zeta _{+}|\zeta ,\ell ) &=&s(\zeta _{+}) \\
g^{(0)}(z|\zeta ,\ell ) &=&u(z)V^{-1}
\end{eqnarray*}%
Note that the labels of clutter generators can effectively be ignored since
it is implicit that their labels are distinct but are otherwise
uninformative. Further, for Gaussian implementations it is assumed that the
survival and detection probabilities for clutter generators are state
independent%
\begin{eqnarray*}
P_{S}^{(0)}(\zeta ,\ell ) &=&P_{S}^{(0)} \\
P_{D}^{(0)}(\zeta ,\ell ) &=&P_{D}^{(0)}
\end{eqnarray*}

Applying the multi-class GLMB recursion to this model, it can be easily seen
that all clutter generators are functionally identical (from birth through
prediction and update)%
\begin{equation*}
p_{B}^{(0)}(\zeta ,\ell )=\bar{p}_{+}^{(\xi ^{(0)})}(\zeta ,0,\ell
)=p_{Z_{+}}^{(\xi ^{(0)},\theta _{+}^{(0)})}(\zeta ,0,\ell )=s(\zeta )
\end{equation*}%
and that the weight update for clutter generators reduces to%
\begin{align}
& \!\!\!\!\!\!w_{Z_{_{\!}+}}^{(0,_{\!}I_{\!}^{(0)},\xi
^{(0)},I_{\!+\!}^{(0)},\theta _{\!+\!}^{(0)})}  \notag \\
=& \left[ 1-P_{S}^{(0)}\right] ^{|I^{(0)}-I_{+}^{(0)}|}\left[ P_{S}^{(0)}%
\right] ^{|I^{(0)}\cap I_{+}^{(0)}|}\times  \notag \\
& \left[ 1-r_{B,+}^{(0)}\right] ^{|\mathbb{B}_{+}^{(0)}-I_{+}^{(0)}|}\left[
r_{B,+}^{(0)}\right] ^{|\mathbb{B}_{{+}}^{(0)}\cap I_{+}^{(0)}|}\times 
\notag \\
& \left[ 1-P_{D,+}^{(0)}\right] ^{|\{\ell \in I_{+}^{(0)}:\theta
_{+}^{(0)}(\ell )=0\}|}\left[ P_{D,+}^{(0)}V^{-1}\right] ^{|\{\ell \in
I_{+}^{(0)}:\theta _{+}^{(0)}(\ell )>0\}|}  \label{eq:CGprop}
\end{align}%
Thus propagation of clutter generators within each GLMB\ component reduces
to propagation of their weights 
\begin{equation*}
w_{Z_{+}}^{(0,I_{+}^{(0)},\xi ^{(0)},\theta
_{+}^{(0)})}=\tsum\nolimits_{I^{(0)}}w^{(I^{(0)},\xi
^{(0)})}w_{Z_{+}}^{(0,I^{(0)},\xi ^{(0)},I_{+}^{(0)},\theta _{+}^{(0)})}.
\end{equation*}
\vspace{-3mm}
\subsection{Implementation} \label{subsec_implementation}

The key challenge in the implemention of the multi-class GLMB filter is the
propagation of the GLMB components, which involves, for each parent GLMB
component $(I,\xi )$, searching the space $\mathcal{F}(\mathbb{L}%
_{\!+})\times \Theta _{+}$ to find a set of $(I_{\!+},\theta _{\!+\!})$ such
that the children components $(I_{\!},\xi ,I_{\!+\!},\theta _{\!+\!})$ have
significant weights $\omega _{Z_{_{\!}+}}^{(_{\!}I_{\!},\xi
,I_{\!+\!},\theta _{\!+\!})}$. In \cite{VVH_GLMB17}, the set of $%
(I_{\!+},\theta _{\!+\!})$ is generated from a Gibbs sampler with stationary
distribution is constructed so that only valid children components have
positive probabilities, and those with high weights are more likely to be
sampled than those with low weights. A direct application of this approach
to generate new children would, however, be expensive, for the following
reasons.

Let $P=|I|$, $P^{(0)}=|I^{(0)}|$, $P^{(1)}=|I^{(1)}|$ and $M=|Z_{+}|$.
According to \cite{VVH_GLMB17} the complexity of the joint prediction and
update via Gibbs sampling with $T$ iterations is $\mathcal{O}(TP^{2}M)$.
Since the present formulation treat clutter as objects, the total number of
hypothesized objects $P\geq P^{(0)}\geq M$, and hence the complexity is at
least $\mathcal{O}(TM^{3})$, which is cubic in the number of measurements
and results in a relatively inefficient implementation. This occurs because
the majority of the computational effort is spent on clutter generators even
though they are not of interest. This problem is exacerbated as the clutter
rate increases.

In the following we propose a more efficient implementation by focusing on
the filtering density of the objects of interest instead of the hybrid
multi-object filtering density. Observe that given any $(I_{+}^{(1)},\theta
_{\!+\!}^{(1)})\in \mathcal{F}(\mathbb{L}_{\!+}^{(1)})\times \Theta
_{+}^{(1)}$, and $(I_{+}^{(0)},\theta _{\!+\!}^{(0)})\in \mathcal{F}(\mathbb{%
L}_{\!+}^{(0)})\times \Theta _{+}^{(0)}$, where $\Theta _{+}^{(m)}$\ denotes
the space of positive 1-1 maps from $\mathbb{L}_{+}^{(m)}$ to $\{0,1,...,M\}$%
, we can uniquely define%
\begin{equation}
(I_{\!+},\theta _{\!+\!})\triangleq (I_{+}^{(1)}\uplus I_{+}^{(0)},1_{%
\mathbb{L}_{\!+}^{(1)}}\theta _{\!+\!}^{(1)}+1_{\mathbb{L}%
_{\!+}^{(0)}}\theta _{\!+\!}^{(0)}).  \label{eq:composition}
\end{equation}%
Further, the weight of the resulting component $(I_{\!},\xi
,I_{\!+\!},\theta _{\!+\!})$ is 
\begin{equation}
\omega _{Z_{_{\!}+}}^{(_{\!}I_{\!},\xi ,I_{\!+\!},\theta _{\!+\!})}=1_{{%
\Theta }(I_{\!+\!})}(\theta _{\!+\!})w_{Z_{+}\!}^{\!(0,I_{\!}^{(0)},\xi
^{(0)},I_{\!+\!}^{(0)},\theta
_{\!+\!}^{\!(0)})\!}w_{Z_{+}\!}^{\!(1,I_{\!}^{(1)},\xi
^{(1)},I_{\!+\!}^{(1)},\theta _{\!+\!}^{\!(1)})\!}
\label{eq:componentweight}
\end{equation}%
see Proposition 2 (\ref{eq:proof2}). Note that if $\theta _{\!+\!}$ is not a
valid association map then $1_{{\Theta }(I_{\!+\!})}(\theta _{\!+\!})=0$,
and hence the weight is zero.

For each parent GLMB component $(I,\xi )$, rather than searching for $%
(I_{\!+},\theta _{\!+\!})$ with significant $\omega
_{Z_{_{\!}+}}^{(_{\!}I_{\!},\xi ,I_{\!+\!},\theta _{\!+\!})}$ in the space $%
\mathcal{F}(\mathbb{L}_{\!+})\times \Theta _{+}$, we:

\begin{enumerate}
\item seek $(I_{+}^{(1)},\theta _{\!+\!}^{(1)})$\ with significant $%
w_{Z_{_{\!}+}}^{(_{\!}1,I_{\!}^{(1)},\xi ^{(1)},I_{\!+\!}^{(1)},\theta
_{\!+\!}^{(1)})}$\ from the smaller space $\mathcal{F}(\mathbb{L}%
_{\!+}^{(1)})\times \Theta _{+}^{(1)}$;

\item for each such $(I_{+}^{(1)},\theta _{\!+\!}^{(1)})$ find the $%
(I_{+}^{(0)},\theta _{\!+\!}^{(0)})$ with the best $w_{Z_{_{\!}+}}^{(_{%
\!}0,I_{\!}^{(0)},\xi ^{(0)},I_{\!+\!}^{(0)},\theta _{\!+\!}^{(0)})}$,
subject to the constraint 
\begin{equation}
1_{\mathbb{L}_{\!+}^{(1)}}\theta _{\!+\!}^{(1)}+1_{\mathbb{L}%
_{\!+}^{(0)}}\theta _{\!+\!}^{(0)}\in {\Theta }(I_{+}^{(1)}\uplus
I_{+}^{(0)});  \label{eq:constraint}
\end{equation}

\item construct $(I_{\!+},\theta _{\!+\!})$ from $(I_{+}^{(1)},\theta
_{\!+\!}^{(1)})$ and $(I_{+}^{(0)},\theta _{\!+\!}^{(0)})$ via (\ref%
{eq:composition}) and compute the corresponding weight via (\ref%
{eq:componentweight}).
\end{enumerate}

Due to the constraint \ref{eq:constraint}, $1_{{\Theta }(I_{\!+\!})}(\theta _{\!+\!})=1$, and
hence, it follows from (\ref{eq:componentweight}) that the resulting GLMB
component $(I_{\!},\xi ,I_{\!+\!},\theta _{\!+\!})$ also has significant
weight.

The advantage of this strategy is two fold:

\begin{itemize}
\item searching over a much smaller space $\mathcal{F}(\mathbb{L}%
_{\!+}^{(1)})\times \Theta _{+}^{(1)}$ results in a linear complexity in the
measurements $\mathcal{O}(T(P^{(1)})^{2}M)$ since typically $P^{(1)}<<M$; 

\item finding $(I_{+}^{(0)},\theta _{\!+\!}^{(0)})$ with the best weight
subject to the constraint $\theta _{\!+\!}\in {\Theta }(I_{\!+\!})$ is
straight forward and requires miminal computation.
\end{itemize}
\vspace{-3mm}
\subsection{Propagating Objects of Interest}

One way to generate significant $(I_{+}^{(1)},\theta _{\!+\!}^{(1)})$\ is to
design a Gibbs sampler with stationary distribution $w_{Z_{_{\!}+}}^{(_{%
\!}1,I_{\!}^{(1)},\xi ^{(1)},I_{\!+\!}^{(1)},\theta _{\!+\!}^{(1)})}$.
However, this approach requires computing the hybrid multi-object density,
which we try to avoid in the first place.

A much more efficient alternative is to treat the multi-Bernoulli clutter as
Poisson with matching intensity, and apply the standard GLMB filter (the
JMS-GLMB filter (\ref{eq:JMSGLMB}) with a single-mode), where the Gibbs
sampler \cite{Murty68} (or Murty's algorithm \cite{Cassella_Gibbs92}) can be
used to obtain significant $(I_{+}^{(1)},\theta _{\!+\!}^{(1)})$ \cite%
{VVH_GLMB17}. Since there are $|I^{(0)}| $ clutter generators from the
previous time with survival probability $P_{S}^{(0)}$, and $|\mathbf{\mathbb{%
B}}_{+}^{(0)}|$ clutter birth with probability $r_{B,+}^{(0)}$, the
predicted clutter intensity is given by $\hat{\kappa}%
_{+}=(P_{S}^{(0)}|I^{(0)}|+r_{B,+}^{(0)}|\mathbf{\mathbb{B}}%
_{+}^{(0)}|)P_{D,+}^{(0)}V^{-1}$. Note that a Poisson RFS has larger
variance on the number of clutter points than a multi-Bernoulli with
matching intensity. Hence, in treating clutter as a Poisson RFS, we are
effectively tempering with the clutter model to induce the Gibbs sampler (or
Murty's algorithm) to generate more diverse components \cite{VVH_GLMB17}.

Following \cite{VVH_GLMB17}, let us enumerate $Z_{+}=\{z_{1:M}\}$, $%
I^{(1)}=\{\ell _{1:R}\}$, and $\mathbb{B}_{+}^{(1)}=\{\ell _{R+1:P}\}$. The $%
(I_{\!+}^{(1)},\theta _{\!+\!}^{(1)})\in $ $\mathcal{F}(\mathbb{L}%
_{+}^{(1)})\times \Theta (I_{+}^{(1)})$ at time $k+1$ with significant
weights are determined by solving a ranked assignment problem with cost
matrix $[\eta _{i}^{(\xi ^{(1)})}(j)]$, $i=1:P$,$~j=-1:M$, where%
\begin{equation*}
\eta _{i}^{(\xi ^{(1)})\!}(j)\!=\!\left\{ \!\!%
\begin{array}{ll}
1-\bar{P}_{S\!}^{(\xi ^{(1)})}(1,\ell _{i}) & \!\!\ell _{i}\in I^{(1)},~j<0
\\ 
\bar{P}_{S}^{(\xi ^{(1)})}(1,\ell _{i})\bar{\psi}_{Z_{+}}^{(\xi
^{(1)},\theta _{+}^{(1)})\!}(1,\ell _{i}) & \!\!\ell _{i}\in I^{(1)},~j\geq 0
\\ 
1-r_{B,+}(\ell _{i}) & \!\!\ell _{i}\in \mathbb{B}_{+}^{(1)},~j<0 \\ 
r_{B,+}(\ell _{i})\bar{\psi}_{Z_{+}}^{(\xi ^{(1)},\theta
_{+}^{(1)})\!}(1,\ell _{i}) & \!\!\ell _{i}\in \mathbb{B}_{+}^{(1)},~j\geq 0%
\end{array}%
\right.
\end{equation*}%
\begin{eqnarray*}
\bar{\psi}_{Z_{+}}^{(\xi ^{(1)},\theta _{+}^{(1)})}(1,\ell )\!\!
&=&\!\!\left\langle \bar{p}_{+}^{(\xi ^{(1)})}(\cdot ,1,\ell ),\psi
_{Z_{+}}^{(\theta _{+}^{(1)}(\ell ))}(\cdot ,1,\ell )\right\rangle \\
\psi _{Z_{+}}^{(j)}(\zeta ,1,\ell )\!\! &=&\!\!\left\{ \!%
\begin{array}{ll}
\frac{P_{D,+}^{(1)}(\zeta ,\ell )g_{+}^{(1)}(z_{j}|\zeta ,\ell )\!}{\hat{%
\kappa}_{+}}, & \!\!\text{if }j\in \left\{ 1,...,M\right\} \\ 
1-P_{D,+}^{(1)}(\zeta ,\ell ), & \!\!\text{if }j=0%
\end{array}%
\right.
\end{eqnarray*}%
Such a ranked assignment problem can be solved by Murty's algorithm or the
Gibbs sampler given in Section III-D \cite{VVH_GLMB17}.
\vspace{-3mm}
\subsection{Propagating Clutter Generators}

Given $(I_{\!+}^{(1)},\theta _{\!+\!}^{(1)})$ pertaining to the objects of
interest, we proceed to determine $(I_{\!+}^{(0)},\theta _{\!+\!}^{(0)})$
pertaining to clutter generators, which maximizes $\omega
_{Z_{_{\!}+}}^{(_{\!}0,I_{\!}^{(0)},\xi ^{(0)},I_{\!+\!}^{(0)},\theta
_{\!+\!}^{(0)})}$ where $I_{\!+\!}^{(0)}\subseteq $ $_{\!}I_{\!}^{(0)}\cup 
\mathbb{B}_{+}^{(0)}$ and $\theta _{\!+\!}^{(0)}:I_{\!+\!}^{(0)}\rightarrow
\{0:M\}$ subject to constraint (\ref{eq:constraint}).

Denote by $Z_{+}^{(1)}\subseteq Z_{+}$ the set of measurements assigned to $%
I_{\!+}^{(1)}$ by $\theta _{\!+\!}^{(1)}$ and the remaining set of
measurements $Z_{+}-$ $Z_{+}^{(1)}$, due to clutter generators, by $%
Z_{+}^{(0)}$. Recall that clutter generators are functionally identical
except in label and that their propagation reduces to calculating their
corresponding weights (\ref{eq:CGprop}). Let $N_{S}^{(0)}=|I^{(0)}\cap
I_{+}^{(0)}|$ and $N_{B,+}^{(0)}=|\mathbb{B}_{{+}}^{(0)}\cap I_{+}^{(0)}|$
denote the counts of surviving and new born clutter generators respectively.
Then $|I^{(0)}-I_{+}^{(0)}|=|I^{(0)}|-N_{S}^{(0)}$ and $|\mathbb{B}%
_{+}^{(0)}-I_{+}^{(0)}|=|\mathbb{B}_{+}^{(0)}|-N_{B,+}^{(0)}$. Observe that
the count $|Z_{+}^{(0)}|$ of clutter must equal the number of detections of
clutter generators according to $(I_{\!+}^{(0)},\theta _{\!+\!}^{(0)})$,
i.e. $|Z_{+}^{(0)}|=|\{\ell \in I_{+}^{(0)}:\theta _{+}^{(0)}(\ell )>0\}|$
and hence the count of misdetections of clutter generators according to $%
(I_{\!+}^{(0)},\theta _{\!+\!}^{(0)})$ is $%
N_{S}^{(0)}+N_{B,+}^{(0)}-|Z_{+}^{(0)}|=|\{\ell \in I_{+}^{(0)}:\theta
_{+}^{(0)}(\ell )=0\}|$. Consequently the weight (\ref{eq:CGprop}) can be
rewritten as 
\begin{align*}
& \!\!\!\!\!\!\!\!\omega _{Z_{_{\!}+}}^{(0,I_{\!}^{(0)},\xi
^{(0)},I_{\!+\!}^{(0)},\theta _{\!+\!}^{(0)})}\! \\
=& \left[ 1-P_{S}^{(0)\!}\right] ^{\!|I^{(0)\!}|-N_{S}^{(0)}}\!\!\!\left[
P_{S}^{(0)\!}\right] ^{\!N_{S}^{(0)}}\!\!\!\left[ 1-r_{B,+\!}^{(0)}\right]
^{\!|\mathbb{B}_{+}^{(0)}|-N_{B,+}^{(0)}}\times \\
& \left[ r_{B,+\!}^{(0)}\right] ^{\!N_{B,+}^{(0)}}\left[ 1-P_{D,+}^{(0)\!}%
\right] ^{\!N_{S}^{(0)}+N_{B,+}^{(0)}-|Z_{+}^{(0)\!}|}\!\left[
P_{D,+}^{(0)\!}V^{-1}\right] ^{\!|Z_{+}^{(0)}|} \\
\propto & \left[ \!\frac{P_{S}^{(0)}(1-P_{D,+}^{(0)})}{1-P_{S}^{(0)}}\!%
\right] ^{\!N_{S}^{(0)}}\!\left[ \!\frac{r_{B,+}^{(0)}(1-P_{D,+}^{(0)})}{%
1-r_{B,+}^{(0)}}\!\right] ^{\!N_{B,+}^{(0)}}
\end{align*}%
Thus seeking the best $(I_{\!+}^{(0)},\theta _{\!+\!}^{(0)})$ subject to
constraint (\ref{eq:constraint}) reduces to seeking the best $%
(N_{S}^{(0)},N_{B,+}^{(0)})$ subject to the constraints $0\leq
N_{S}^{(0)}\leq $ $|I^{(0)}|$, $0\leq N_{B,+}^{(0)}\leq $ $|\mathbb{B}%
_{+}^{(0)}|$ and $N_{S}^{(0)}+N_{B,+}^{(0)}\leq |Z_{+}^{(0)}|$.
\vspace{-3mm}
\subsection{Linear Gaussian Update Parameters}

Let $\mathcal{N}(\cdot ;\bar{\zeta},P)$ denotes a Gaussian density with mean 
$\bar{\zeta}$ and covariance $P$. Then for a linear Gaussian multi-object
model of the objects of interest $P_{S}^{(1)}(\zeta ,\ell )=P_{S}^{(1)}$, $%
P_{D}^{(1)}(\zeta ,\ell )=P_{D}^{(1)}$, $f_{+}^{(1)}(\zeta _{+}|\zeta ,\ell
)=\mathcal{N}(\zeta _{+};F\zeta ,Q)$, $g^{(1)}(z|\zeta ,\ell )=\mathcal{N}%
(z;H\zeta ,R)$, and $p_{B,+}^{(1)}(\zeta _{+})=\mathcal{N}(\zeta _{+};\bar{%
\zeta}_{+}^{(1)},P_{+}^{(1)})$, where $F$ is the transition matrix, $Q$ is
the process noise covariance, $H$ is the observation matrix, $R$ is the
observation noise covariance, $\bar{\zeta}_{+}^{(1)}$ and $P_{+}^{(1)}$ are
the mean and covariance of the kinematic state of a new object of interest.
If each current density of an object of interest is a Gaussian of the form%
\begin{equation}
p^{(\xi ^{(1)})_{\!}}(\zeta ,1,\ell )=\mathcal{N}(\zeta ;\bar{\zeta}^{(\xi
^{(1)})}(\ell ),P^{(\xi ^{(1)})}(\ell ))  \label{eq:GM_single_pdf}
\end{equation}%
then the terms (\ref{eq:mclassGLMB4}), (\ref{eq:mclassGLMB5}), (\ref%
{eq:mclassGLMB6}) can be computed analytically using the following
identities: 
\begin{equation*}
\int \mathcal{N}(\zeta ;\bar{\zeta},P)\mathcal{N}(\zeta _{+};F\zeta
,Q)d\zeta =\mathcal{N}(\zeta _{+};F\bar{\zeta},FPF^{T}+Q),
\end{equation*}%
\begin{eqnarray*}
&&\!\!\!\!\!\!\!\!\!\!\!\!\!\!\!\!\!\!\!\!\!\!\!\!\!\!\!\!\!\!\mathcal{N}%
(\zeta ;\bar{\zeta},P)\mathcal{N}(z;H\zeta ,R) \\
&=&q(z)\mathcal{N}(\zeta ;\bar{\zeta}+K(z-H\bar{\zeta}),[I-KH]P), \\
q(z) &=&\mathcal{N}(z;H\bar{\zeta},HPH^{T}+R), \\
K &=&PH^{T}\left[ HPH^{T}+R\right] ^{-1}.
\end{eqnarray*}
\vspace{-4mm}
\subsection{Extension to Unknown Detection Probability}

\label{subsec_UPD}

Following the approach in \cite{MVV10}, to jointly estimate an unknown
detection probability, we augment a variable $a\in \lbrack 0,1]$ to the
state, i.e. $\mathbf{x}=(\zeta ,m,a,\ell )$, so that 
\begin{equation}
P_{D}^{(m)}(\zeta ,a,\ell )=a.
\end{equation}%
Additionally, in this model $g^{(m)}(z|\zeta ,a,\ell )=g^{(m)}(z|\zeta ,\ell
)$, $P_{S}^{(m)}(\zeta ,a,\ell )=P_{S}^{(m)}$, $p_{B,+}^{(1)}(\zeta
_{+},a_{+})=p_{B,+}^{(1)}(\zeta _{+})p_{B,+}^{(1)}(a_{+})$, and the
transition density is given by 
\begin{equation}
f_{+}^{(m)}(\zeta _{+},a_{+}|\zeta ,a,\ell )=f_{+}^{(m)}(\zeta _{+},|\zeta
,\ell )f_{+}^{(\Delta )}(a_{+}|a).
\end{equation}%
The unknown detection probability is then modelled on a Beta distribution $%
\beta (\cdot ,s,t)$ where $s$ and $t$ are positive shape parameters and the
single-object state density is modelled by a Beta-Gaussian density:%
\begin{eqnarray*}
&&\!\!\!\!\!\!\!p^{(\xi ^{(1)})_{\!}}(\zeta ,1,a,\ell ) \\
&=&\beta (a;s^{(\xi ^{(1)})}(\ell ),t^{(\xi ^{(1)})}(\ell ))\mathcal{N}%
(\zeta ;m^{(\xi ^{(1)})}(\ell ),P^{(\xi ^{(1)})}(\ell )
\end{eqnarray*}

Note that in practice, we only use the Beta model for the unknown detection
probability of the objects of interest. For clutter generators, we use a
fixed detection probability between 0.5 and 1. Values close to 0.5 result in
a large variance on the clutter cardinality and faster reponse to changes in
clutter parameter, while the converse is true for values close to 1.

Analytic computation of the terms (\ref{eq:mclassGLMB4}), (\ref%
{eq:mclassGLMB5}), (\ref{eq:mclassGLMB6}) can be performed separately for
the Gaussian part (which has been given in the previous subsection) and the
Beta part using \cite{MVV10}: 
\begin{equation*}
\beta (a_{+};s_{+},t_{+})=\int \beta (a;s,t)f_{+}^{(\Delta )}(a_{+}|a)da
\end{equation*}%
where 
\begin{eqnarray*}
s_{+} &=&\left( \frac{\mu _{\beta }(1-\mu _{\beta })}{\sigma _{\beta }^{2}}%
-1\right) \mu _{\beta }, \\
t_{+} &=&\left( \frac{\mu _{\beta }(1-\mu _{\beta })}{\sigma _{\beta }^{2}}%
-1\right) \left( 1-\mu _{\beta }\right) . \\
\mu _{\beta } &=&\frac{s}{s+t}\;,\;\;\;\;\;\sigma _{\beta }^{2}=\frac{st}{%
(s+t)^{2}(s+t+1)}
\end{eqnarray*}%
(note that $\beta (\cdot ;s_{+},t_{+})$ has the same mean $\mu _{\beta }$ as 
$\beta (\cdot ;s,t)$ but a larger variance than $\sigma _{\beta }$) and%
\begin{eqnarray*}
(1-a)\beta (a;s,t) &=&\frac{B(s,t+1)}{B(s,t)}\beta (a;s,t+1), \\
a\beta (a;s,t) &=&\frac{B(s+1,t)}{B(s,t)}\beta (a;s+1,t),
\end{eqnarray*}%
where $B(s,t)=\tint\nolimits_{0}^{1}a^{s-1}(1-a)^{t-1}da$.

\section{Numerical Studies}

\subsection{Simulations}

The following simulation scenario is used to test the proposed robust
multi-object filter. The target state vector $[x,y,\dot{x},\dot{y}]^{T}$
consists of cartesian coordinates and the velocities. Objects of interest
move according to a constant velocity model, with zero-mean Gaussian process
noise of covariance%
\begin{equation*}
Q_{f}={v_{f}}^{2}%
\begin{bmatrix}
T^{4}/4 & T^{3}/2 & 0 & 0 \\ 
T^{3}/2 & T^{2} & 0 & 0 \\ 
0 & 0 & T^{4}/4 & T^{3}/2 \\ 
0 & 0 & T^{3}/2 & T^{2}%
\end{bmatrix}%
\end{equation*}%
where $v_{f}=5ms^{-1}$ and $T=1s$. Objects of interest are born from a
labeled multi Bernoulli distribution with four components of 0.03 birth
probability, and birth densities 
\begin{eqnarray*}
&&\mathcal{N}(\cdotp,[0,0,0,0]^{T},P_{\gamma }), \\
&&\mathcal{N}(\cdotp,[400,-600,0,0]^{T},P_{\gamma }), \\
&&\mathcal{N}(\cdotp,[-800,-200,0,0]^{T},P_{\gamma }), \\
&&\mathcal{N}(\cdotp,[-200,800,0,0]^{T},P_{\gamma }),
\end{eqnarray*}%
where $P_{\gamma }=\diag([50,50,50,50]).$ The probability of survival is set
at 0.99.

Objects of interest enter and leave the observation region $%
[-1000,1000]m\times \lbrack -1000,1000]m$ at different times reaching a maximum
of ten targets. The measurements are the object positions
obtained through a sensor located at coordinate $(0,0)$. Measurement noise
is assumed to be distributed Gaussian with zero-mean and covariance $Q_{r}
$ where $v_{r}=3ms^{-1}$. 
\begin{equation*}
Q_{r}={v_{r}}^2%
\begin{bmatrix}
1 & 0 \\ 
0 & 1%
\end{bmatrix}%
\end{equation*}

The detection model parameters for all new born objects of interest are set
at $s= 9$ and $t=1$ resulting in a mean of 0.9 for the detection probability.
At the initial timestep, clutter generators are born from a (labeled)
multi-Bernoulli distribution with 120 components, each with 0.5 birth
probability and uniform birth density. At subsequent timesteps clutter
generators are born from a (labeled) multi-Bernoulli distribution with 30
components, each with 0.5 birth probability and uniform birth density.
Probability of survival and probability of detection of the clutter
generators are both set at 0.9.

Four scenarios corresponding to four different pairings of average
(unknown) clutter rate and detection probability (see Table 1) are studied. 
% Please add the following required packages to your document preamble:

\begin{table}[tbp]
\centering
\begin{tabular}{|c|c|c|}
\hline
\multirow{2}{*}{Scenario ID} & \multirow{2}{*}{Clutter Rate} & %
\multirow{2}{*}{Detection Probability} \\  
&  &  \\ 
\hline
\multirow{2}{*}{1} & \multirow{2}{*}{10} & \multirow{2}{*}{0.97} \\ [2ex]
\hline
%&  &  \\ \hline
\multirow{2}{*}{2} & \multirow{2}{*}{10} & \multirow{2}{*}{0.85} \\ [2ex]
\hline
%&  &  \\ \hline
\multirow{2}{*}{3} & \multirow{2}{*}{70} & \multirow{2}{*}{0.97} \\ [2ex]
\hline
%&  &  \\ \hline
\multirow{2}{*}{4} & \multirow{2}{*}{varying between 25-35} & \multirow{2}{*}{0.95} \\ [2ex]
\hline
\end{tabular}
\vspace{\baselineskip}
\caption{Simulation Parameters unknown to the filter}
\label{my-label}
\end{table}

The Fig. \ref{fig:fig1subfig1} shows the OSPA\cite{SVV08} errors obtained from 100 Monte Carlo
runs (OSPA c = 300, p = 1) for the proposed GLMB filter in comparison with $%
\lambda $-CPHD\cite{MVV10} filter for scenario 1.
Estimated clutter rates and detection probabilites by the two filters are
shown in Fig.  \ref{fig:fig1subfig2}, while estimated tracks for objects of interest taken from
a single run is shown in Fig. \ref{fig:fig1subfig3}. It can be seen that for the given
parameters, the GLMB filter performs far better than the $\lambda $-CPHD 
 in terms of clutter rate, detection probability and track
estimation for objects of interest.

We further investigate the performance of the proposed algorithm by varying
the background parameters in scenarios 2 and 3. The average detection
probability in scenario 2 is lower than that of scenario 1, while the
average clutter rate in scenario 3 is higher than that of scenario 1.  Note from Figure \ref{fig:Scenario1} that 
$\lambda $-CPHD filter begins to fail in scenario 1. The
OSPA errors for 100 Monte Carlo runs, estimates of the clutter
rate and detection probabilities for the more challenging scenarios 2 and 3 are given in Fig.
\ref{fig:Scenario2}, Fig. \ref{fig:Scenario3}  at which $\lambda $-CPHD competely breaks down. On the other hand the proposed GLMB filter is capable of accurately tracking the objects of
interest as well as estimating the unknown clutter and detection parameters. 
The fourth scenario comprises of a wavering  clutter rate with comparison to the 
$\lambda $-CPHD\ filter.  Perceiving Fig. \ref{fig:Scenario4} it is clear that that the proposed filter outperforms $\lambda $-CPHD and is quite adept at converging swiftly to the shifted clutter rate.

\begin{figure}[ht]
\centering
\subfigure[OSPA Error]{   
    \label{fig:fig1subfig1}     
    \includegraphics[trim=20 0 30 0,clip,width=0.48\textwidth]{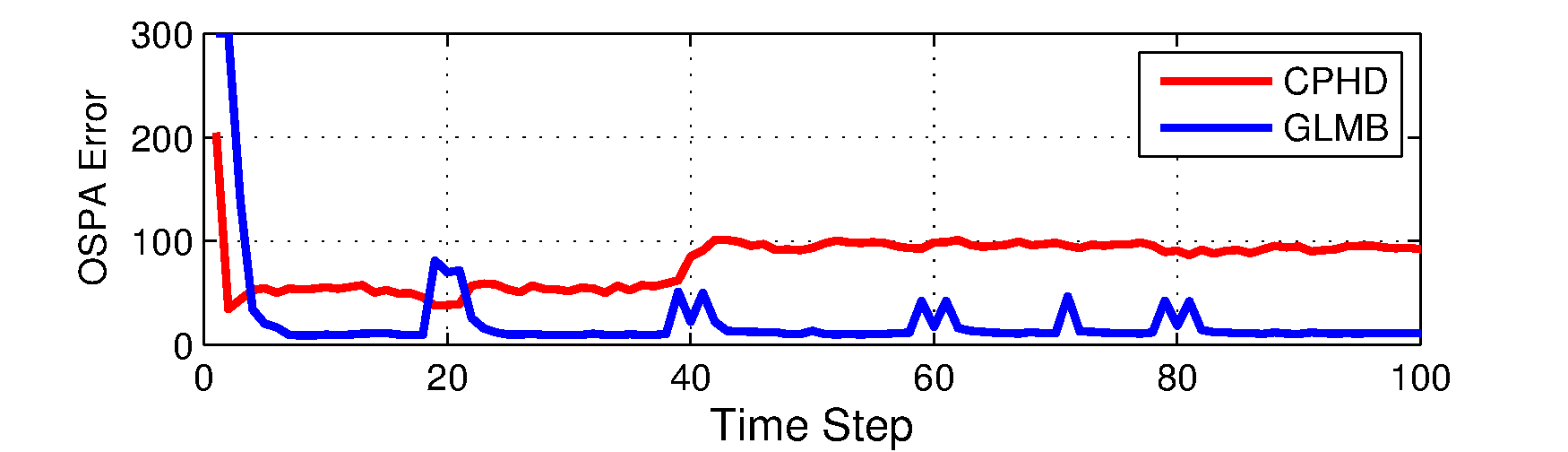}	
}
\subfigure[estimated clutter and detection parameters]{    
    \label{fig:fig1subfig2}
    \includegraphics[trim=30 0 20 0,clip,width=0.5\textwidth]{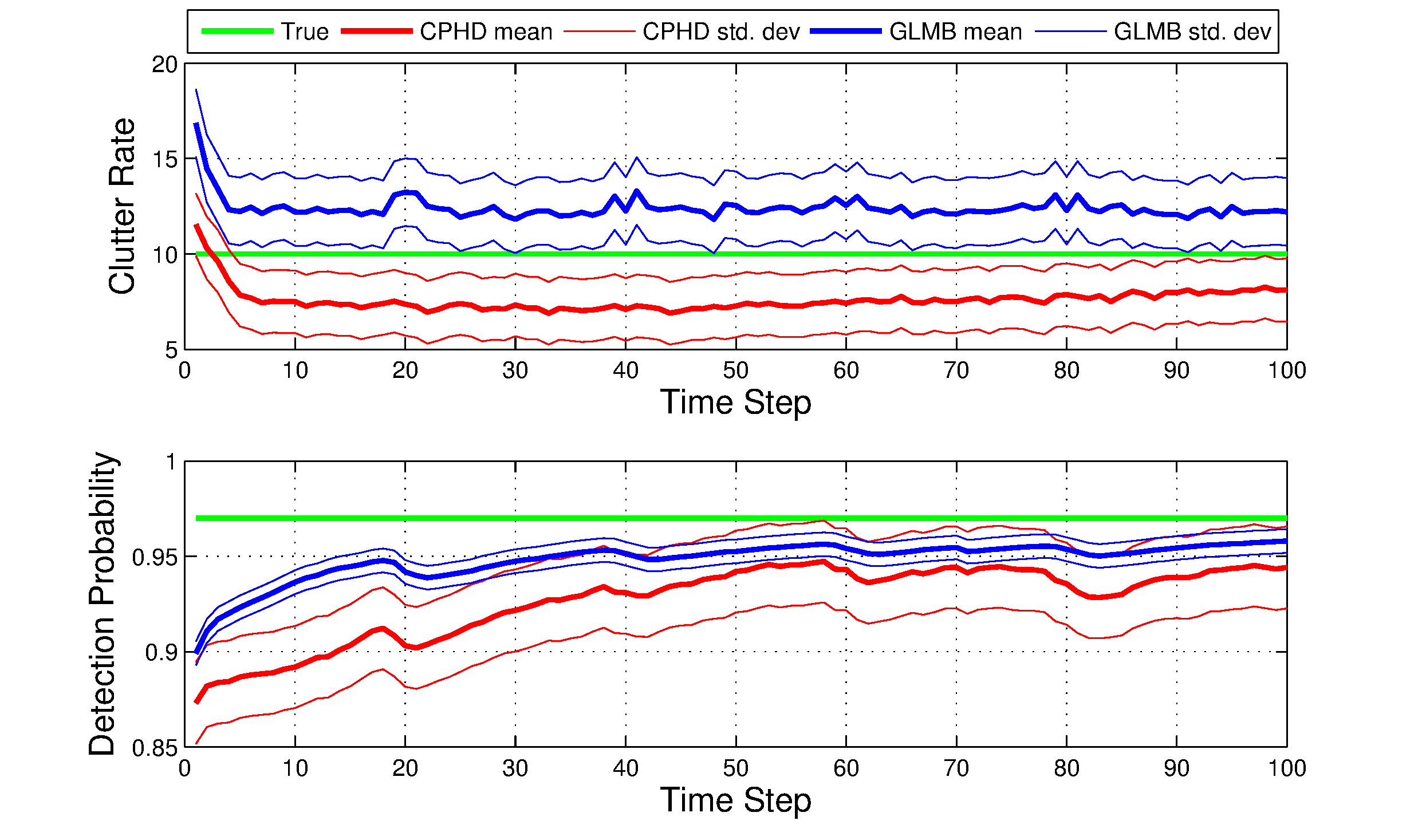}	
}
\subfigure[Track Estimations]{    
    \label{fig:fig1subfig3}
    \includegraphics[trim=10 0 10 0,clip,width=0.5\textwidth]{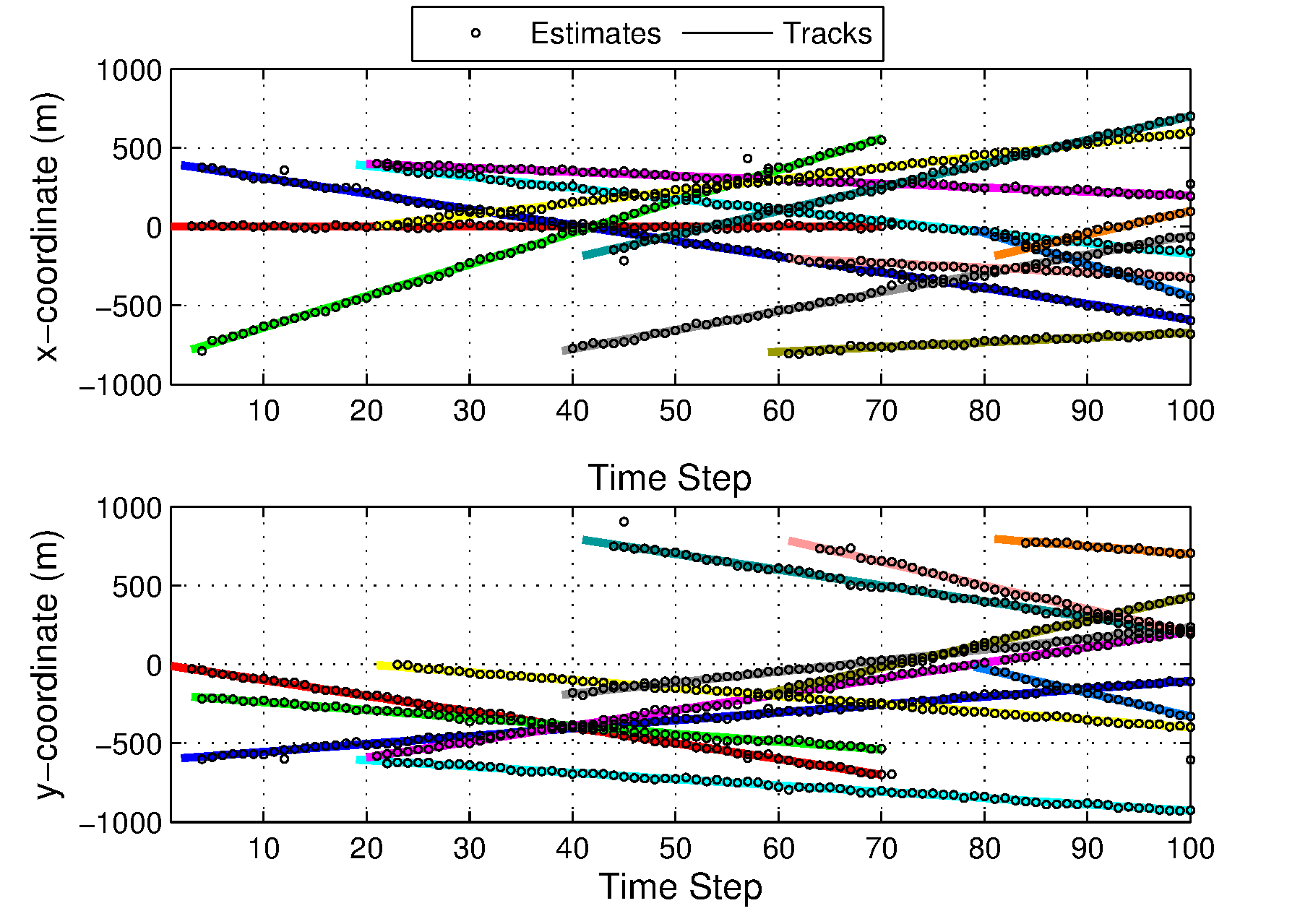}	
}
\caption[Scenario 1]{Scenario 1. The bumps in the OSPA error for GLMB in \ref{fig:fig1subfig1} appear close to time steps where a new birth or a death of an object of interest occurs. }
\label{fig:Scenario1}
\end{figure}
%----------------------------------------------------------------------------
\begin{figure}[ht]
\centering
\subfigure[OSPA Error]{   
    \label{fig:fig2subfig1}     
    \includegraphics[trim=15 0 30 0,clip,width=0.48\textwidth]{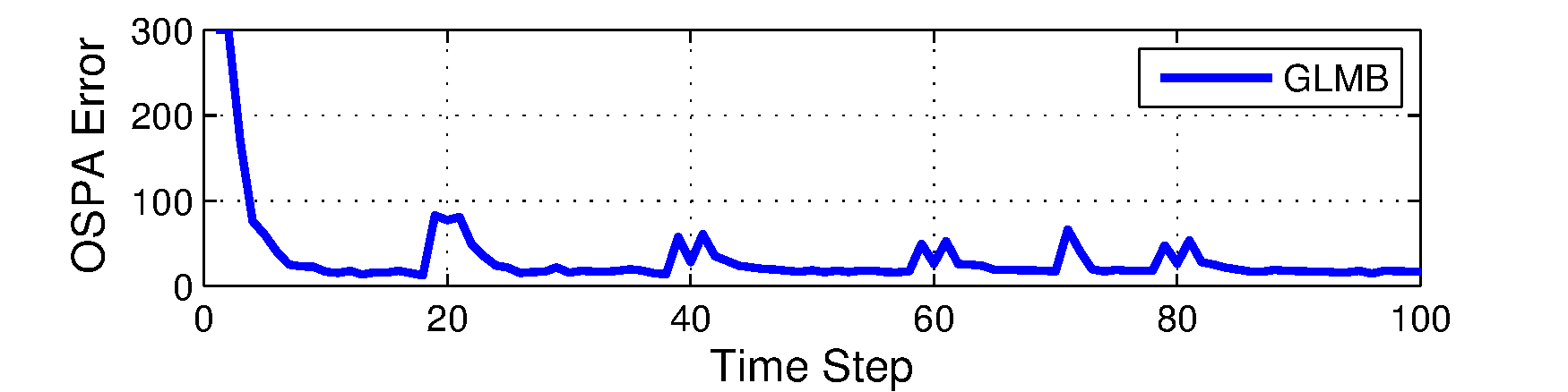}		
}
\subfigure[estimated clutter and detection parameters]{    
    \label{fig:fig2subfig2}
    \includegraphics[trim=10 0 20 0,clip,width=0.5\textwidth]{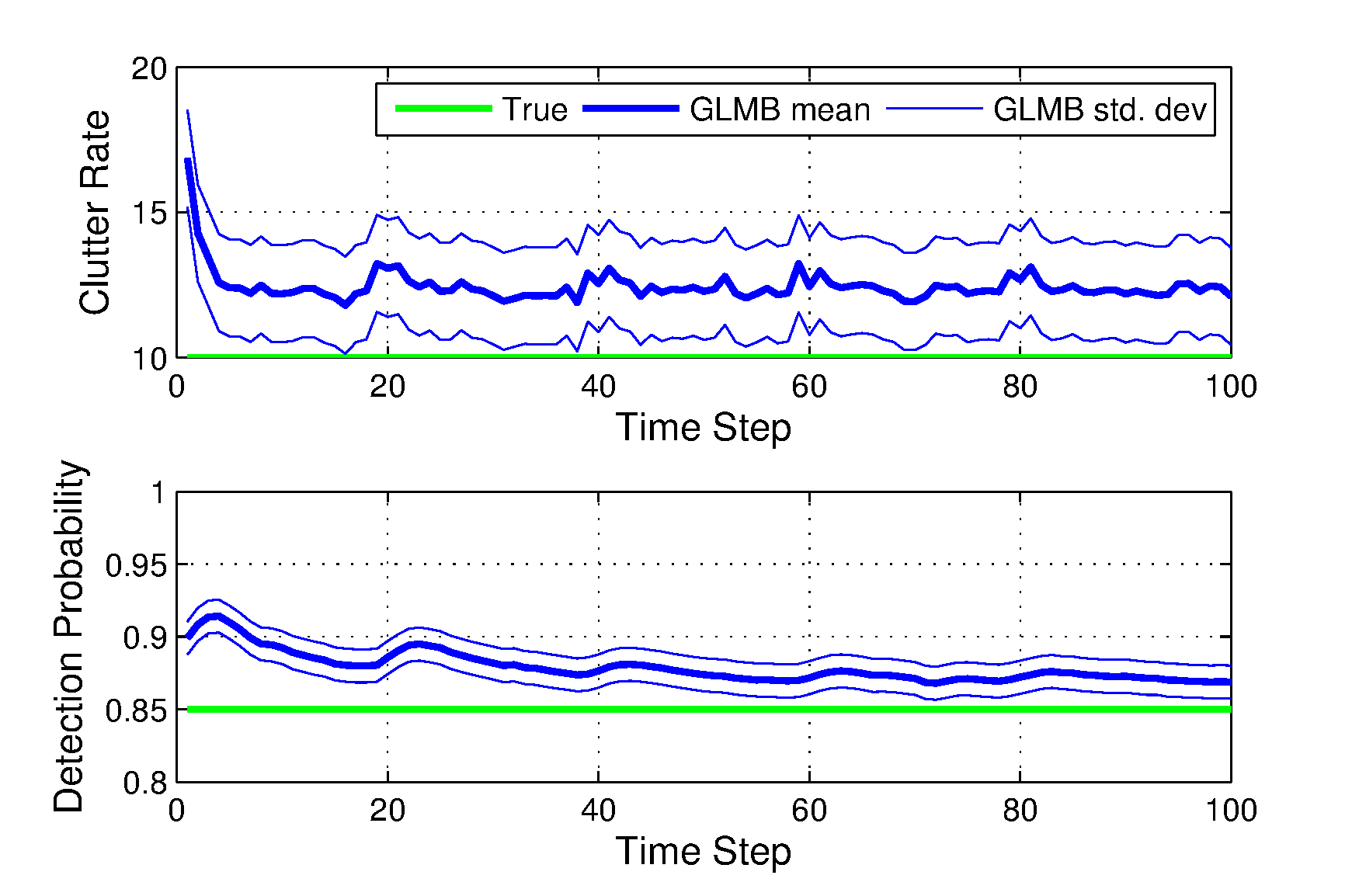}	
}
\caption[Scenario 2]{Scenario 2. Comparison with $\lambda $-CPHD not included as it completely fails at this juncture.}
\label{fig:Scenario2}
\end{figure}
%----------------------------------------------------------------------------
\begin{figure}[ht]
\centering
\subfigure[OSPA Error]{   
    \label{fig:fig3subfig1}     
    \includegraphics[trim=15 0 30 0,clip,width=0.48\textwidth]{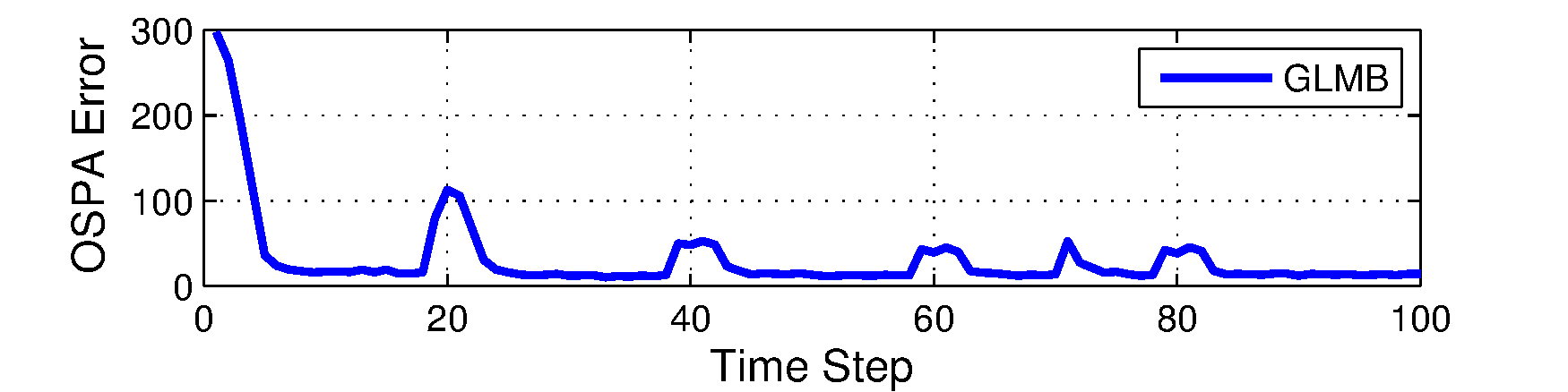}		
}
\subfigure[estimated clutter and detection parameters]{    
    \label{fig:fig3subfig2}
    \includegraphics[trim=10 0 20 0,clip,width=0.5\textwidth]{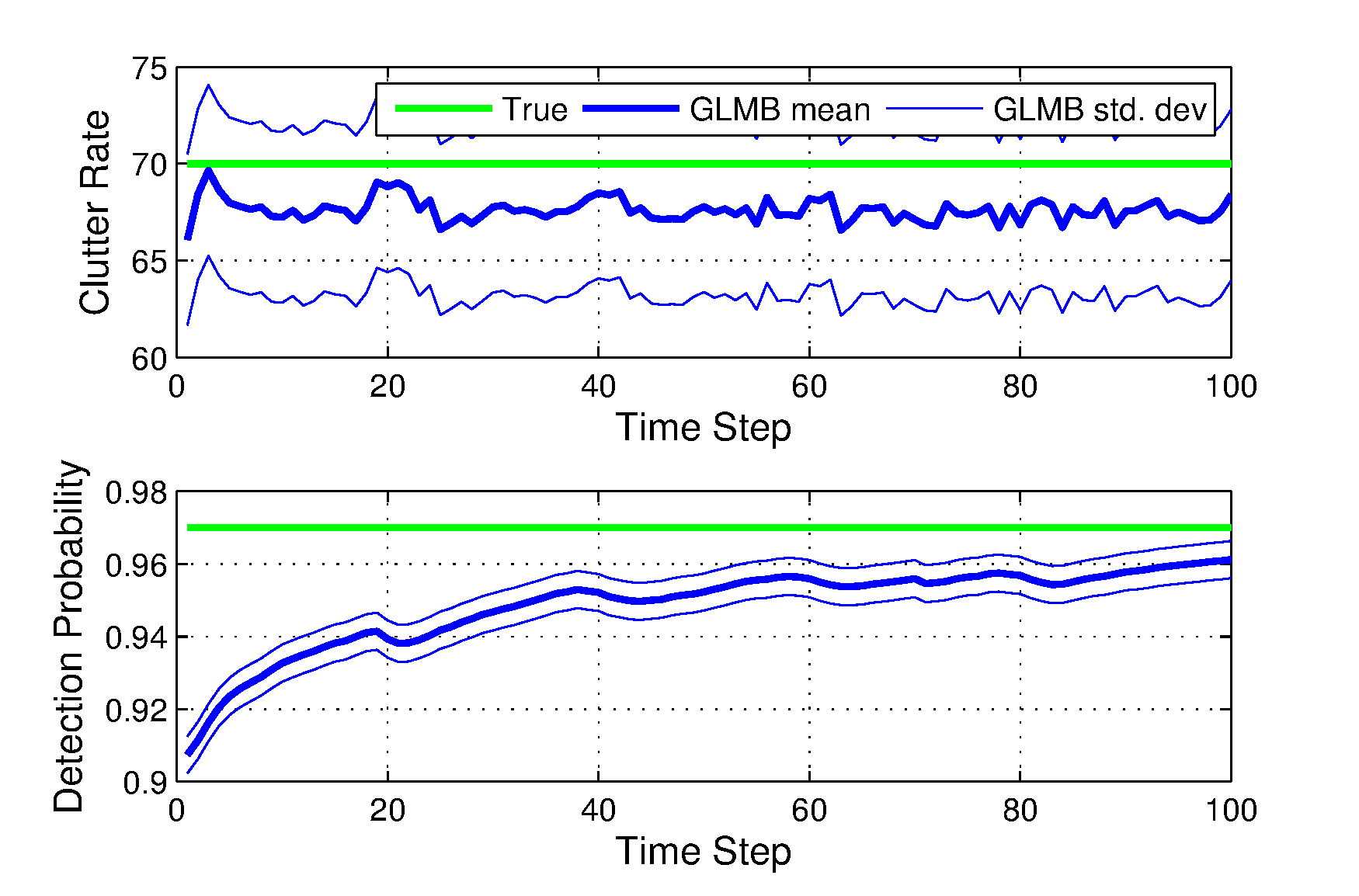}	
}
\caption[Scenario 3]{Scenario 3. Comparison with $\lambda $-CPHD not included as it completely fails at this juncture.}
\label{fig:Scenario3}
\end{figure}
%----------------------------------------------------------------------------
\begin{figure}[ht]
\centering
\subfigure[OSPA Error]{   
    \label{fig:fig4subfig1}     
    \includegraphics[trim=20 0 30 0,clip,width=0.48\textwidth]{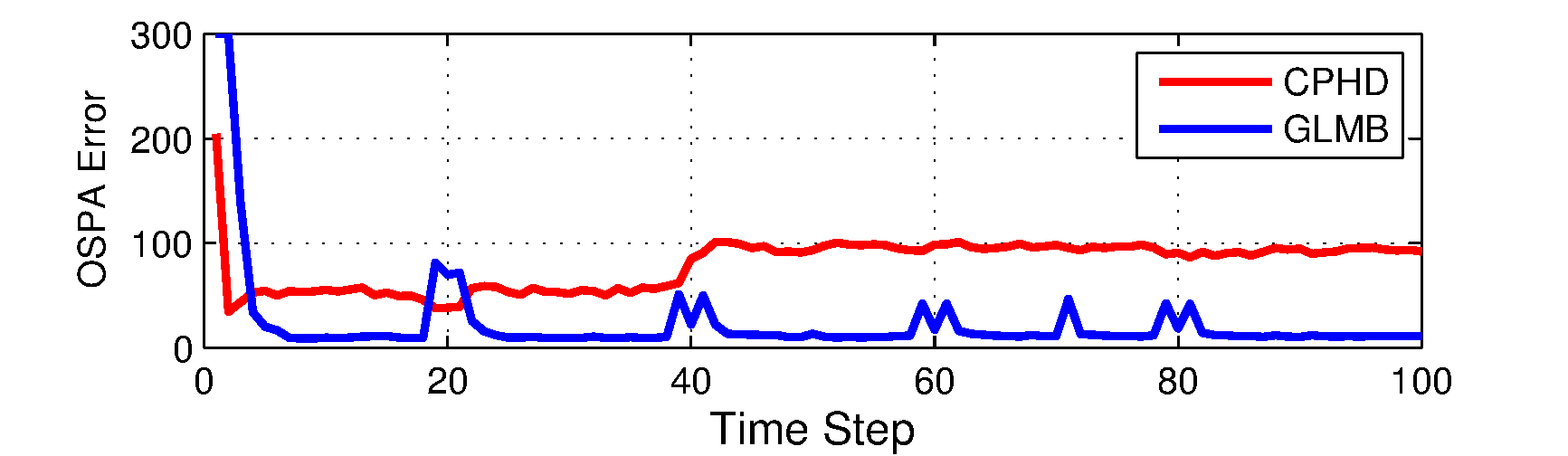}		
}
\subfigure[estimated clutter and detection parameters]{    
    \label{fig:fig4subfig2}
    \includegraphics[trim=20 10 20 0,clip,width=0.5\textwidth]{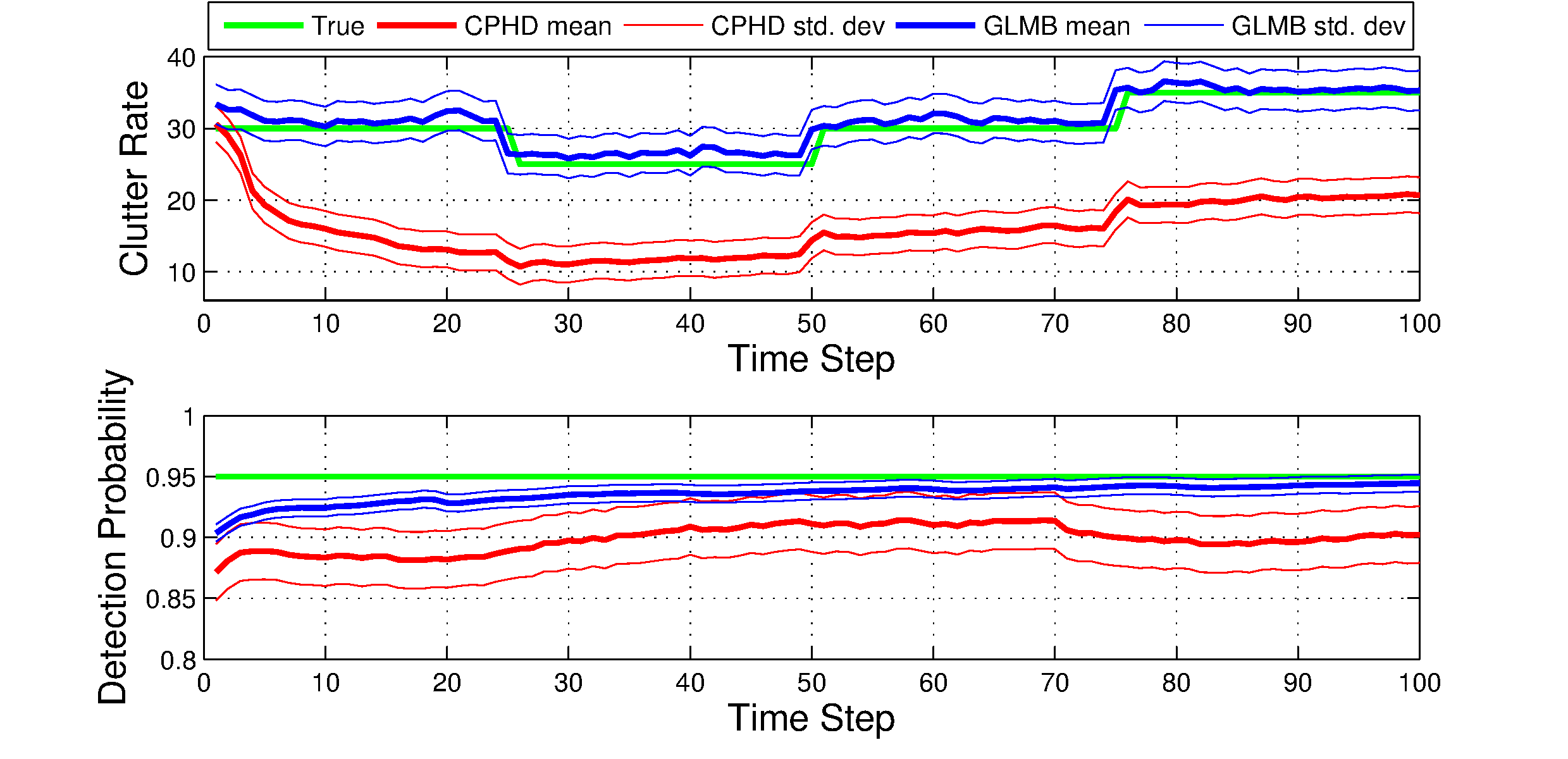}	
}
\caption[Scenario 4]{Scenario 4.}
\label{fig:Scenario4}
\end{figure}
\vspace{-4mm}
\subsection{Video Data}

The proposed filter for jointly unknown clutter rate and detection probability is tested on two image sequences: S2.L1 from PETS2009 datasets \cite{PETS} and KITTI-17 from KITTI datasets \cite{Geiger2012CVPR}. The detections are obtained using the detection algorithm in \cite{DOLLAR2014}.

\textit{Dataset 1:} The state vector consists of the target $x,y$ positions
and the velocities in each direction. The process noise is assumed to be
distributed from a zero-mean Gaussian with covariance $Q_{f}$ where $v_{f}=2$
pixels. Actual targets are assumed to be born from a labeled multi Bernoulli
distribution with seven components of 0.03 birth probability, and Gaussian
birth densities, 
\vspace{-2mm}
\begin{eqnarray*}
&&\!\!\!\!\!\!\mathcal{N}(\cdotp,[260;260;0;0]^{T},P_{\gamma }), 
\mathcal{N}(\cdotp,[740;370;0;0]^{T},P_{\gamma }), \\
&&\!\!\!\!\!\!\mathcal{N}(\cdotp,[10;200;0;0]^{T},P_{\gamma }), 
\mathcal{N}(\cdotp,[280;80;0;0]^{T},P_{\gamma }), \\
&&\!\!\!\!\!\!\mathcal{N}(\cdotp,[750;130;0;0]^{T},P_{\gamma }), 
\mathcal{N}(\cdotp,[650;270;0;0]^{T},P_{\gamma }), \\
&&\!\!\!\!\!\!\mathcal{N}(\cdotp,[500;200;0;0]^{T},P_{\gamma }), \text{ where } P_{\gamma }=\diag([10;10;3;3]).
\end{eqnarray*}%

The observation space is a $756\times 560$ pixel image frame. Actual target measurements contain the $x,y$
positions with measurement noise assumed to be distributed zero-mean Gaussian with covariance $Q_{r}$ with $v_{r}=3$ pixels. Clutter targets
are born from a multi Bernoulli distribution with 30 birth components in the
firstmost time step and 12 components in subsequent time steps each with 0.5
birth probability and uniform birth density. Probability of survival and
detection for clutter targets are both set at 0.9.

The Fig. \ref{fig:videotrack} shows tracking results at frames 20, 40 and 100
respectively. True and estimated clutter cardinality statistics are given in
Fig. \ref{fig:video1clrate}. From these figures it can be observed that the
filter successfully outputs object tracks and that the estimated clutter
rate nearly overlays the true clutter rate.

\textit{Dataset 2:} The detection results from this dataset (KITTI17) comprises of a higher number of false measurements than the PETS2009 S2.L1 dataset. The state vector consists of
the target $x,y$ positions and the velocities in each direction. The process
noise is assumed to be distributed from a zero-mean Gaussian with covariance 
$Q_{f}$ where $v_{f}=2$ pixels. Actual targets are assumed to be born from a
labeled multi Bernoulli distribution with three components of 0.05 birth
probability, and birth densities 
\begin{eqnarray*}
&&\!\!\!\!\!\!\mathcal{N}(\cdotp,[550;200;0;0]^{T},P_{\gamma }), 
\mathcal{N}(\cdotp,[1200;250;0;0]^{T},P_{\gamma }), \\
&&\!\!\!\!\!\!\mathcal{N}(\cdotp,[500;250;0;0]^{T},P_{\gamma }) \text{where } P_{\gamma }=\diag([10;10;1;1]).
\end{eqnarray*}
\vspace{-2mm}
State transition function for actual targets are based on constant velocity
model with a 0.99 probability of survival. Process noise is assumed to be
distributed from a zero-mean Gaussian with covariance $Q_{f}$ with $v_{f}=2$
pixels per frame. The observation space is a 1220$\times $350 pixel image
frame. Actual target measurements contain the $x,y$ positions with
measurement noise assumed to be distributed zero-mean Gaussian with
covariance $Q_{r}$ with $v_{r}=3$ pixels. Clutter target are born from 60
identical and uniformly distributed birth regions in the firstmost time step
and 20 birth regions in the subsequent time steps each with a birth
probability of 0.5. Probability of survival and detection for clutter
targets are both set at 0.9.

\begin{figure}[H]
\centering
\subfigure{    
    \label{fig:subfig1}
    \includegraphics[scale=0.4]{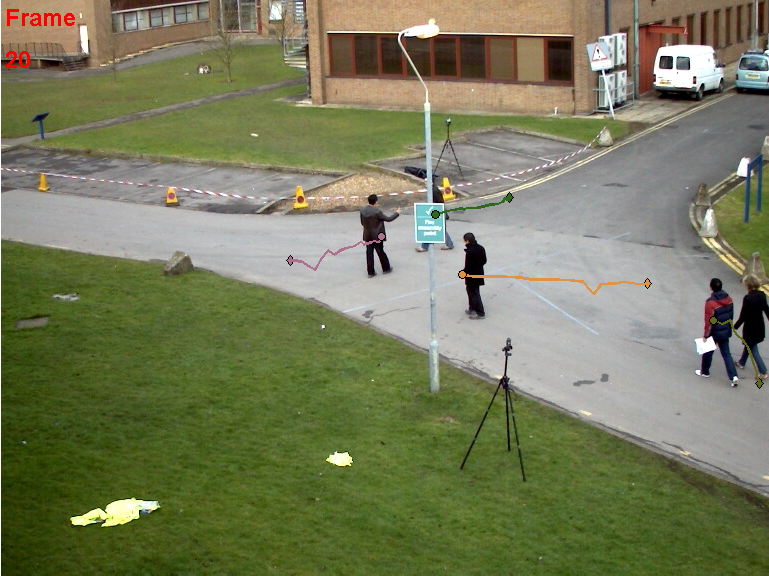}
} 
\subfigure{    
    \label{fig:subfig2}
    \includegraphics[scale=0.4]{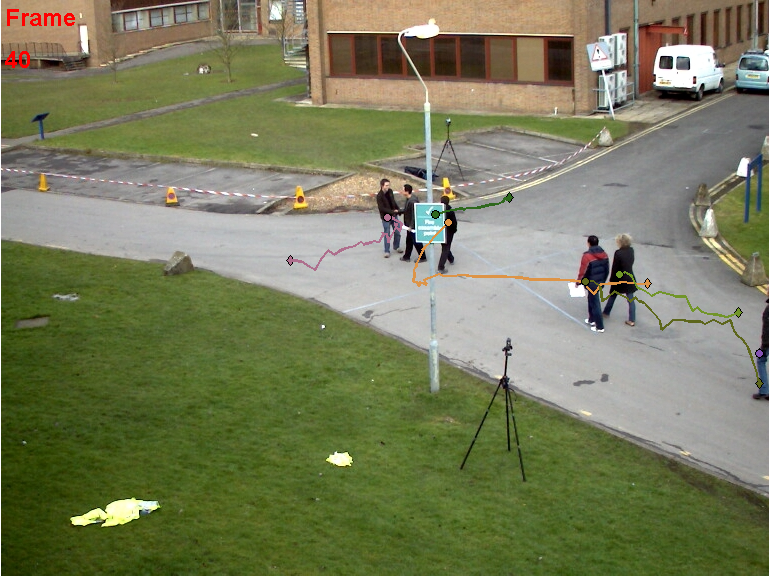}
} 
\subfigure{    
    \label{fig:subfig4}
    \includegraphics[scale=0.4]{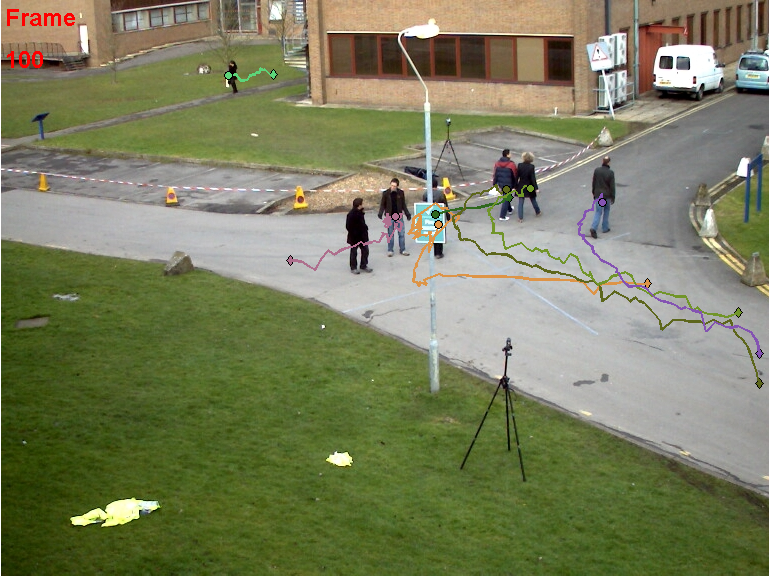}
}
\caption[Optional caption for list of figures]{Tracking results for frames
20, 40, 100 in dataset 1. }
\label{fig:videotrack}
\end{figure}

\begin{figure}[h]
\centering
\includegraphics[scale=0.55, trim= 35 0 20 20]{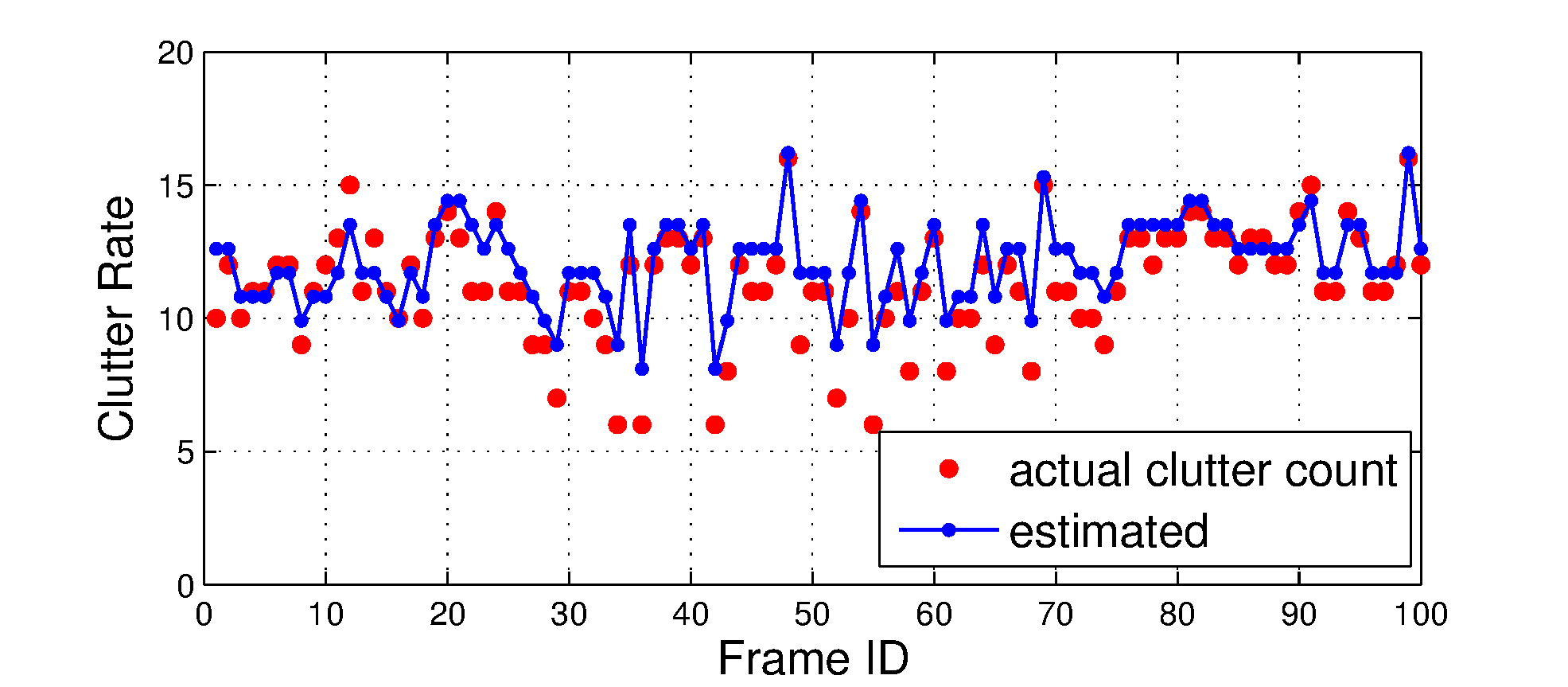}
\caption{Estimated clutter rate for dataset 1.}
\label{fig:video1clrate}
\end{figure}

The frames on the left of Fig. \ref{fig:videotrack2} shows tracking
results for frames 15, 35 and 50 obtained from the standard GLMB filter for
the guessed clutter rate of 60. The frames on the right of Fig. \ref%
{fig:videotrack2} shows tracking results for the same frames using the
proposed filter. When comparing each frame pair it can be noted that some
objects that were missed by the standard algorithm with the guessed clutter
rate has been picked up by the proposed algorithm. Comparison between true
and estimated clutter cardinality statistics given in Fig. \ref%
{fig:video2clrate} demonstrates that the estimated clutter rate is close
enough to the true clutter rate to achieve a similar performance if fed back to the standard algorithm\cite{VVP_GLMB13}.

\begin{figure*}[h]
\centering
\subfigure{
        \label{fig:subfig1}
    \includegraphics[scale=0.28, trim= 20 0 20 0]{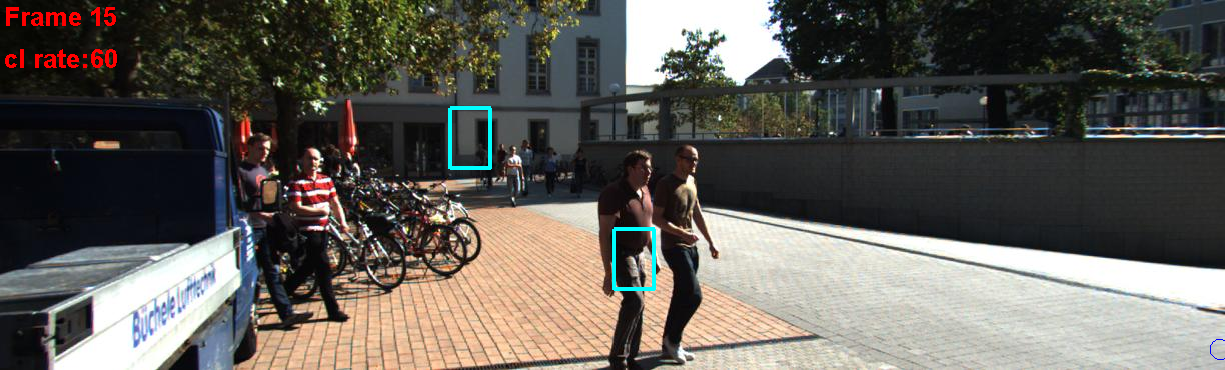}}%
\subfigure{
        \label{fig:subfig1}
    \includegraphics[scale=0.28, trim= 20 0 20 0]{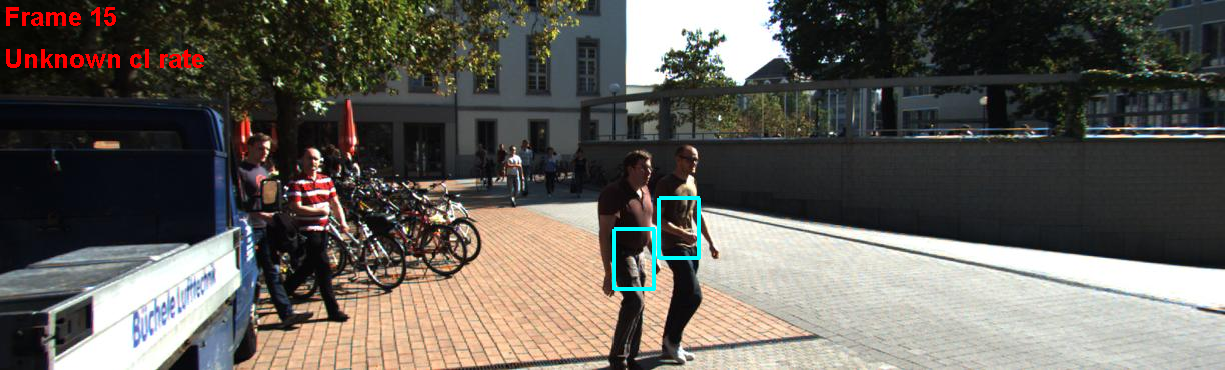}
} 
\subfigure{
        \label{fig:subfig1}
    \includegraphics[scale=0.28, trim= 20 0 20 0]{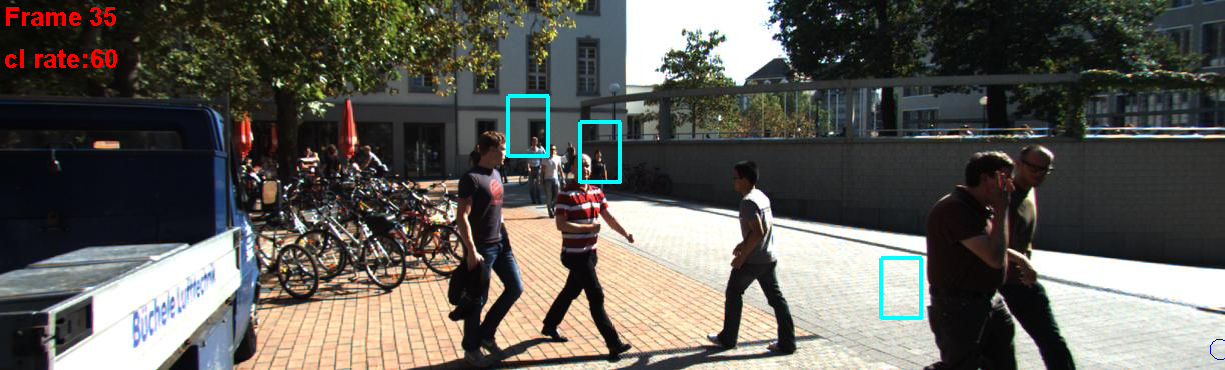}}%
\subfigure{
        \label{fig:subfig1}
    \includegraphics[scale=0.28, trim= 20 0 20 0]{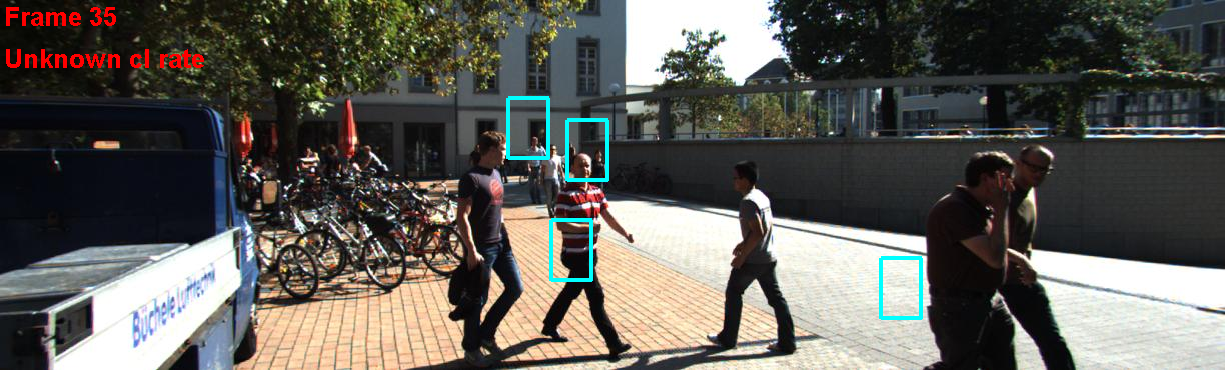}
} 
\subfigure{
        \label{fig:subfig5}
    \includegraphics[scale=0.28, trim= 20 0 20 0]{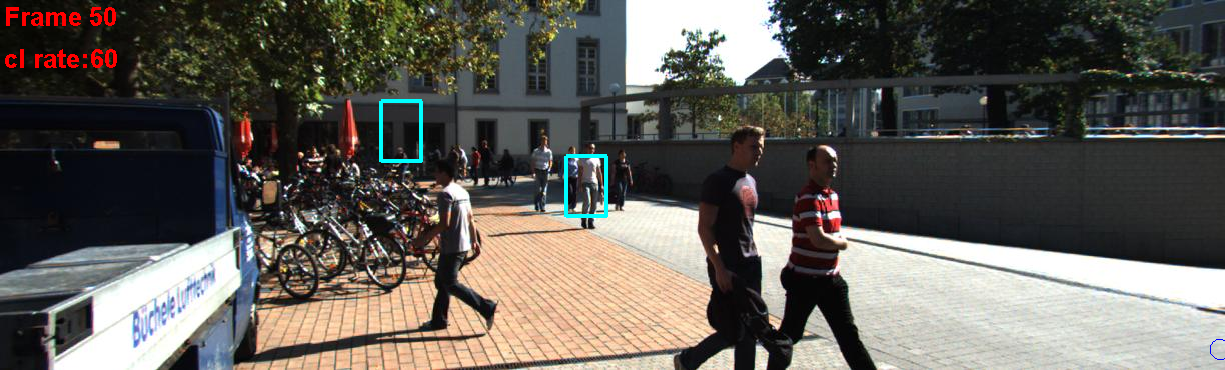}}%
\subfigure{
        \label{fig:subfig6}
    \includegraphics[scale=0.28, trim= 20 0 20 0]{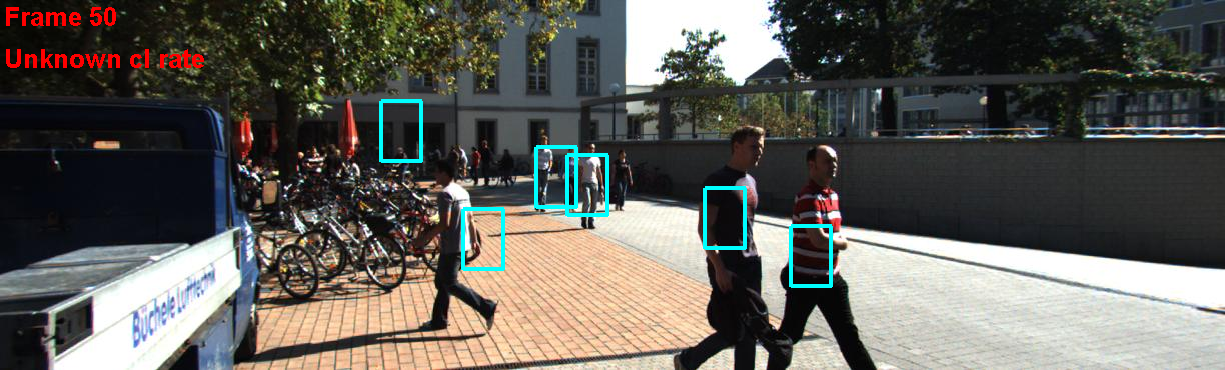}
} 
\caption[Examples of tracking results for frames 15,35,50 with guessed
clutter rate 60 (on left) and the proposed filter (on right)]{Tracking results for frames 15,35,50 with guessed clutter rate 60 (left) and the proposed filter (right) for dataset 2.}
\label{fig:videotrack2}
\end{figure*}

\begin{figure}[tbp]
\centering
\includegraphics[scale=0.6, trim= 35 0 20 20]{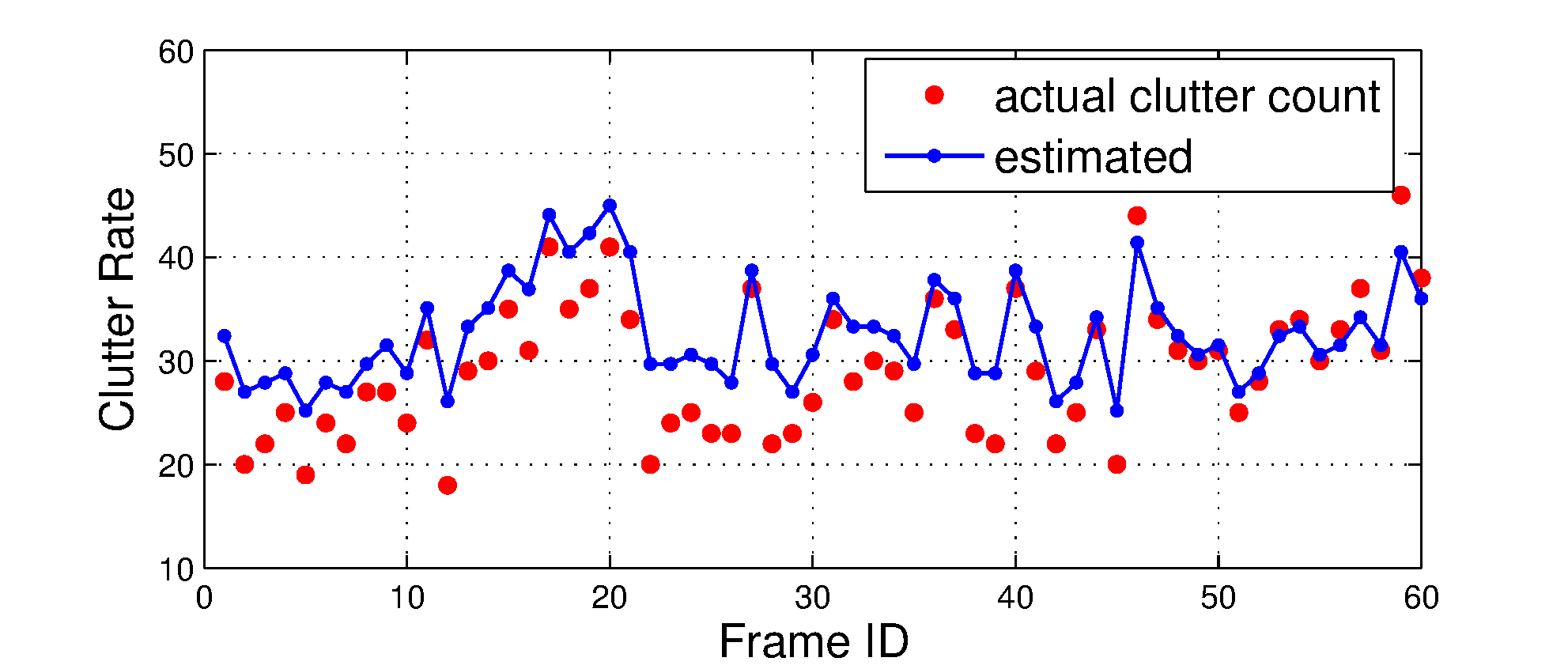}
\caption{Estimated clutter rate for dataset 2.}
\label{fig:video2clrate}
\end{figure}
\vspace{-2mm}
\section{Conclusion}

In this paper we have proposed a tractable algorithm for tracking multiple
objects in environments with unknown model parameters, such as clutter rate
and detection probability, based on the GLMB filter. Specifically, objects
of interest and clutter objects are treated as non-interacting classes of
objects, and a GLMB recursion for propagating the joint filtering density of
these classes are derived, along with an efficient implementation.
Simulations and applications to video data demonstrate that the proposed
filter has good tracking performance in the presence of unknown background
and outperforms the $\lambda $-CPHD filter. Moreover, it can also estimate
the clutter rate and detection probability parameters while tracking.

% trigger a \newpage just before the given reference
% number - used to balance the columns on the last page
% adjust value as needed - may need to be readjusted if
% the document is modified later
%\IEEEtriggeratref{41} % The "triggered" command can be changed if desired:
%%\IEEEtriggercmd{\enlargethispage{-5in}}
%
%% references section
%\bibliographystyle{IEEEtran}
%\bibliography{IEEEabrv,D:/Research/Tex/References/GLMB_ref}

\begin{thebibliography}{99}
\bibitem{BSF88} Y.~Bar-Shalom and T.~E. Fortmann, \emph{Tracking and Data
Association}.\hskip
1em plus 0.5em minus 0.4em\relax San Diego: Academic Press, 1988.

\bibitem{BP99} S.~S. Blackman and R.~Popoli, \emph{Design and Analysis of
Modern Tracking Systems}, ser. Artech House radar library.\hskip 1em plus
0.5em minus 0.4em\relax Artech House, 1999.

\bibitem{Mah07} R.~Mahler, \emph{Statistical Multisource-Multitarget
Information Fusion}.\hskip
1em plus 0.5em minus 0.4em\relax Artech House, 2007.

\bibitem{mahler2014advances} R.~Mahler, \emph{Advances in Statistical
Multisource-Multitarget Information Fusion}, \hskip1em plus 0.5em minus 0.4em%
\relax Artech House, 2014.

\bibitem{Geiger2012CVPR} A. Geiger, P. Lenz, and R. Urtasun,
\textquotedblleft Are we ready for Autonomous Driving? The KITTI Vision
Benchmark Suite,\textquotedblright\ \emph{Proc. 19th \ Conference on
Computer Vision and Pattern Recognition}, 2012. 
%Available: http://www.cvlibs.net/datasets/kitti/eval_tracking.php

\bibitem{Elgammaletal02} A. Elgammal, R. Duraiswami, D. Harwood, and L.S.
Davis, \textquotedblleft Background and foreground modeling using
nonparametric kernel density estimation for visual
surveillance,\textquotedblright\ Proceedings of the IEEE, vol. 90, no. 7,
pp. 1151-1163, 2002.

\bibitem{VoGLMB13} B.-T. Vo and B.-N. Vo, ``Labeled random finite sets and
multi-object conjugate priors,'' \emph{IEEE Trans. Signal Process.},
vol.~61, no.~13, pp. 3460--3475, 2013.

\bibitem{VVP_GLMB13} B.-N. Vo, B.-T. Vo, and D.~Phung, \textquotedblleft
Labeled random finite sets and the Bayes multi-target tracking
filter,\textquotedblright\ \emph{{IEEE} Trans. Signal Process.}, vol.~62,
no.~24, pp. 6554--6567, 2014.

\bibitem{MVV10} R.~Mahler, B.-T. Vo, and B.-N. Vo \textquotedblleft {C}{P}{H}%
{D} filtering with unknown clutter rate and detection
profile,\textquotedblright\ \emph{IEEE Trans. Signal Processing}, vol.~59,
no.~8, pp. 3497-3513, 2011.

\bibitem{PVV16} Y. Punchihewa, B.-N. Vo, and B.-T. Vo, \textquotedblleft A
Generalized Labeled Multi-Bernoulli Filter for Maneuvering
Targets,\textquotedblright\ \emph{Proc. 19th \ Int. Conf. Inf. Fusion}, pp.
980-986. July 2016. Available: https://arxiv.org/pdf/1603.04565.pdf

\bibitem{Cozman} F. G. Cozman, "A brief introduction to the theory of sets
of probability measures", Tech Rep. is CMU-RI-TR 97-24, Robotics Insititute,
Carnegie Mellon Universities, 1999.

\bibitem{Noacketal08} B. Noack, V. Klumpp, D. Brunn and U. Hanebeck,
"Nonlinear Bayesian estimation with convex sets of probability densities",
FUSION\ 2008, Cologne.

\bibitem{Wally} P. Walley, Statistical Reasonning with Imprecise
Probabilities, Monographs on Statistics and Applied Probability, Vol. 42,
London Chapman and Hall, 1991.

\bibitem{Basu} S. Basu, "Ranges of posterior probabilities over a
distribution band", Journal of Statistical Planning and Inference, Vol. 44,
pp. 149-166, 1995.

\bibitem{Berger} J. Berger, D. Insua, F. Ruggeri, "Robust Bayesian
Analysis", Lecture notes in Statistics, Vol. 152, pp. 1-32, Springer, 2000.

\bibitem{Berliner} M. Berliner, "Hierachical Bayesian time series models",
in Maximum Entropy and Bayesian Methods, K. Hauson and R. Silver Eds,
Kluwer, 1996, pp. 15-22.

\bibitem{Singhetal} S. Singh, N. Whiteley, and S. Godsil, "An approximate
likelihood method for estimating the static parameters in multi-target
tracking models," Tech. Rep, Dept. of Eng. University of Cambridge,
CUED/F-INFENG/TR.606.

\bibitem{Mahler10a} R. Mahler, and A. El-Fallah, \textquotedblleft CPHD
filtering with unknown probability of detection,\textquotedblright\ in I.
Kadar (ed.), Sign. Proc., Sensor Fusion, and Targ. Recogn. XIX, SPIE Proc.
Vol. 7697, 2010.

\bibitem{Mahler10b} R. Mahler, and A. El-Fallah, \textquotedblleft CPHD and
PHD filters for unknown backgrounds, III: Tractable multitarget filtering in
dynamic clutter,\textquotedblright\ in O. Drummond (ed.), Sign. and Data
Proc. of Small Targets 2010, SPIE Proc. Vol. 7698, 2010.

\bibitem{VVHM13} B.-T. Vo, B.-N. Vo, R. Hoseinnezhad, and R. Mahler,
\textquotedblleft Robust Multi-Bernoulli filtering,\textquotedblright\ \emph{%
IEEE Journal on Selected Topics in Signal Processing}, vol. 7, no. 3, pp.
399-409, 2013.

\bibitem{MahlerVo14} R. Mahler,and B.-T. Vo, "An improved CPHD filter for
unknown clutter backgrounds," I\emph{n Proc. Int. Soc. Optics and Photonics,
SPIE Defense+Security}, pp. 90910B-90910B, June 2014.

\bibitem{Corea16} J. Correa, and M. Adams, "Estimating detection statistics
within a Bayes-closed multi-object filter," \emph{Proc.} \emph{19th Ann.
Conf. Inf. Fusion}, pp. 811-819, Heidleberg, Germany, 2016.

\bibitem{Rez_TMI15} S.~Rezatofighi, S.~Gould, B.~-T. Vo, B.-N.~Vo, K.~Mele,
and R.~Hartley, \textquotedblleft Multi-target tracking with time-varying
clutter rate and detection profile: Application to time-lapse cell
microscopy sequences,\textquotedblright\ \emph{IEEE Trans. Med. Imag.},
vol.~34, no.~6, pp. 1336--1348, 2015. 

\bibitem{VVPS10} B.-N. Vo, B.~T. Vo, N.-T. Pham, and D.~Suter,
\textquotedblleft Joint detection and estimation of multiple objects from
image observations,\textquotedblright\ \emph{IEEE Trans. Signal Process.},
vol.~58, no.~10, pp. 5129--5141, 2010.

\bibitem{VVH_GLMB17} B.-N. Vo, B.-T. Vo, and H.~Hoang, \textquotedblleft An
Efficient Implementation of the Generalized Labeled Multi-Bernoulli
Filter,\textquotedblright\ \emph{{IEEE} Trans. Signal Process.}, vol.~65,
no.~8, pp. 1975--1987, 2017

\bibitem{LiVSMM00} X.~R. Li, \textquotedblleft Engineer's guide to
variable-structure multiple-model estimation for
tracking,\textquotedblright\ Chapter 10, in \emph{Multitarget-Multisensor
Tracking: Applications and Advances}, Volume III, Ed. Y. Bar-Shalom and W.
D. Blair, pp. 449--567, Aetech House, 2000.

\bibitem{LiJilkovMM05} X.~R. Li and V.~P. Jilkov, \textquotedblleft A survey
of maneuvering target tracking, Part V: Multiple-Model
methods,\textquotedblright\ \emph{IEEE Trans. Aerospace \& Electronic Systems%
}, vol. 41, no. 4, pp. 1255--1321, 2005.

\bibitem{Ristic04} B. Ristic, S. Arulampalam, and N. J. Gordon, \emph{Beyond
the Kalman Filter: Particle Filters for Tracking Applications}. Artech
House, 2004.

\bibitem{RezaFusion16} M Jiang, W Yi, R Hoseinnezhad, L Kong,
\textquotedblleft Adaptive Vo-Vo filter for maneuvering targets with
time-varying dynamics,\textquotedblright\ \emph{Proc. 19th \ Int. Conf. Inf.
Fusion}, pp. 666-672. July 2016.

\bibitem{MahlerPHD} R.~Mahler, \textquotedblleft Multitarget {Bayes}
filtering via first-order multitarget moments,\textquotedblright\ \emph{IEEE
Trans. Aerosp. Electron. Syst.}, vol.~39, no.~4, pp. 1152--1178, 2003.

\bibitem{Murty68} K.~G. Murty, ``An algorithm for ranking all the
assignments in order of increasing cost,'' \emph{Operations Research},
vol.~16, no.~3, pp. 682--687, 1968.

\bibitem{Cassella_Gibbs92} G.~Casella and E.~I. George, \textquotedblleft
Explaining the Gibbs sampler,\textquotedblright\ \emph{The American
Statistician}, vol.~46, no.~3, pp. 167--174, 1992.

\bibitem{SVV08} D.~Schumacher, B.-T. Vo, and B.-N. Vo, \textquotedblleft A
consistent metric for performance evaluation of multi-object
filters,\textquotedblright\ \emph{IEEE Trans. Signal Process.}, vol.~56,
no.~8, pp. 3447--3457, 2008.

\bibitem{PETS} J. Ferryman and A. Shahrokni, \textquotedblleft Pets2009:
Dataset and challenge, \textquotedblright\ \emph{in Proc. IEEE 12th Int.
Performance Eval. Tracking Surveillance}, Dec. 2009.

\bibitem{DOLLAR2014} P. Dollar, R. Appel, S. Belongie, and P. Perona,
\textquotedblleft Fast feature pyramids for object detection,
\textquotedblright\ \emph{IEEE Trans. on Pattern Analysis and Machine
Learning}, vol.~ 36, no.~(8), pp. 1532-1545, 2014.
\end{thebibliography}

% Generated by IEEEtran.bst, version: 1.13 (2008/09/30)
\providecommand{\url}[1]{#1} \csname url@samestyle\endcsname%
\providecommand{\newblock}{\relax} \providecommand{\bibinfo}[2]{#2} %
\providecommand{\BIBentrySTDinterwordspacing}{\spaceskip=0pt\relax} %
\providecommand{\BIBentryALTinterwordstretchfactor}{4} 
\providecommand{\BIBentryALTinterwordspacing}{\spaceskip=\fontdimen2\font plus
\BIBentryALTinterwordstretchfactor\fontdimen3\font minus
  \fontdimen4\font\relax} 
\providecommand{\BIBforeignlanguage}[2]{{\expandafter\ifx\csname l@#1\endcsname\relax
\typeout{** WARNING: IEEEtran.bst: No hyphenation pattern has been}\typeout{** loaded for the language `#1'. Using the pattern for}\typeout{** the default language instead.}\else
\language=\csname l@#1\endcsname
\fi
#2}} \providecommand{\BIBdecl}{\relax} \BIBdecl
\vspace{-4mm}

\end{document}